\documentclass[11pt]{amsart}
\usepackage[margin=1in]{geometry}
\usepackage{amsfonts, float}
\usepackage{amssymb}
\usepackage{amsmath}
\usepackage{bbm}

\pagestyle{myheadings}
\usepackage{color}
\usepackage{cite}
\usepackage[pdftex]{hyperref}
\setlength{\parskip}{10pt}

\numberwithin{equation}{section}
\newcommand{\eq}[1]{(\ref{#1})}

\newcommand{\m}[1]{\mathbb{#1}}
\newcommand{\mc}[1]{\mathcal{#1}}
\newcommand{\mf}[1]{\mathfrak{#1}}

\newcommand{\rr}{\m{R}}
\newcommand{\Rl}{\rr}

\newcommand{\nn}{\m{N}}
\newcommand{\cc}{\m{C}}
\newcommand{\zz}{\m{Z}}
\newcommand{\Ir}{\zz}

\newcommand{\1}{\mathbbm{1}} 
\newcommand{\spec}[1]{\mathrm{sp}(#1)}

\newcommand{\cA}{\A}
\newcommand{\cP}{\mathcal{P}}
\newcommand{\cB}{\mathcal{B}}
\newcommand{\cK}{\mathcal{K}}
\newcommand{\cS}{\mathcal{S}}
\newcommand{\be}{\begin{equation}}
\newcommand{\ee}{\end{equation}}
\newcommand{\bea}{\begin{eqnarray}}
\newcommand{\eea}{\end{eqnarray}}
\newcommand{\beann}{\begin{eqnarray*}}
\newcommand{\eeann}{\end{eqnarray*}}
\newcommand{\idty}{\1}

\newcommand{\la}{\lambda}
\newcommand{\La}{\Lambda}
\newcommand{\sig}{\sigma}
\newcommand{\ep}{\varepsilon}

\newcommand{\om}{\omega}
\newcommand{\vtheta}{\vartheta}

\newcommand{\al}[1]{\begin{equation} #1 \end{equation}} 
\newcommand{\set}[1]{\left\{ #1 \right\} }
\newcommand{\ip}[1]{\langle #1 \rangle}
\newcommand{\norm}[1]{\Vert #1 \Vert }

\newcommand{\tr}[1]{\text{tr}(#1)}
\newcommand{\trr}{\text{tr}}

\renewcommand{\Re}{\mathrm{Re\,}}
\renewcommand{\Im}{\mathrm{Im\,}}

\newcommand{\vertiii}[1]{{\left\vert\kern-0.25ex\left\vert\kern-0.25ex\left\vert #1 
    \right\vert\kern-0.25ex\right\vert\kern-0.25ex\right\vert}} 

\newtheorem{theorem}{Theorem}[section]
\newtheorem{lemma}[theorem]{Lemma}
\newtheorem{proposition}[theorem]{Proposition}
\newtheorem{corollary}[theorem]{Corollary}

\newenvironment{definition}[1][Definition.]{\begin{trivlist}
\item[\hskip \labelsep {\bfseries #1}]}{\end{trivlist}}
\newenvironment{example}[1][Example.]{\begin{trivlist}
\item[\hskip \labelsep {\bfseries #1}]}{\end{trivlist}}

\newcommand{\pert}[1]{\widetilde{#1}}

\newcommand{\vphi}{\varphi}
\newcommand{\specc}{\text{sp}}

\newcommand{\floor}[1]{\lfloor #1 \rfloor}
\newcommand{\diam}[1]{\mathrm{diam}(#1)}

\newcommand{\A}{\mf{A}}
\newcommand{\Aloc}{\mf{A}_{\mathrm{loc}}}
\newcommand{\ALa}{\mf{A}_{\La}}

\newcommand{\Aut}{\text{Aut}}
\newcommand{\JW}[1]{\vartheta(#1)}
\newcommand{\Int}{\mathrm{Int}}


\title[Stability in One Dimension]{Stability of Gapped Ground State Phases\\ of Spins and Fermions in One Dimension}

\author[A. Moon]{Alvin Moon}
\address{Department of Mathematics\\
University of California, Davis\\
Davis, CA 95616, USA}
\email{asmoon@math.ucdavis.edu}

\author[B. Nachtergaele]{Bruno Nachtergaele}
\address{Department of Mathematics and Center for Quantum Mathematics and Physics\\
University of California, Davis\\
Davis, CA 95616, USA}
\email{bxn@math.ucdavis.edu}

\date{}
 
\begin{document}

\maketitle

\begin{abstract}
We investigate the persistence of spectral gaps of one-dimensional frustration free quantum lattice systems under weak perturbations and with open boundary conditions. Assuming the interactions of the system satisfy a form of local topological quantum order, we prove explicit lower bounds on the ground state spectral gap and higher gaps for spin and fermion chains. By adapting previous methods using the spectral flow, we analyze the bulk and edge dependence of lower bounds on spectral gaps.  
\end{abstract}

\begin{center}
{\em Dedicated to the memory of Ludwig Faddeev}
\end{center}

\section{Introduction} 

An important result in  the study of gapped ground state phases of quantum lattice systems (with or without topological order) is the stability of 
the spectral gap(s) under uniformly small extensive perturbations. The stability property implies that the gapped phases are full-dimensional 
regions in the space of Hamiltonians free of phase transitions \cite{bachmann:2012}.
In recent years, such results were obtained in increasing generality 
\cite{Kennedy:1992,Datta:1996,Yarotsky:2006,bravyi:2010,Michalakis:2013,szehr:2015,young:2016,hastings:2017}. Our goal here is to extend the existing results applicable in one 
dimension to Hamiltonians with so-called `open' boundary conditions, meaning that we consider systems defined on intervals $[a,b]\subset 
\Ir$ and not on a cycle $\Ir/(n\Ir)$. Specifically, this implies that the neighborhoods of the boundary points $a$ and $b$ may be treated differently 
than the bulk. There are physical and mathematical situations where one is naturally led to considering open boundary conditions. For example, in the 
series of recent works by Ogata \cite{ogata:2016a,ogata:2016,ogata:2017}, clarifying the crucial role of boundary states in the classification of quantum 
spin chains with matrix product ground states required the study of systems with open boundary conditions. Another situation of interest to us is the 
application of results for quantum spin chains to fermion models in one dimension by making use of the Jordan-Wigner transformation, which in the 
finite system set-up only works well with open boundary conditions. In this way, we obtain explicit bounds on the spectral gaps in the spectrum of 
perturbed spin and even fermion chains with one or more frustration free ground states that satisfy a local topological order condition. 
This complements previous results that prove stability of gapped fermion systems by other approaches 
\cite{nachtergaele:2016b, hastings:2017,roeck:2017}.

\section{Setting and Main Result}

\subsection{Notations}

Denote by $(\zz, |\cdot|)$ the metric graph of integers. Let $P_f(X)$ denote the finite subsets of $X\subset \zz$.  We will use $\La$ to refer exclusively to nonempty, finite intervals of the form $[a,b] = \set{ n\in \nn : a \leq n \leq b}$. Let $b_\La(x,n) = \set{ m \in \La: |x-m|\leq n}$ denote the restriction of a metric ball to the interval. For each $x\in \La$, denote by $r_x$ and $R_x$ the following distances to the boundary:
\begin{equation}
r_x = \min \set{x-a, b-x  },\quad R_x = \max \set{ x-a, b-x}
\label{eq:distances}\end{equation}
Although $r_x$ and $R_x$ depends on the interval $[a,b]$, we omit this dependence from the notation since we will always fix a finite 
volume $[a,b]$ throughout our arguments.

In the following, we will consider both spin systems and fermion systems on the one-dimensional lattice. Without difficulty, we could
also treat systems that include both types of degrees of freedom, but for simplicity of the notations, we will not consider such systems 
in this paper. It is also possible to consider inhomogeneous systems for which the number of spin or fermion states depends on the site.
In order to present the main ideas without overly burdensome notation, we will only consider homogeneous systems in the note.

The algebra of observables of the finite system in $\Lambda$, of either spins or fermions, will be denoted by $\mf{A}_\Lambda$. If we want
to specify that we are specifically considering spins or fermions, we will use the notation $\mf{A}_\Lambda^{s}$ or $\mf{A}_\Lambda^{f}$,
respectively. These algebras, and the associated Hilbert space they are represented on, are defined as follows. 

For spin systems, we have
\begin{equation*}
\mf{A}_\Lambda^{s} = M_d (\cc)^{\otimes |\La|} , \quad \mf{h}_\La = (\cc^d)^{\otimes |\La|},
\end{equation*}
where $d$ is the dimension of the Hilbert space of a single spin, i.e., $d=2S+1$. 

For fermions, $\ALa^f$ denotes the $C^*$-algebra generated by $\set{ a(x), a^*(x): x\in \La}$, the annihilation and creation operators
defining a representation of the Canonical Anticommutation Relations (CAR) on the  antisymmetric Fock space $\mf{F}_\La=\mf{F}(\ell^2(\La))$. The 
dimension of $\mf{F}_\La$ is $2^{|\La|}$ and $\ALa^f$ is $*$-isomorphic to the matrix algebra $M_{2^{|\La|}}(\cc)$. 

Given an exhaustive net of CAR or spin algebras $\set{\ALa: \La \in P_f(\zz)}$, the inductive limit $\A_\zz$, the $d^\infty$ UHF algebra, is obtained by norm completion: 
	\begin{equation*}
		\begin{split}
\A_\zz = \overline{ \bigcup _{\La \in P_f(\zz)} \A_\La}. 
		\end{split}
	\end{equation*}
This algebra is often referred to as the \textit{quasi-local} algebra, and $\Aloc = \bigcup \ALa$ as the \textit{local} algebra. 


Define by $N_X=\sum_{x\in X} a^*(x) a(x)$ the number operator for $X \in P_f(\zz)$, and define the \textit{parity automorphism} by: 
	\begin{equation}\label{eq:parity-symmetry}
		\begin{split}
\rho_\La(A) = \exp(i\pi N_\La) A \exp(i\pi N_\La)
		\end{split}
	\end{equation}
Say that $A \in \ALa^f$ is \textit{even} if $\rho_\La(A)=A$ and odd if $\rho_\La(A)=-A$. The observable $A$ is even if and only if it commutes with the local symmetry operator $\exp(i\pi N_\La)$, which is if and only if $A$ is the sum of even monomials in the generating set $\set{a(x),a^*(x):x\in \La}$. Unlike the odd observables, the even observables form a $*$-subalgebra of $\ALa^f$, which we denote by $\ALa^+$.  

\subsection{Assumptions}\label{sec:assumptions}


Let $I$ be a subinterval of $\zz$, not necessarily finite. An \textit{interaction on} $I$ is a function $\Phi: P_f(I) \to \Aloc$ such that $\Phi(X) = \Phi(X)^* \in \A_{X}$ for all $X \in P_f(I)$. The corresponding \textit{local Hamiltonian} of the finite system on $\La \subset I$ is $H_\La = \sum _{X \subset \La} \Phi(X)$.  Say $\Phi$ is \textit{non-negative} if $\Phi(X) \geq 0$ for all $X\in P_f(I)$. Say $\Phi$ is an \textit{even} interaction of the CAR algebra if $\Phi(X) \in \mf{A}_X^+$. 

The interactions in our perturbative set-up will satisfy the following assumptions. First, let $\eta: P_f(\zz) \to \Aloc$ be a non-negative interaction with distinguished local Hamiltonians $H_\La$. We will refer to $\eta$ as the \textit{unperturbed interaction}. We assume $\eta$ has the following properties: 
\begin{enumerate}
\item \textbf{Finite range}: there exists $R>0$ such that $\diam{X}>R$ implies $\eta(X)=0$. 
\item \textbf{Uniformly bounded}: there exists $M>0$ such that for all $X \in P_f(\zz)$, $\norm{\eta(X)}<M$. 
\item \textbf{Frustration free}: for all intervals $\La \in P_f(\zz)$, $\ker(H_\La) \supsetneq \set{0}$. 
\item \textbf{Uniformly locally gapped}: There exists $\gamma_0>0$ such that for all intervals $[a,b] \in P_f(\zz)$, with $b-a\geq R$, $\gamma_0$
is lower bound a for non-zero eigenvalues of $H_{[a,b]}$. 
\item \textbf{\textit{Local topological quantum order}} (\textbf{LTQO}) of the ground state projectors. 
\end{enumerate} 
The concept of LTQO was introduced in \cite{bravyi:2010}. We will need to adapt the definition to take into account 
parity and boundary conditions, which we do in the next section. 

Next, we consider the perturbations. To allow edge effects, we will consider perturbations given in terms of a family of interactions on intervals. For each $\La$, let $\Phi^\La : P_f(\La) \to \Aloc$ be an interaction on the interval, and denote by $[\Phi]$ the collection of these perturbative interactions:
	\begin{equation}\label{eq:pert-class}
		\begin{split}
[\Phi] = \set{ \Phi^\La : \La \in P_f(\zz)}
		\end{split}
	\end{equation}
The perturbed Hamiltonians have the form:
	\begin{equation}\label{eq:deform}
		\begin{split}
H_\La(\ep) = \sum _{X \subset \La} \eta(X)+\ep \Phi^\La(X), ~\ep \in [0,1]
		\end{split}
	\end{equation}
and while the Hamiltonians depend on the interval $\La$, lower bounds on gaps in the spectrum will be uniform in the volume.

\medskip 

Our main assumption on the interactions $\Phi^\La$ in $[\Phi]$ is that $\Phi^\La(X)$ decays rapidly with the diameter of $X$. To make this precise, we use $\mc{F}$-functions and provide explicit bounds in terms of the $\mc{F}$-norm. The definition and properties of 
$\mc{F}$-functions and $\mc{F}$-norm can be found in the Appendix. In our argument, we will use functions of the form: 
	\begin{equation}\label{eq:base-f}
		\begin{split}
F(x) & = e^{-h(x)} F^b(x) \\ 		
F^b(x) & =  \frac{L}{(1+cx)^{\kappa}}
		\end{split}
	\end{equation}
where $\kappa>2$ and $L,c>0$. The function $h:[0,\infty) \to [0,\infty)$ is a monotone increasing, subadditive weight function. At times, it will be necessary to precompose $F$ with a transformation $\tau: [0,\infty) \to \rr$, and so we will take as convention $F\circ \tau(x)=F(0)$ for $\tau(x)<0$. We will denote by $\norm{\cdot}_F$ the extended norm (\ref{eq:fnorm}) induced by $F$. 
 
Using $\mc{F}$-function terminology, we assume for the perturbations:
\begin{enumerate}
\item \textbf{Fast decay}: there exists an $\mc{F}$-function $F (r) = e^{-h_\Phi(r)} \frac{L}{(1+cx)^\kappa}$, for $L,c>0$ and $\kappa>2$, such that 
$\sup _\La \norm{ \Phi^\La}_F < \infty$.
\item \textbf{Metric ball support}: for all $\La$, $\Phi^\La(X) \not = 0$ implies $X = b_\La(z,n)$ for some $z\in \La$ and $n\in \nn$. 
\end{enumerate} 
The assumption that $\Phi^\La$ is supported on metric balls is not restrictive, since a finite-volume Hamiltonian of any fast-decaying interaction can be rewritten as the finite-volume Hamiltonian of a balled interaction with comparable decay (c.f. the appendix of \cite{young:2016}). 

\subsection{Local topological quantum order}
%

Consider the unperturbed interaction $\eta$ and its local Hamiltonians. Denote by $P_X$ the orthogonal projection onto $\ker{(H_X)}$, and define the state:
\begin{equation*}
\begin{split}
\omega_\La (A) = \frac{1}{\tr{P_\La}} \tr{P_\La A}, ~ A \in \A_\La
\end{split}
\end{equation*}

\begin{definition}\label{def:LTQO}
The unperturbed interaction $\eta$ satisfies \textit{local topological quantum order} if there exists a monotone function 
$\Omega: [0,\infty) \to [0,\infty)$, decreasing to $0$, such that for all $x\in \La$ and $n,k\in \nn$ satisfying $0\leq k \leq r_x$ and $k \leq n \leq R_x$, the following bound holds: 
\begin{equation}\label{eq:LTQO-bound}
\begin{split}
\forall A \in \A_{b_\La(x,k)}: \norm{ P_{b_\La(x,n)} (A - \omega_\La(A)) P_{b_\La(x,n)} } \leq \Omega(z_x(n)-k) \norm{A} 
\end{split}
\end{equation}  
where $z_x: \nn \to \nn$ is the cut-off function defined in terms of distance to the boundary of $\La$ (\ref{eq:distances}):
\begin{equation}\label{eq:cut-off}
\begin{split}
z_x(m) = \bigg{\{} \begin{array}{l l}m & \text{ if } m \leq r_x \\ r_x & \text{ else} \end{array}
\end{split} 
\end{equation}
If $\eta: P_f(\zz) \to \Aloc^f$ is an even interaction and (\ref{eq:LTQO-bound}) holds for the restricted class of observables $A \in \A^+_{b_\La(x,k)}$, then we will say $\eta$ has \textit{$\zz_2$-LTQO}. 
\end{definition}

For example, the AKLT interaction with either periodic or open boundary conditions has LTQO with $\Omega(r) = (1/3)^{r}$. The interaction defined in (\ref{eq:orbitals}) has $\zz_2$-LTQO with $\Omega(r) = 0$ for $r$ greater than a cut-off $D>0$ defined by the interaction parameters, and $\Omega(x)=2$ otherwise (Proposition \ref{prop:orbitals-LTQO}).   

\subsection{The main result} 

For any finite interval $\La$, we consider the local Hamiltonian $H_\La(\ep)$ given in \eq{eq:deform}.
There exist continuous functions $\la_1,\ldots,\la_N: [0,1] \to \rr$ such that for all $\ep \in [0,1]$, $\{\la_1(\ep),\ldots,\lambda_N(\ep)\}$ are the eigenvalues 
of $H_\La(\ep)$. We partition $\spec{H_\La(\ep)}$ into two disjoint regions, an upper and a lower part of the spectrum, and call the minimum distance between these two sets the {\em spectral gap above the ground state} or {\em the spectral gap}:  
\al{ \specc_{0,\La}(\ep) = \set{ \la_i(\ep) : \la_i(0)=0 } \hspace{20mm} \specc_{1,\La}(\ep) = \set{ \la_j(\ep): \la_j(0)> 0 }.\label{def_sp0}}
\al{ \gamma(H_\La(\ep)) = \min \set{ \la - \mu : \la \in \specc_{1,\La}(\ep),~\mu \in \specc_{0,\La}(\ep) }}

For a class of sufficiently small perturbations, the main result of this paper establishes a lower bound for the size of the spectral gap which does not depend 
on $\La$, under the assumptions that $\eta$ has LTQO, the interactions in $[\Phi]$, from (\ref{eq:pert-class}), decay sufficiently fast and, in the case of fermions, that the interactions are even. The spectrum may have other gaps which can be defined similarly in terms of eigenvalue splitting, and we also prove an estimate showing how these gaps persist under weak perturbations. To state these results, we define several constants that characterize the effect of the perturbation and the presence of edge effects. 

The effect of perturbations near the boundary of $\Lambda$ is, in general, different and stronger than far away from the boundary. As a consequence, 
our stability result for open chains features a distance parameter $D \geq 0$, in terms of which we distinguish sites near and far away from the boundary. In Section 3, we write each $\Phi^\La$  as the sum of an interaction $\Phi^D(\La)$, with local Hamiltonian $\Phi^D_\La$ supported at the $D$-boundary, and a bulk interaction $\Phi^\Int(\La)$. Define the following two finite constants quantifying the strength of the bulk and edge perturbations, respectively: 
	\begin{gather*}
M_\Int = \sup_\La \set{ \norm{ \Phi^\Int(\La)}_{F}: \diam{\La} > \max\set{2D, R}} \\ 
M_D = \sup _\La \set{ \norm{\Phi_\La^D}: \diam{\La} > \max\set{2D,R}}
	\end{gather*}
Then, for constant:
	\begin{equation*}
		\begin{split}		
m  = \bigg{(} \sum _{|n|\geq 3} 20 C ( 3|n| + 2) \bigg{[} \Omega\bigg{(} \frac{|n|-1}{2} \bigg{)}^{1/2} &+ F_0 \bigg{(} \frac{|n|-3}{2} \bigg{)} \bigg{]}  \\
\hspace{20mm}
& + C\bigg{(} \sum _{n\in \zz} \Omega\bigg{(}\frac{|n|}{2} \bigg{)} + 2 F_0 \bigg{(} \floor{ \frac{|n|}{2}} \bigg{)}+8 \bigg{)}  \bigg{)} ( \norm{\eta}_{F} + M_\Int )
		\end{split}
	\end{equation*}
where $F_0(x) = F^b(x/18 - R -3/2)$, we are able to prove the following theorem. 

\noindent\textbf{Theorem 3.11}(Ground state gap stability for spin chains)\textbf{.} \textit{Suppose $\eta: P_f(\zz) \to \Aloc^s$ has LTQO with $\Omega(n) \leq n^{-\nu}$, for $\nu>4$, and there exist $K>0$, $s\in (0,1]$ such that $h_\Phi$ satisfies $h_\Phi(r) \geq Kr^s$. Then there exists $\ep(\gamma_0)>0$ such that $0\leq \ep < \ep(\gamma_0)$ and $\diam{\La}>\max\set{2D,R}$ imply: 
	\begin{equation*}
		\begin{split}
\gamma(H_\La(\ep)) \geq \gamma_0 - (m+2M_D) \ep >0
		\end{split}
	\end{equation*}
The constant $ \ep(\gamma_0)$ can be taken as:
\begin{equation*}
	\begin{split}
\ep(\gamma_0) & = \min \set{1, \frac{\gamma_0}{m + 2 M_D}}
	\end{split}
	\end{equation*}
}
As a consequence, if we assume $\eta: P_f(\zz) \to \Aloc^+$ has $\zz_2$-LTQO, and $\Omega$ and $\Phi^\La : P_f(\La) \to \ALa^+$ have the same decay assumptions as in Theorem \ref{thm:spin-gap-stability}, we are also able to prove: 
	
\noindent\textbf{Theorem \ref{thm:fermion-gap-stability}.} 	\textit{There exist $\ep'(\gamma_0)>0$ and constant $m_D'$ such that $0\leq \ep < \ep'(\gamma_0)$ and $\diam{\La}>\max\set{2D,R}$ implies: 
	\begin{equation*}
		\begin{split}
\gamma(H_\La(\ep)) \geq \gamma_0 - m_D'\ep > 0
		\end{split}
	\end{equation*}
The constants $m_D'$ and $\ep'(\gamma_0)$ can be explicitly determined by the constants $m, M_D$ and $\ep(\gamma_0)$ }


The proofs of Theorems \ref{thm:spin-gap-stability} and \ref{thm:fermion-gap-stability} rely on a relative form bound argument. We remark that the proof will depend strongly on the fact that
the size of the boundary of $\Lambda$ can be bounded independently of the size of $\Lambda$ itself. This is special about one-dimensional systems.
The stability of the gap in higher dimensions requires a careful analysis of the locality of perturbations \cite{young:inprep} and more complicated assumptions. 

Additionally, due to the relative form bound, the hypotheses for a stable ground state spectral gap also imply general stability of the spectrum. Precisely, we prove the following  statement about the persistence of higher spectral gaps. In the statement, $J_1,J_2,J_3$ refer to equations (\ref{eq:Jconstants}) and (\ref{eq:third-J-constant}).

\noindent\textbf{Proposition 3.12.} \textit{ 
Let $T,\gamma>0$, and denote $\mathrm{res}(H_\La) = \cc \setminus \spec{H_\La}$. Suppose $\eta,[\Phi]$ satisfy the hypotheses of Theorem \ref{thm:spin-gap-stability}. There exists $\ep(\gamma,T)>0$ such that for sufficiently large $\La$ and $0\leq \ep < \ep(\gamma,T)$, if $\nu , \mu \in \spec{H_\La}$ with $(\nu, \mu) \subset \mathrm{res}(H_\La)\cap [0,T]$ and $\mu - \nu > \gamma$, then the gap between $\nu$ and $\mu$ is stable. Precisely, if we denote: 
	\begin{equation*}
		\begin{split}
\gamma(\nu,\mu,\ep) = \min \set{ \la(\ep)\in \spec{H_\La(\ep)} : \la(0)\geq \mu  } & - \max \set{ \la(\ep)\in \spec{H_\La(\ep)}: \la(0) \leq \nu} 
		\end{split}
	\end{equation*}
then: 
	\begin{equation*}
		\begin{split}
\gamma(\nu,\mu,\ep) \geq (1-p\ep)\gamma - 2(q+p T+M_D)\ep>0
		\end{split}
	\end{equation*} 
for $0\leq \ep < \ep(\gamma,T)$ and $p,q$ defined:
	\begin{equation*}
		\begin{split}
p = \frac{3}{\gamma_0}  J_1 (\norm{\eta}_{F} + M_\Int)  \hspace{5mm}
q=  [C(J_3 + 4)+J_2]  ( \norm{\eta}_{F} + M_\Int)
		\end{split}
	\end{equation*}
}

\section{Stability of spectral gap in spin chains}

\subsection{Perturbations at the boundary}
Here, we make the distinction between a perturbation near the boundary and in the bulk. In this section, unless otherwise noted, we fix an interval $\La=[a,b]$ and let $\Phi$ denote the interaction $\Phi^\La$, with local Hamiltonian $\Phi_\La = \sum _{X\subset \La} \Phi(X)$.  

\medskip 

Let $D\in \nn$ define a uniform distance parameter, and denote by $\Int_D(\La)$ the relative interior $[a+D, b-D]$. The piece of the perturbation associated to $x\in \La$ is $\Phi_x = \sum _{n=1}^{R_x} \Phi(b_\La(x,n))$, and the whole perturbation is split by the relative interior: $\Phi_\La = \Phi_\La ^{D} + \Phi_{\La}^{\Int} $, where
	\begin{gather*}
\Phi_\La^D = \sum _{x\in\La \setminus \Int_D(\La)} \Phi_x \hspace{20mm} \Phi_\La^\Int = \sum _{x\in \Int_D(\La)} \Phi_x  
	\end{gather*}
are the edge and bulk perturbations, respectively. Let $\Phi^D(\La), \Phi^\Int(\La): P_f(\La)\to \ALa^s$ denote the corresponding local interactions.

\medskip 

If $x\in \Int_D(\La)$, then $n\geq r_x$ implies $\norm{ \Phi ( b_\La(x,n)} \leq \norm{\Phi }_{F} F (D)$, and so even though the bulk perturbative interaction contains terms which extend to the boundary, their contribution to the total perturbation is relatively small as a function of $D$. 

Since the Hamiltonian $H_\La+\ep \Phi_\La$ is close in operator norm to the bulk-perturbed Hamiltonian, it will suffice to prove ground state spectral gap stability for $H_\La + \ep \Phi_\La ^\Int$.  To do this, we will use a unitary decomposition method depending on \textit{spectral flow}. First proved in \cite{Michalakis:2013}, our present formulation of the following theorem using $\mc{F}$-functions comes from \cite{young:2016}. 

\subsection{Spectral flow decomposition}

Let $\Psi: P_f(I) \to \Aloc^s$ be an arbitrary interaction, $\La \subset I$, and suppose $\gamma \in (0, \gamma_0)$. Let $\ep_\La>0$ be such that $0\leq \ep  \leq \ep_\La$ implies $\gamma(H_\La(\ep))\geq \gamma$, where $H_\La(\ep) = H_\La + \ep \Psi_\La$. We may take $\ep_\La$ to be maximal. Because $\gamma(H_\La(\ep))$ is bounded below by $\gamma$ and $\ep \Psi_\La$ is uniformly bounded on $[0,\ep_\La]$, we may construct the \textit{spectral flow} (also known as \textit{quasi-adiabatic evolution}) $\alpha : [0,\ep_\La] \to \mf{A}_\La^s$, whose quasi-local properties are extensively discussed in \cite{bachmann:2012, hastings:2005}. Briefly summarizing, there exists a norm-continuous family $U(\ep)$ of unitaries such that, if $P(\ep)$ denotes the orthogonal projection onto the kernel of $H_\La(\ep)$: 
	\begin{equation}\label{eq:spectralflow}
		\begin{split}
\alpha_\ep(A) = U(\ep)^* A U(\ep) \text{ and } P(\ep) = U(\ep) P(0) U(\ep)^* .
		\end{split}
	\end{equation} 
The unitaries are the solution to $-i\frac{d}{d\ep}U(\ep)=D(\ep)U(\ep)$ with $U(0)=\1$, where the generator $D(\ep)$ is given by: 
	\begin{equation}\label{eq:generator-of-spectralflow}
		\begin{split}
D(\ep) = \int _{-\infty}^{\infty} w_\gamma(t)\int _0 ^t e^{is H_\La(\ep)} \Psi_\La e^{-i s H_\La(\ep)} ds ~dt 
		\end{split}
	\end{equation} 
for a weight function $w_\gamma \in L^1$ with compactly supported Fourier transform (see Lemma 2.3 in \cite{bachmann:2012}). Since the quasi-local properties of its generator are made clear by the expression (\ref{eq:generator-of-spectralflow}), the spectral flow automorphism transforms the perturbed Hamiltonian $H_\La(\ep)$ into a unitarily equivalent finite-volume Hamiltonian of a well-behaved, local interaction. Identifying this local interaction is the content of the unitary decomposition theorem: 

\begin{theorem}\label{thm:decomposition}
Suppose $\Psi: P_f(I) \to \Aloc^s$, satisfies a finite $\mc{F}$-norm for F and $h_\Psi(r) \geq \mc{K}r^t$ for some $\mc{K}>0$ and $t\in (0,1]$. Then for all $0\leq \ep \leq \ep_\La$: 
\begin{enumerate}
\item There exists an interaction $\Phi^1(\ep):P_f(\La) \to \mf{A}_\La^s$ such that
$\alpha_\ep( H_\La(\ep)) = H_\La + \Phi^1(\ep)$, and
\item $\Phi^1(\ep)$ is supported on the metric balls of $\La$, that is, 
\begin{equation*}
\begin{split}
\Phi^1_\La(\ep) = \sum _{x\in \La} \Phi^{1}_x(\ep)
\end{split}
\end{equation*}
where $\Phi^{1}_x(\ep) = \sum _{n=1}^{R_x} \Phi^1( b_\La(x,n),\ep) $ and each $\Phi^1( b_\La(x,n),\ep) \in \mf{A}_{b_\La(x,n)}^s$. Furthermore, for all $x\in \La$, $[P(0), \Phi_x^1(\ep)]=0$. 
\end{enumerate}
There exists a constant $C >0$, depending on the uniform bound $M$, range $R$, uniform gap $\gamma_0$ and decay parameters $\mc{K}$ and $t$, such that: 
\begin{equation*}
\begin{split}
\norm{\Phi^1(\ep)}_{F_\varphi} \leq C \ep ( \norm{\eta}_{F_\Psi} + \norm{\Psi}_{F_\Psi})
\end{split}
\end{equation*}
where $F_\varphi$ is an $\mc{F}$-function depending on $\mc{K},t,\gamma$ such that $F_\varphi( r )$ decays faster than any polynomial in $r$. 
\end{theorem}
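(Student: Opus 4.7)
The plan is to take $\Phi^1_\La(\ep) := \alpha_\ep(H_\La(\ep)) - H_\La$ and then decompose it into a ball-supported interaction with the desired commutation properties. Differentiating $\alpha_\ep(H_\La(\ep)) = U(\ep)^* H_\La(\ep) U(\ep)$ via the ODE for $U(\ep)$ gives $\frac{d}{d\ep}\alpha_\ep(H_\La(\ep)) = \alpha_\ep(\Psi_\La - i[D(\ep), H_\La(\ep)])$. The defining property of the weight $w_\gamma$ in \eqref{eq:generator-of-spectralflow} (see Lemma 2.3 of \cite{bachmann:2012}) is that $\Psi_\La^\#(\ep) := \Psi_\La - i[D(\ep), H_\La(\ep)]$ commutes with the ground state projector $P(\ep)$. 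Integrating on $[0,\ep]$ yields $\Phi^1_\La(\ep) = \int_0^\ep \alpha_s(\Psi_\La^\#(s))\, ds$, and since $\alpha_s(P(s)) = U(s)^* P(s) U(s) = P(0)$, each integrand commutes with $P(0)$, giving the global commutation $[\Phi^1_\La(\ep), P(0)] = 0$.

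For the per-site structure, I would decompose $\Psi_\La = \sum_{x \in \La} \Psi_x$ along its ball-supported form, define $\Psi_x^\#(s)$ from $\Psi_x$ by the same linear construction, and set $\Phi^1_x(\ep) := \int_0^\ep \alpha_s(\Psi_x^\#(s))\, ds$, which sums to $\Phi^1_\La(\ep)$ by linearity and inherits $[\Phi^1_x(\ep), P(0)] = 0$ by the same argument. To produce the finite-range pieces $\Phi^1(b_\La(x,n), \ep) \in \A^s_{b_\La(x,n)}$, I would use a telescoping conditional-expectation decomposition $\Phi^1(b_\La(x,n), \ep) := \Pi_{x,n}(\Phi^1_x(\ep)) - \Pi_{x,n-1}(\Phi^1_x(\ep))$, with $\Pi_{x,n}$ the normalized partial trace onto $\A^s_{b_\La(x,n)}$ and $\Pi_{x,-1} \equiv 0$; the sum $\sum_n \Phi^1(b_\La(x,n), \ep) = \Phi^1_x(\ep)$ telescopes as required.

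The $F_\vphi$-norm estimate combines two quasi-locality inputs. First, a Lieb-Robinson bound for $\alpha_s$ from \cite{bachmann:2012}, with velocity governed by $\norm{\eta}_{F_\Psi}$ and leakage rate dictated by the hypothesis $h_\Psi(r) \geq \mc{K}r^t$ together with the decay of $w_\gamma$. Second, quasi-locality of the block-diagonalizing kernel $\int w_\gamma(t)\, e^{itH_\La(s)} A\, e^{-itH_\La(s)}\, dt$ underlying $\Psi_\La \mapsto \Psi_\La^\#(s)$, controlled by Lieb-Robinson bounds for $e^{itH_\La(s)}$ and the rapid decay of $w_\gamma$. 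Composing these two quasi-local maps, applying to each $\Psi_x$, bounding the partial-trace truncation errors in $\mc{F}$-norm as in the appendix of \cite{young:2016}, and integrating over $s \in [0,\ep]$ yields the claimed bound, with $F_\vphi$ a product of the polynomial $F^b$ and a sub-exponential weight derived from $h_\Psi$ and $w_\gamma$, hence decaying faster than any polynomial.

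The main obstacle is the combined bookkeeping: the stretched-exponential decay of $F_\vphi$ must survive the composition of two distinct quasi-locality analyses and the telescoping truncation onto balls, while yielding a clean linear-in-$\ep$ bound uniform in $\La$ with a universal constant $C$ expressed in terms of $M$, $R$, $\gamma_0$, $\mc{K}$, $t$. Tracking how $\norm{\eta}_{F_\Psi}$ enters as a velocity scale in the Lieb-Robinson estimates, while $\norm{\Psi}_{F_\Psi}$ controls the amplitude of the perturbation, is the delicate step; the interaction-level machinery in the appendix of \cite{young:2016} is tailored for precisely this bookkeeping, and I would follow that template to extract $C$.
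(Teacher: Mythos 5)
Your outline reproduces the standard Michalakis--Zwolak spectral-flow decomposition that the paper itself does not reprove but imports wholesale from Theorem 6.3.4 of \cite{young:2016}, recording only the explicit form of $F_\varphi$: integrating $\frac{d}{d\ep}\alpha_\ep(H_\La(\ep))=\alpha_\ep(\Psi_\La-i[D(\ep),H_\La(\ep)])$, localizing the generator site-by-site (which works because the block-diagonalizing property of $A\mapsto A-i[\mc{D}(A),H_\La(\ep)]$ from Lemma 2.3 of \cite{bachmann:2012} is linear in $A$ and holds for arbitrary local $A$, so each $\Psi_x^{\#}(s)$ commutes with $P(s)$ and hence $\Phi^1_x(\ep)$ with $P(0)$), and then telescoping with normalized partial traces $\Delta_{b_\La(x,n)}$ is exactly the cited argument. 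The quantitative $F_\varphi$-estimate is the one part you defer to the quasi-locality bookkeeping of \cite{young:2016}, which is also all the paper does.
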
 

\begin{proof}
This reformulated statement of the original decomposition theorem found in \cite{Michalakis:2013} is proved in Theorem 6.3.4 in \cite{young:2016}, and so we record here only the precise form of $F_\vphi$. Define: 
	\begin{equation}\label{eq:mu}
		\begin{split}
\mu(r) = \bigg{ \{ } \begin{array}{l l}(e / \kappa)^\kappa & \text{ if }r \leq e^\kappa \\
r / ( \log{r})^\kappa & \text{ else } r> e^\kappa   \end{array}
		\end{split}
	\end{equation}
Define $\mc{K}_0 = \min\set{ \mc{K}, 2/7}$, and denote by $\nu_\Psi$ the Lieb-Robinson velocity for the Heisenberg dynamics generated by the interaction $\Psi$. Denote $\widetilde{\mu}(r) = \mu(\frac{\mc{K} \gamma r}{2\nu_\Psi})$ and:
	\begin{equation*}
		\begin{split}
G_\Phi(r) = e^{- \frac{\mc{K}_0}{\mc{K}} \widetilde{\mu} \circ h_\Phi(r)} F^b(r)
		\end{split}
	\end{equation*}

Then the $\mc{F}$-function in the statement of the theorem is given by: 
	\begin{equation}\label{eq:ffunction-varphi}
		\begin{split}
F_\varphi(r) = \bigg{ \{ } \begin{array}{l l} G_\Psi(0) & \text{ if } r\leq 18R +27 \\    G_\Psi(r/18 - R - 3/2)& \text{ else } r> 18R+27  \end{array} 
		\end{split}
	\end{equation} 	
\end{proof}
For the remainder of this section, let $U(\ep), \alpha_\ep$ and $\Phi^1(\ep)$ be from an application of Theorem \ref{thm:decomposition} when $\Psi$ is the bulk perturbative interaction $\Phi^\Int(\La)$ with local Hamiltonian $\Phi_\La^\Int$. 

\begin{lemma}The local operator $\Phi^1(\ep)$ can be rewritten:
\begin{equation}\label{eq:spectral-flow-decomposition}
 \Phi^1(\ep) = \Phi^2(\ep) + \Phi^3(\ep) + \om_\La(\pert{\Phi^1}(\ep)) + \mc{R}(\ep)
 \end{equation}
for terms defined:
	\begin{gather*}
 \pert{\Phi^1} (\ep)   = \sum _{x\in \Int_2(\La)} \Phi_x^1(\ep)  \\
\Phi^2(\ep)  = (\1-P)(\pert{\Phi^1}(\ep) - \om_\La(\pert{\Phi^1}(\ep)) \1)(\1-P)
\\ 
\Phi^3(\ep)  = P(\pert{\Phi^1}(\ep)-\om_\La(\pert{\Phi^1}(\ep))\1)P  \\
\mc{R}(\ep) = \Phi^1_a(\ep) + \Phi^1_{a+1}(\ep) + \Phi^1_b(\ep) + \Phi^1_{b+1}(\ep)
	\end{gather*}
\end{lemma}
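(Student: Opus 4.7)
The proof is essentially an algebraic rearrangement that exploits the commutation $[P,\Phi^1_x(\ep)]=0$ guaranteed by Theorem \ref{thm:decomposition}(ii). My plan is to first separate the boundary-adjacent terms from the bulk of $\Phi^1(\ep)$, and then decompose the bulk by inserting $P + (\1-P) = \1$ on either side.

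First, I would split the sum over $x\in\La$ according to whether $x$ lies in $\Int_2(\La)$ or not. By definition of $\mc{R}(\ep)$ and $\pert{\Phi^1}(\ep)$, this yields
\begin{equation*}
\Phi^1(\ep) \;=\; \pert{\Phi^1}(\ep) + \mc{R}(\ep).
\end{equation*}
It therefore suffices to show that $\pert{\Phi^1}(\ep) = \Phi^2(\ep) + \Phi^3(\ep) + \om_\La(\pert{\Phi^1}(\ep))\1$.

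Second, I would exploit the fact that each $\Phi^1_x(\ep)$ commutes with $P=P(0)$ by item (ii) of Theorem \ref{thm:decomposition}. Since $\pert{\Phi^1}(\ep)$ is a finite sum of such terms, it also commutes with $P$, and the cross-terms in
\begin{equation*}
\pert{\Phi^1}(\ep) \;=\; (P+(\1-P))\,\pert{\Phi^1}(\ep)\,(P+(\1-P))
\end{equation*}
vanish, leaving $\pert{\Phi^1}(\ep) = P\pert{\Phi^1}(\ep)P + (\1-P)\pert{\Phi^1}(\ep)(\1-P)$.

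Third, I would add and subtract the scalar $\om_\La(\pert{\Phi^1}(\ep))$ inside each sandwich and use $P \1 P + (\1-P)\1(\1-P) = P + (\1-P) = \1$. Concretely,
\begin{equation*}
\begin{aligned}
\Phi^2(\ep) + \Phi^3(\ep)
&= (\1-P)\bigl(\pert{\Phi^1}(\ep) - \om_\La(\pert{\Phi^1}(\ep))\1\bigr)(\1-P) \\
&\quad + P\bigl(\pert{\Phi^1}(\ep) - \om_\La(\pert{\Phi^1}(\ep))\1\bigr)P \\
&= \pert{\Phi^1}(\ep) - \om_\La(\pert{\Phi^1}(\ep))\1,
\end{aligned}
\end{equation*}
which upon rearrangement together with the first step gives the claimed identity \eqref{eq:spectral-flow-decomposition}.

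There is no real obstacle here; the statement is a bookkeeping identity whose content lies entirely in invoking the key structural fact from Theorem \ref{thm:decomposition}, namely $[P,\Phi^1_x(\ep)]=0$, to kill the off-diagonal blocks of $\pert{\Phi^1}(\ep)$ with respect to the decomposition $\mf{h}_\La = P\mf{h}_\La \oplus (\1-P)\mf{h}_\La$. The only thing to be careful about is the precise definition of $\mc{R}(\ep)$, which must collect exactly the sites of $\La\setminus\Int_2(\La)$ so that $\Phi^1(\ep) - \mc{R}(\ep) = \pert{\Phi^1}(\ep)$; assuming this (the indexing of the four boundary terms) matches the definition of $\Int_2(\La)$, the identity follows immediately.
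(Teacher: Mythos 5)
Your proof is correct and is exactly the paper's argument: the authors simply state that the identity ``follows from a direct calculation using the fact that $[\Phi_x^1(\ep),P]=0$,'' and your three steps (separating the sites of $\La\setminus\Int_2(\La)$ into $\mc{R}(\ep)$, killing the off-diagonal blocks via the commutation with $P$, and adding/subtracting the scalar $\om_\La(\pert{\Phi^1}(\ep))$) are precisely that calculation written out. Your caveat about the indexing of $\mc{R}(\ep)$ is well taken: since $\La\setminus\Int_2(\La)=\{a,a+1,b-1,b\}$, the term $\Phi^1_{b+1}(\ep)$ in the paper's definition of $\mc{R}(\ep)$ should read $\Phi^1_{b-1}(\ep)$.
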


\begin{proof}
This follows from a direct calculation using the fact that $[\Phi_x^1(\ep),P]=0$. 
\end{proof}

The reason for separating the boundary terms $\mc{R}(\ep)$ from the rest of the transformed perturbation is for notational convenience, since the following argument will use the fact that $\floor{r_x/2}>0$ for $x\in \Int_2(\La)$.  

\subsection{Relative form boundedness of perturbations}
The argument for relative form boundedness of the transformed perturbation $\Phi^1(\ep)$ will depend on the following two elementary lemmas.  
\begin{lemma}\label{lem:firstlemma}
Suppose $x\in \La$. For any $1\leq m \leq r_x$, 
	\begin{equation*}
		\begin{split}
		\norm{P (\Phi^1_x(\ep) - \omega_\La( \Phi^1_x(\ep)))P} \leq \norm{\Phi^1(\ep)} \bigg{(} \Omega(r_x-m) + 2 F_\vphi(m) \bigg{)}
		\end{split}
	\end{equation*}
\end{lemma}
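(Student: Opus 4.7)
\textbf{Proof plan for Lemma~\ref{lem:firstlemma}.}

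My plan is to split $\Phi^1_x(\ep)$ at radius $m$ and treat the two pieces by different means. Writing
\[ \Phi^1_x(\ep) = A + B, \qquad A = \sum_{n=1}^{m} \Phi^1(b_\La(x,n),\ep),\qquad B = \sum_{n=m+1}^{R_x} \Phi^1(b_\La(x,n),\ep), \]
we have $A \in \mf{A}_{b_\La(x,m)}$, while $B$ collects all the balled terms of radius exceeding $m$. Linearity of $\om_\La$ and the triangle inequality reduce the task to two separate estimates: an $\Omega(r_x-m)\|\Phi^1(\ep)\|$ contribution from $A$, and a $2F_\vphi(m)\|\Phi^1(\ep)\|$ contribution from $B$.

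For the inner piece, the conceptual content is to exploit the frustration-freeness of $\eta$ in order to transfer the outer projector from $P=P_\La$ to the strictly local $P_{b_\La(x,r_x)}$, which is what allows LTQO to bite. Since every $\eta(X)\geq 0$, any $\psi\in\ker H_\La$ is annihilated by each $\eta(X)$ individually, hence by $H_Y$ for every $Y\subset\La$. Therefore $P_\La\leq P_Y$ for each such $Y$; in particular $P_\La = P_\La P_{b_\La(x,r_x)}$. Inserting this identity on both sides of $A-\om_\La(A)$ and taking operator norms yields
\[ \|P(A-\om_\La(A))P\| \leq \|P_{b_\La(x,r_x)}(A-\om_\La(A))P_{b_\La(x,r_x)}\|. \]
Since $A\in\mf{A}_{b_\La(x,m)}$ and the required inequalities $0\leq m\leq r_x\leq R_x$ hold by hypothesis, LTQO (Definition~\ref{def:LTQO}) applies with $k=m$ and $n=r_x$; noting that $z_x(r_x)=r_x$, the right-hand side is bounded by $\Omega(r_x-m)\|A\|$.

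For the outer piece, no cancellation is needed: the elementary sandwich estimate $\|P(B-\om_\La(B))P\|\leq 2\|B\|$ reduces everything to controlling $\|B\|\leq \sum_{n>m}\|\Phi^1(b_\La(x,n),\ep)\|$, which is precisely a tail of the radial series defining the $F_\vphi$-norm of the transformed interaction delivered by Theorem~\ref{thm:decomposition}, and is therefore of order $F_\vphi(m)\|\Phi^1(\ep)\|$ by the $\mathcal{F}$-norm conventions of the Appendix.

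The only conceptually nontrivial step is the insertion of $P_{b_\La(x,r_x)}$ via frustration-freeness; after that, LTQO is used as a black box on the inner piece and the outer estimate is a routine tail sum. The main obstacle I anticipate is cosmetic rather than substantive: the constant-tracking needed so that the stated inequality admits the single prefactor $\|\Phi^1(\ep)\|$ absorbing $\|A\|$ and the tail of $\|B\|$ simultaneously, in keeping with the $\mathcal{F}$-norm bookkeeping fixed in Section~\ref{sec:assumptions}.
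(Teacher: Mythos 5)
Your proof is correct and follows essentially the same route as the paper's: split the radial sum of $\Phi^1_x(\ep)$ at radius $m$, bound the tail $\sum_{k>m}$ by the $\mc{F}$-norm to get $2F_\vphi(m)\norm{\Phi^1(\ep)}_{F_\vphi}$, and apply LTQO to the inner part to get $\Omega(r_x-m)$. The only (harmless) difference is that you insert $P_{b_\La(x,r_x)}$ via frustration-freeness and invoke LTQO with $n=r_x$ on the whole inner sum, whereas the paper invokes LTQO term by term in $k$ with $n=R_x$, for which $P_{b_\La(x,R_x)}=P$ directly and no insertion is needed; both give $\Omega(z_x(n)-k)\le\Omega(r_x-m)$.
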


\begin{proof}
Denote $A - \omega_\La( A) = A_0$ and $b_\La(x,n)=b_x(n)$, for brevity. For $0\leq m \leq r_x$, by linearity of $\omega_\La$: 
\begin{equation*}
P \bigg{(} \Phi_x^1(\ep)\bigg{)}_0 P = \sum _{k=1}^{R_x} P\Phi^1( b_x(k),\ep)_0 P = \sum _{k=1}^{m} P\Phi^1( b_x(k),\ep)_0 P + \sum_{k=m+1}^{R_x} P\Phi^1( b_x(k),\ep)_0 P  
\end{equation*}
We bound the two summands separately. The right summand is bounded by Proposition \ref{prop:FfunctionProperties}: 
\begin{equation*}
\sum_{k=m+1}^{R_x} \norm{P\Phi^1( b_x(k),\ep)_0 P} \leq 2 \norm{\Phi^1(\ep)}_{F_\vphi} F_\vphi(m)   
\end{equation*}
The left summand is bounded by local topological quantum order and the $\mc{F}$-norm: 
\begin{equation*}
\begin{split}
 \sum _{k=1}^{m} \norm{ P \Phi^1(b_x(k),\ep)_0P} & \leq \sum _{k=1}^{m} \Omega(r_x-k) \norm{\Phi^1(b_x(k),\ep)} \\
& \leq  \Omega(r_x-m)\norm{ \Phi^1(\ep)}_{F_\vphi} 
\end{split}
\end{equation*}
Combining these bounds proves the lemma. 
\end{proof}

The next lemma uses the cut-off function $z_x$ defined in (\ref{eq:distances}).

\begin{lemma}\label{lem:secondlemma}
Suppose $x\in \Int_2(\La)$. If  $1\leq m \leq r_x$ and $m\leq n \leq R_x$, then:
	\begin{equation*}
		\begin{split}
\bigg{\Vert} \sum_{k=1}^m \bigg{(} \Phi^1(b_x(k),\ep) \bigg{)}_0 P_{b_x(n)} \bigg{\Vert}\leq \norm{\Phi^1(\ep)}_{F_\vphi}\bigg{[} 5 \Omega( z_x(n)-m)^{1/2}  + 4F_\vphi(m) \bigg{]} 
		\end{split}
	\end{equation*}
\end{lemma}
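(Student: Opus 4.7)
The plan is to exploit the square-root identity $\norm{TP_{b_x(n)}}^2 = \norm{P_{b_x(n)} T^* T P_{b_x(n)}}$ in order to convert the one-sided projection into a two-sided one, so that \textbf{LTQO} becomes directly applicable. Set $S := \sum_{k=1}^m (\Phi^1(b_x(k),\ep))_0$, and observe that each $B_k := \Phi^1(b_x(k),\ep)$ lies in $\A_{b_x(k)} \subset \A_{b_x(m)}$, so $S \in \A_{b_x(m)}$ and is self-adjoint. Then $\norm{SP_{b_x(n)}}^2 = \norm{P_{b_x(n)} S^2 P_{b_x(n)}}$, and applying LTQO to $S^2 \in \A_{b_x(m)}$ (with ``inner'' parameter $m$ and ``outer'' parameter $n$) yields
\[
\norm{P_{b_x(n)} S^2 P_{b_x(n)}} \;\leq\; \omega_\La(S^2) + \Omega(z_x(n) - m)\, \norm{S}^2.
\]

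The substantive step is bounding $\omega_\La(S^2)$, for which I would invoke the commutation $[\Phi_x^1(\ep), P] = 0$ supplied by Theorem \ref{thm:decomposition}, where $P = P_\La$. By cyclicity of the trace, $\omega_\La(S^2) = \trr(PS^2P)/\trr(P) \leq \norm{PS^2P}$. Setting $Q = I - P$ and using $PQ = 0$ together with $(QSP)^* = PSQ$, one checks the positive block decomposition
\[
PS^2P \;=\; (PSP)^2 + (QSP)^*(QSP),
\]
so that $\norm{PS^2P} \leq \norm{PSP}^2 + \norm{QSP}^2$.

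Each block is then handled separately. For $\norm{PSP}$, I would sum the LTQO estimates $\norm{P (B_k)_0 P} \leq \Omega(r_x - k)\norm{B_k}$ (obtained from LTQO with ``outer'' parameter $R_x$, so that $P_{b_x(R_x)} = P$ and $z_x(R_x) = r_x$), then apply monotonicity of $\Omega$ and an $\mc F$-norm bound on $\sum_k \norm{B_k}$ to get $\norm{PSP} \lesssim \Omega(r_x-m)\norm{\Phi^1(\ep)}_{F_\vphi}$. For $\norm{QSP}$ the commutation with $P$ is essential. Write $A := \sum_{k=1}^m B_k$ and the tail $T := \sum_{k=m+1}^{R_x} B_k$, so that $\Phi_x^1(\ep) = A + T$ and $S = A - \omega_\La(A)I$; then
\[
QSP \;=\; QAP \;=\; Q\Phi_x^1(\ep) P - QTP \;=\; -QTP,
\]
since $Q\Phi_x^1(\ep) P = \Phi_x^1(\ep) QP = 0$ by commutation. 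Therefore $\norm{QSP} \leq \norm{T}$, which the standard $\mc F$-norm tail estimate bounds by a multiple of $F_\vphi(m)\norm{\Phi^1(\ep)}_{F_\vphi}$.

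Combining the two blocks yields $\omega_\La(S^2) \lesssim \norm{\Phi^1(\ep)}^2_{F_\vphi}[\Omega(r_x-m) + F_\vphi(m)]^2$; substituting into the main inequality, taking a square root, and using the elementary bounds $\sqrt{x^2 + y} \leq x + \sqrt y$ together with $\Omega(r_x - m) \leq \Omega(z_x(n) - m)^{1/2}$ (valid because $\Omega \leq 1$ and $z_x(n) \leq r_x$) produces the advertised bound $5\Omega(z_x(n)-m)^{1/2} + 4F_\vphi(m)$ once all $\mc F$-norm constants and factors of $2$ from $\norm{X - \omega_\La(X)I} \leq 2\norm{X}$ are tracked. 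The main obstacle is conceptual: any naive estimate of the one-sided norm $\norm{SP_{b_x(n)}}$ leaves $\omega_\La(S^2)$ uncontrolled, and it is exactly the decomposition $PS^2P = (PSP)^2 + (QSP)^*(QSP)$ combined with the spectral-flow commutation $[\Phi_x^1(\ep),P]=0$ that keeps the off-diagonal block small enough to recover the $\Omega^{1/2}$ dependence.
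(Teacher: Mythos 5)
Your proof is correct and follows essentially the same route as the paper's: both pass through the $C^*$-identity to apply LTQO to $S^2$ at scale $n$, and both use the commutation $[\Phi^1_x(\ep),P]=0$ together with the $\mc{F}$-norm tail bound to control the remaining piece involving $P$ (the paper compares $\norm{SP_{b_x(n)}}$ directly to $\norm{SP}$ and invokes Lemma \ref{lem:firstlemma}, whereas you reorganize the same estimate in block form via $\omega_\La(S^2)\le\norm{PS^2P}\le\norm{PSP}^2+\norm{QSP}^2$). The one caveat --- that $\Omega(r_x-m)\le\Omega(z_x(n)-m)^{1/2}$ requires $\Omega\le 1$, which the definition of LTQO does not formally guarantee --- is shared with the paper's own constant bookkeeping, so it is not a gap specific to your argument.
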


\begin{proof}
Suppose $A \in \A^s_{b_x(k)}$. The $C^*$-identity and LTQO imply: 
	\begin{equation*}
		\begin{split}
 \bigg{|} \norm{A P_{b_x(n)}} - \norm{AP} \bigg{|}^2  \leq \bigg{|} \norm{AP_{b_x(n)}}^2 - \norm{AP}^2 \bigg{|} \leq 2 \norm{A}^2 \Omega(z_x(n)-m) 
		\end{split}
	\end{equation*}
In the case $A = \sum _{k=1} ^m  \Phi^1(b_x(k),\ep)_0$, the above bound and Proposition \ref{prop:FfunctionProperties} imply:
	\begin{equation*}
		\begin{split}
\norm{A P_{b_x(n)}} \leq 4 \norm{\Phi^1(\ep)}_{F_\vphi} \Omega(z_x(n)-m)^{1/2}+\norm{AP}
		\end{split}
	\end{equation*} 
By Theorem \ref{thm:decomposition}, $\Phi^1_x(\ep)$ commutes with $P$. So, using Lemma \ref{lem:firstlemma}, we get: 
	\begin{equation*}
		\begin{split}
\norm{AP} & \leq \norm{P \Phi^1_x(\ep)_0 P} + 2\sum_{k=m+1}^{R_x} \norm{\Phi^1(b_x(k),\ep)} \\
& \leq \norm{\Phi^1(\ep)}_{F_\vphi} \bigg{[}\Omega(r_x-m) + 4F_\vphi(m) \bigg{]}
		\end{split}
	\end{equation*}
\end{proof}

Proposition \ref{prop:final-decomposition} uses a finite resolution of identity $\set{E_n^x}$ defined at each site $x\in \Int_2(\La)$ by: 
	\begin{equation*}
		\begin{split}
E^x_{n} = \Bigg{ \{ }\begin{array}{l l} \1 - P_{b_x(1)} & \text{ if }n=1 \\ 
P_{b_x(n-1)} - P_{b_x(n)} & \text{ if } 1<n\leq r_x \\ P_{b_x(r_x)} - P & \text{ if }n=r_x+1\\
P & \text{ else }n=r_x+2  \end{array}
		\end{split}
	\end{equation*}

\begin{lemma}\label{lem:resolution-of-identity}
The family $\set{ E_n^x}$ has the properties: 
	\begin{equation*}
		\begin{split}
& 1. ~ \sum _{k=1}^{r_x+2} E_k^x = \1  \text{ and } \sum _{k=1}^{m} E_k^x = \bigg{ \{ } \begin{array}{l l} \1 - P_{b_x(m)} & \text{ if } 1 \leq m \leq r_x \\ \1 - P & \text{ if } m = r_x+1\end{array} \\
& 2. ~ P_{b_x(k)} E_k^x = 0 \text{ for } k \leq r_x  
		\end{split}
	\end{equation*}
\end{lemma}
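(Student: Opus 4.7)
The plan is to recognize that $\set{E_n^x}$ is a telescoping decomposition built from differences of ground-state projections over the nested chain of balls $b_x(1) \subset b_x(2) \subset \cdots \subset b_x(r_x) \subset \La$, capped off by the global ground-state projection $P = P_\La$. Both claims will reduce to a single structural fact, namely the monotonicity of ground-state projections under inclusion of supporting regions.

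The first step is to establish this monotonicity. For $X \subset Y \subset \La$, frustration freeness and non-negativity of $\eta$ give
\begin{equation*}
H_Y - H_X = \sum_{\substack{Z \subset Y \\ Z \not\subset X}} \eta(Z) \geq 0
\end{equation*}
on $\mf{h}_Y$, so any $\psi \in \ker H_Y$ satisfies $0 \leq \langle \psi, H_X \psi\rangle \leq \langle \psi, H_Y \psi\rangle = 0$, giving $\psi \in \ker(H_X \otimes \1)$. Consequently $P_Y \leq P_X$ as projections on $\mf{h}_Y$, and therefore $P_X P_Y = P_Y P_X = P_Y$. Specializing to our nested chain yields $P_{b_x(k)} P_{b_x(k-1)} = P_{b_x(k)}$ for $1 < k \leq r_x$ and $P_{b_x(r_x)} P = P$.

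Given monotonicity, claim (1) is a direct telescoping calculation. For $1 \leq m \leq r_x$, the partial sum $\sum_{k=1}^m E_k^x$ collapses via cancellation of consecutive $P_{b_x(k-1)} - P_{b_x(k)}$ pairs to $\1 - P_{b_x(m)}$. Adding the $k = r_x+1$ term $P_{b_x(r_x)} - P$ gives $\1 - P$, and the $k = r_x+2$ term restores $P$ to produce $\1$. For claim (2), the $k = 1$ case is just $P_{b_x(1)}(\1 - P_{b_x(1)}) = 0$ by idempotency; for $1 < k \leq r_x$ the identity $P_{b_x(k)} P_{b_x(k-1)} = P_{b_x(k)}$ gives $P_{b_x(k)}(P_{b_x(k-1)} - P_{b_x(k)}) = 0$.

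The only substantive input is the projection monotonicity, which is standard in the frustration-free setting and is already built into the assumptions on $\eta$; the remainder is formal cancellation. I therefore anticipate no real obstacle to executing the proof.
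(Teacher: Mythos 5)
Your proposal is correct and follows exactly the route the paper intends: the paper's proof consists of the single remark that property 2 follows from frustration freeness, and your monotonicity argument ($\ker H_Y \subseteq \ker H_X$ for $X \subset Y$, hence $P_{b_x(k)}P_{b_x(k-1)} = P_{b_x(k)}$) together with the telescoping computation is precisely the standard filling-in of that remark, matching the identical inclusion $\ker(H_{b_x(n)}) \subset \ker(H_{b_x(n-1)})$ the authors invoke in the proof of Proposition 3.6. No gaps.
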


\begin{proof}
We only comment that the second property follows from the frustration free assumption on $\eta$. 
\end{proof}

\begin{proposition}\label{prop:final-decomposition}
Let $x\in  \Int_2(\La)$ and $0\leq \ep  \leq \ep_\La$. There exist local operators $\Theta_\beta^x(n,\ep)$, for $3\leq n \leq r_x$, and operator $\Theta_\alpha^x(\ep)$ such that:
	\begin{equation*}
		\begin{split}
\Phi_x^1(\ep)_0 = \sum _{n=3}^{r_x} \Theta_\beta^x(n,\ep) + \Theta_\alpha^x(\ep)
		\end{split}
	\end{equation*}
Furthermore, $P_{b_x(n)}\Theta_\beta^x(n,\ep)=0$, and $\Theta^x_\beta(n,\ep)$ and $\Theta^x_\alpha(\ep)$ decay rapidly: 
	\begin{equation*}
		\begin{split}
\norm{\Theta_\beta^x(n,\ep)} & \leq 20 \norm{\Phi^1(\ep)}_{F_\vphi}   \bigg{[} \Omega\bigg{(}\frac{n-1}{2}\bigg{)}^{\frac{1}{2}} +  F_\vphi \bigg{(} \frac{n-3}{2} \bigg{)} \bigg{]} \\
\norm{\Theta_\alpha^x(\ep)} & \leq 20 \norm{\Phi^1(\ep)}_{F_\vphi}\bigg{[}  \Omega\bigg{(}\frac{r_x-1}{2}\bigg{)}^{\frac{1}{2}} +  F_\vphi \bigg{(} \frac{r_x-3}{2}\bigg{)} \bigg{]}
		\end{split}
	\end{equation*}
\end{proposition}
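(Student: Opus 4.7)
My approach is to apply the resolution of identity $\{E_k^x\}_{k=1}^{r_x+2}$ from Lemma \ref{lem:resolution-of-identity} on both sides of $\Phi_x^1(\ep)_0$ and organize the resulting double sum into concentric shells. Since $\sum_{k=1}^{r_x+2} E_k^x = \1$,
\[
\Phi_x^1(\ep)_0 \;=\; \sum_{k,l=1}^{r_x+2} E_k^x\, \Phi_x^1(\ep)_0\, E_l^x \;=\; \sum_{N=1}^{r_x+2} \Theta_N^x ,
\]
where $\Theta_N^x$ collects all terms with $\max(k,l) = N$. The mutual orthogonality of the $E_k^x$ together with the identity $\sum_{k \leq N} E_k^x = \1 - P_{b_x(N)}$ (valid for $N \leq r_x$) then collapses each shell to the closed form
\[
\Theta_N^x \;=\; (\1 - P_{b_x(N-1)})\, \Phi_x^1(\ep)_0\, E_N^x \;+\; E_N^x\, \Phi_x^1(\ep)_0\, (\1 - P_{b_x(N)}),
\]
with the conventions $P_{b_x(0)} = \1$ and, at the outer boundary, $P_{b_x(r_x+1)} = P$.

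I would then set $\Theta_\beta^x(n,\ep) := \Theta_n^x$ for $3 \leq n \leq r_x$ and let $\Theta_\alpha^x(\ep)$ collect the remaining shells ($N \in \{1, 2\}$ together with the boundary terms $N \in \{r_x+1, r_x+2\}$); the two low-level contributions $\Theta_1^x, \Theta_2^x$ are absorbed into $\Theta_\beta^x(3,\ep)$, which is legitimate because $P_{b_x(3)}$ annihilates $E_1^x, E_2^x$ and the factors $\1 - P_{b_x(k)}$ for $k \leq 2$. The vanishing identity $P_{b_x(n)} \Theta_\beta^x(n,\ep) = 0$ then holds for every shell level $N \leq n$: Lemma \ref{lem:resolution-of-identity}(ii) gives $P_{b_x(n)} E_N^x = 0$, and the nesting $P_{b_x(n)} \leq P_{b_x(N-1)}$ (the kernel projections form a decreasing chain in the support, by frustration-freeness) gives $P_{b_x(n)}(\1 - P_{b_x(N-1)}) = 0$.

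To estimate $\|\Theta_\beta^x(n,\ep)\|$ for $n \geq 4$, I split $\Phi_x^1(\ep)_0 = \sum_{k=1}^{R_x}(\Phi^1(b_x(k),\ep))_0$ at the cutoff $m := \lfloor (n-1)/2 \rfloor$. For the head $k \leq m$ I expand $E_n^x = P_{b_x(n-1)} - P_{b_x(n)}$ and apply Lemma \ref{lem:secondlemma} twice, with $n' = n-1$ and $n' = n$; since $\Omega$ is decreasing, the combined contribution is at most $\|\Phi^1(\ep)\|_{F_\vphi}\bigl[10\, \Omega(z_x(n-1) - m)^{1/2} + 8\, F_\vphi(m)\bigr]$. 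For the tail $k > m$ I apply the $\mc{F}$-norm estimate from Proposition \ref{prop:FfunctionProperties} \emph{once}, in aggregate, to add $2\,\|\Phi^1(\ep)\|_{F_\vphi}\, F_\vphi(m)$. Adjoint symmetry reproduces an identical bound for the summand $E_n^x \Phi_x^1(\ep)_0 (\1 - P_{b_x(n)})$, yielding $\|\Theta_\beta^x(n,\ep)\| \leq 20\, \|\Phi^1(\ep)\|_{F_\vphi}\bigl[\Omega(z_x(n-1) - m)^{1/2} + F_\vphi(m)\bigr]$. Inserting $z_x(n-1) = n-1$ together with $n-1-m \geq (n-1)/2$ and $m \geq (n-3)/2$ recovers the stated inequality. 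The bound on $\Theta_\alpha^x(\ep)$ follows from the same template applied to the boundary shells with cutoff $m \approx \lfloor (r_x-1)/2 \rfloor$.

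The main technical obstacle is the bookkeeping that produces exactly the constant $20$ rather than the naive $24$: the $\mc{F}$-norm tail $\sum_{k > m}(\Phi^1(b_x(k),\ep))_0$ must be estimated in aggregate \emph{before} the $E_n^x$ factor is inserted, so that the tail $F_\vphi(m)$ is counted once rather than separately for $P_{b_x(n-1)}$ and $P_{b_x(n)}$. A secondary subtlety is checking that the absorbed low-level shells $\Theta_1^x, \Theta_2^x$ still fit within the stated coefficient at $n=3$, which uses that $\Omega(1)^{1/2} + F_\vphi(0)$ is a universal order-one quantity large enough to accommodate the crude bounds on those two terms.
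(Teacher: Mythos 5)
Your two-sided resolution $\sum_{k,l}E_k^x\,\Phi_x^1(\ep)_0\,E_l^x$ grouped by $\max(k,l)$ is a valid algebraic identity, it does reproduce $P_{b_x(n)}\Theta_\beta^x(n,\ep)=0$, and your head/tail bookkeeping at cutoff $m=\lfloor (n-1)/2\rfloor$ does recover the constant $20$. But there is a genuine gap: your shells are not \emph{local} operators. Each $\Theta_N^x$ sandwiches the \emph{entire} sum $\Phi_x^1(\ep)_0=\sum_{k=1}^{R_x}\Phi^1(b_x(k),\ep)_0$, including the terms with $k>N$, which are supported on $b_x(k)\not\subset b_x(N)$; splitting into head and tail only at the estimation stage controls $\norm{\Theta_N^x}$ but not its support. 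The support condition $\Theta_\beta^x(n,\ep)\in\A_{b_x(n)}$ is exactly what the word ``local'' in the statement asserts, and it is load-bearing downstream: Lemma \ref{lem:MZ-res}(3), i.e.\ $[\Theta_\beta^x(n,\ep),S_n^i(\sig)]=0$, is deduced from $P_{b_x(n)}\Theta_\beta^x(n,\ep)=0$ \emph{together with} the disjointness $b_x(n)\cap b_y(n)=\emptyset$ for $x\sim_n y$. With your operators there is no reason for $\Theta_\beta^x(n,\ep)$ to commute with $P_{b_y(n)}$ at the other sites $y$ of the equivalence class, so the key step of Proposition \ref{prop:preliminary-relbound} --- inserting $\sum_\sig S_n^i(\sig)$ and discarding all configurations with $\sig_x=0$ --- breaks.

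The paper's proof is built precisely to avoid this: it telescopes each term $Q\Phi^1(b_x(k),\ep)_0Q$ \emph{separately}, starting the shell expansion of the $k$-th term only at level $2k$, and then interchanges the order of summation. As a result the shell at level $n$ contains only the terms with $k\leq \lfloor n/2\rfloor$, all supported in $b_x(n)$, while the terms with $k>\lfloor r_x/2\rfloor$ are diverted wholesale into $\Theta_\alpha^x(\ep)$ (which is why that one is called merely an ``operator''). To repair your argument you would need to truncate $\Phi_x^1(\ep)_0$ at a level-dependent cutoff \emph{before} sandwiching; the leftover tails then no longer telescope on their own, and collecting them consistently across levels is essentially the paper's double-sum interchange. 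Your norm estimates and the absorption of the $N=1,2$ shells into $n=3$ would survive such a repair, but as written the construction does not prove the locality claim and cannot feed into Lemma \ref{lem:MZ-res}.
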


\begin{proof}
Fix $x\in (a,b)$ and $\ep \in [0,\ep_\La]$. Abbreviate $Q=\1-P$ and $\Phi_k^1 = \Phi^1(b_x(k),\ep)_0$, i.e.
	\begin{equation*}
		\begin{split}
\Phi^1_x(\ep)_0 = \sum _{k=1}^{R_x} Q\Phi^1_k Q
		\end{split}
	\end{equation*}
Define a ``cut-off" parameter $n_x = \floor{\frac{r_x}{2}}$ and split $\Phi^1_x(\ep)_0$ into two sums: 
	\begin{equation}\label{eq:prelim-decomp}
		\begin{split}
\Phi^1_x(\ep)_0 = \sum _{k=1}^{n_x} Q \Phi^1_k Q + \sum _{k=n_x+1}^{R_x} Q \Phi_x^1 Q 
		\end{split}
	\end{equation}
The tail $\mu_\alpha^x = \sum _{k=n_x+1}^{R_x} Q\Phi^1_k Q$ can be bounded above in operator norm by using LTQO, so we turn our attention to the other summand. Denote by $Q_{b_x(l)}$ the complement projection $\1 - P_{b_x(l)}$. Using the resolution $\set{E_n}$ at $x$, we rewrite $Q\Phi_k^1Q$ for all $1\leq k \leq n_x$ as:
	\begin{equation}\label{eq:sum-rewrite}
		\begin{split}
 Q\Phi^1_kQ = Q_{b_x(2k)} \Phi^1_k Q_{b_x(2k)} + \sum _{n=2k+1}^{r_x+1} \bigg{[} E_n \Phi_k^1 \bigg{(} \sum _{m=1}^{n-1} E_m \bigg{)} + \bigg{(} \sum _{m=1}^{n} E_m \bigg{)} \Phi^1_k E_n  \bigg{]}
		\end{split}
	\end{equation}
Define the following terms to organize the summands in (\ref{eq:sum-rewrite}):
\begin{gather*}
\nu_\alpha^x(k)  = E_{r_x+1} \Phi_k^1 Q_{b_x(r_x)} + Q\Phi^1_k E_{r_x+1} \hspace{10mm} 
\theta_\beta^x(n,k)  = E_n \Phi ^1_k Q_{b_x(n-1)} + Q_{b_x(n)}\Phi^1_kE_n \\ 
\tau_\beta^x(2k) = Q_{b_x(2k)} \Phi_k^1 Q_{b_x(2k)} 
\end{gather*}
so that:
	\begin{equation*}\label{eq:interchange-orders}
		\begin{split}
Q\Phi_k^1Q & = \nu_\alpha^x(k) + \tau_\beta^x(2k) + \sum _{n=2k+1}^{r_x} \theta_\beta^x(n,k).
		\end{split}
	\end{equation*}
For convenience, extend $\tau_\beta^x(m)$ to previously undefined $m$ by declaring $\tau_\beta^x(m)=0$. The derivation of the $\Theta_\beta^x(\ep, n), \Theta^x_\alpha(\ep)$ operators will result from an interchange of order for the summation of terms in (\ref{eq:interchange-orders}) over $n$ and $k$. The following definition for $\Theta_\beta^x(n,\ep)$ accounts for the parity of $r_x$: 
	\begin{equation*}
		\begin{split}
\forall 3\leq n < r_x: ~ \Theta_\beta^x(n,\ep) & = \bigg{[} \sum_{k=1}^{\floor{\frac{n-1}{2}}} \theta_\beta^x(n,k) \bigg{]} + \tau_\beta^x(n) \\
\Theta_\beta^x(r_x,\ep) & =  \sum _{k=1}^{\floor{\frac{r_x}{2}}} \theta_\beta^x(r_x,k)+\tau_\beta^x(r_x) 
		\end{split}
	\end{equation*}
Then: 
	\begin{equation*}
		\begin{split}
Q\Phi^1_x(\ep)_0 Q = \sum _{k=1}^{R_x} Q\Phi^1_k Q = \sum _{n=3}^{r_x} \Theta_\beta^x(n,\ep) + \Theta_\alpha^x(\ep) 
		\end{split}
	\end{equation*}
where $\Theta_\alpha^x(\ep) = \mu_\alpha^x + \sum _{k=1}^{n_x} \nu_\alpha^x(k)$. Next, the frustration free property of $H_\La$ implies that $\ker(H_{b_x(n)}) \subset \ker(H_{b_x(n-1)})$, and so:
	\begin{equation}\label{eq:strongcommutativity}
\forall 3\leq n \leq r_x:  P_{b_x(n)} \Theta_\beta^x(n,\ep) = \Theta_\beta^x(n,\ep) P_{b_x(n)} = 0
 	\end{equation}
Furthermore, we have the following bounds on operator norm, for all $x\in \Int_2(\La)$ and $ 3 \leq n < r_x$, by Lemma \ref{lem:secondlemma} and Proposition \ref{prop:FfunctionProperties}:
	\begin{equation}\label{eq:formrelative-bounds-opnorm}
		\begin{split}
 \norm{\Theta_\beta^x(n,\ep)} & \leq \norm{ (\sum _{k=1}^{\floor{\frac{n-1}{2}}} \Phi^1_k )^* E_n } + \norm{\sum _{k=1}^{\floor{\frac{n-1}{2}}} \Phi_k^1 E_n} + \norm{\tau_\beta^x(n)}  \\ & \leq 20 \norm{\Phi^1(\ep)}_{F_\vphi} \bigg{[} \Omega\bigg{(} \frac{n-1}{2}\bigg{)}^{\frac{1}{2}} + F_\vphi \bigg{(} \frac{n-3}{2}\bigg{)} \bigg{]}  \\
\max\set{\norm{\Theta_\beta^x(r_x,\ep)},\norm{\Theta^x_\alpha(\ep)}} & \leq 20 \norm{\Phi^1(\ep)}_{F_\vphi} \bigg{[}\Omega\bigg{(} \frac{r_x-1}{2}\bigg{)}^{\frac{1}{2}} + F_\vphi\bigg{(} \frac{r_x-3}{2}\bigg{)} \bigg{]}
	\end{split}
	\end{equation} 
\end{proof} 
%
%
%
%
%
%
%
%
%
%
Now, we define several quantities which will appear in the derivation of the form bound. Note the weight function $e^{-h_\vphi(x)}$ of $F_\vphi$ is bounded above by $1$ on its domain. So any expression in $F_\vphi$ is bounded above by the corresponding sum using the shifted base $\mc{F}$-function 
	\begin{equation}\label{eq:shift-ffunction}
		\begin{split}
F_0(r)=F^b(r/18 - R - 3/2)
		\end{split}
	\end{equation}
from (\ref{eq:base-f}) and (\ref{eq:ffunction-varphi}). Define:
	\begin{equation}\label{eq:independentbound}
		\begin{split}
\kappa(n,\ep) = 20 C \ep( \norm{\eta}_{F} + \norm{\Phi^\Int(\La)}_{F}) \bigg{[} \Omega\bigg{(} \frac{n-1}{2} \bigg{)}^{\frac{1}{2}} + F_0\bigg{(}\frac{n-3}{2}\bigg{)} \bigg{]}
		\end{split}
	\end{equation} 
$\kappa(n,\ep)$ does not depend on either $\La$ or the lower bound $\gamma$ on the instantaneous gap, and the inequalities from (\ref{eq:formrelative-bounds-opnorm}) are rewritten: 
	\begin{equation*}
		\begin{split}
\norm{\Theta_\beta^x(n,\ep)} \leq \kappa(n,\ep) \hspace{20mm} \norm{\Theta_\alpha^x(\ep)} \leq \kappa( r_x, \ep)
		\end{split}
	\end{equation*}
Lastly, we see by the assumed decay of $\Omega$ that the following sums are finite: 
	\begin{equation}\label{eq:Jconstants}
		\begin{split}
J_1 & = \sum _{n \in \zz} 20C |n|  [ \Omega( (|n|-1)/2)^{1/2} + F_0((|n|-3)/2)] \\
J_2 & = \sum _{n \in \zz} 20C [ \Omega( (|n|-1)/2)^{1/2} + F_0((|n|-3)/2)]  
		\end{split}
	\end{equation}
The following argument for concluding form boundedness is essentially due to \cite{Michalakis:2013}, modified to work with the boundary terms introduced by Proposition \ref{prop:final-decomposition}. We divide a large part of the Hamiltonian with respect to a convenient partition of $ \Int_2(\La)$. For $n\in \nn$, define the relation $x\sim_n y$ if and only if $x-y\in (2n+1)\zz$. Index each of the parts $\La_n^i$ of $ \Int_2(\La) / \sim_n $ by a representative $i\in I(n) \subset \Int_2(\La)$. Note that the cardinality of $I(n)$ is roughly bounded above by $3n$. The corresponding parts of the Hamiltonian are defined: 
	\begin{equation*}
		\begin{split}
H_n^i  = \sum _{x\in \La_n^i} H_{b_x(n)} \hspace{20mm}
\Phi_n^i  = \sum _{x\in \La_n^i}\Theta_\beta^x(n,\ep)
		\end{split}
	\end{equation*}
By definition of the $\Theta_\beta^x(n,\ep)$ operators, $\Phi^2(\ep) = \sum _{n,i} \Phi^i_n$. In order to compare $H_n^i$ to $\Phi_n^i$, we use a resolution of identity from \cite{Michalakis:2013}, whose properties we record here: 

\begin{lemma}\label{lem:MZ-res}
For a configuration $\sig: \La_n^i \to \set{0,1}$, define the projection $S_n^i(\sig) = \prod _{x\in \La_n^i} \sig_x Q_{b_x(n)} + (1-\sig_x)P_{b_x(n)}$. Then: 
	\begin{equation*}
		\begin{split}
& 1. ~ \sum _{\sig: \La_n^i \to \set{0,1} } S_n^i(\sig)=\1 \\
& 2. ~ S_n^i(\sig)S_n^i(\sig') = \delta_{\sig,\sig'} S_n^i(\sig)\\
& 3. ~ \text{For all $x\in \La_n^i$}, ~ [\Theta_\beta^x(n,\ep), S_n^i(\sig)]=0
		\end{split}
	\end{equation*}
\end{lemma}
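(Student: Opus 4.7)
The plan is to exploit the geometric fact that the equivalence relation $\sim_n$ is designed precisely so that distinct representatives $x, y \in \La_n^i$ satisfy $|x - y| \geq 2n+1$, which means the metric balls $b_x(n)$ and $b_y(n)$ are disjoint. Consequently the projections $P_{b_x(n)}$ and $Q_{b_x(n)} = \1 - P_{b_x(n)}$, as $x$ ranges over $\La_n^i$, live in algebras supported on pairwise disjoint subsets of $\La$ and hence commute pairwise. This is the single observation that makes all three claims essentially algebraic.

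For claim (1), I would write $\1 = \prod_{x \in \La_n^i}\bigl(P_{b_x(n)} + Q_{b_x(n)}\bigr)$, which is valid because the factors mutually commute by the disjointness observation, and then expand the product to obtain $\sum_\sigma S_n^i(\sigma)$ after identifying each binary choice $\sigma_x \in \{0,1\}$ with the selection of $Q_{b_x(n)}$ versus $P_{b_x(n)}$. For claim (2), I would compute $S_n^i(\sigma) S_n^i(\sigma')$ factor by factor: at any site where $\sigma_x = \sigma'_x$ the two idempotents $P_{b_x(n)}$ or $Q_{b_x(n)}$ combine into themselves, while at any site where the configurations differ we obtain a factor of $P_{b_x(n)} Q_{b_x(n)} = 0$; hence the product vanishes unless $\sigma = \sigma'$, in which case it reproduces $S_n^i(\sigma)$.

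For claim (3), I would first verify by inspecting the definitions of $\theta_\beta^x(n,k)$ and $\tau_\beta^x(n)$ that $\Theta_\beta^x(n,\ep)$ is supported in $b_x(n)$. Then I would split $S_n^i(\sigma)$ into the factor at the site $x$ and the product of all other factors. For $y \neq x$ in $\La_n^i$, the $y$-factor acts on $b_y(n)$ which is disjoint from $b_x(n)$, so it commutes with $\Theta_\beta^x(n,\ep)$ trivially. For the $x$-factor, the strong annihilation relation $P_{b_x(n)} \Theta_\beta^x(n,\ep) = \Theta_\beta^x(n,\ep) P_{b_x(n)} = 0$ recorded in \eqref{eq:strongcommutativity} gives commutation with $P_{b_x(n)}$, and therefore also with $Q_{b_x(n)} = \1 - P_{b_x(n)}$.

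The only non-trivial ingredient is confirming the support statement $\Theta_\beta^x(n,\ep) \in \A_{b_x(n)}$ used in part (3); everything else is a clean unwinding of the commuting-projector calculus. I expect this bookkeeping to be the mildest of obstacles, since each of the summands building $\Theta_\beta^x(n,\ep)$ — namely $E_n \Phi_k^1 Q_{b_x(n-1)}$, $Q_{b_x(n)} \Phi_k^1 E_n$, and $\tau_\beta^x(n)$ — involves only the local projections $P_{b_x(j)}$, $Q_{b_x(j)}$ for $j \leq n$ together with $\Phi^1_k = \Phi^1(b_x(k),\ep)_0$ for $k \leq \lfloor (n-1)/2 \rfloor$, all of which lie in $\A_{b_x(n)}$.
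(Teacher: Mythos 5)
Your proof is correct and follows essentially the same route as the paper, which disposes of the lemma by citing exactly the two facts you use: the annihilation relation $P_{b_x(n)}\Theta_\beta^x(n,\ep)=\Theta_\beta^x(n,\ep)P_{b_x(n)}=0$ from \eqref{eq:strongcommutativity} and the disjointness $b_x(n)\cap b_y(n)=\emptyset$ for distinct $x\sim_n y$, which makes the factors of $S_n^i(\sig)$ mutually commuting projections. Your additional check that each summand of $\Theta_\beta^x(n,\ep)$ is supported in $b_x(n)$ is a correct and worthwhile piece of bookkeeping that the paper leaves implicit.
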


\begin{proof}
These properties follow immediately from the fact that $P_{b_x(n)}\Theta^x_\beta(n,\ep)=0$ and that $x\sim_n y $ implies $b_x(n) \cap b_y(n)=\emptyset$. 
\end{proof} 

\begin{proposition}\label{prop:preliminary-relbound}
Suppose $\diam{\La}> \max\set{4,R}$. There exist constants $\delta,\beta>0$, dependent on $\norm{\Phi^\Int(\La)}_{F}$, such that $0 \leq \ep \leq \ep_\La$ implies, for all $v \in \mf{H}_\La$: 
	\begin{equation}\label{eq:preliminary-relbound}
		\begin{split}
|\ip{v, \Phi^2(\ep) v}| \leq \delta\ep \norm{v}^2 + \beta\ep \ip{v,H_\La v}
		\end{split}
	\end{equation}
Precisely, we may choose: 
$$\delta=J_2(\norm{\eta}_{F} + \norm{\Phi^\Int(\La)}_{F})  \text{ and }\beta=\frac{3}{\gamma_0}  J_1 (\norm{\eta}_{F} + \norm{\Phi^\Int(\La)}_{F})  .$$  
\end{proposition}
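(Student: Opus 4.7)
The plan is to decompose each block $\Phi_n^i$ of $\Phi^2(\ep)$ using the Michalakis--Zwolak resolution of identity $\{S_n^i(\sig)\}$ from Lemma \ref{lem:MZ-res}, exploit the vanishing relation $P_{b_x(n)}\Theta_\beta^x(n,\ep)=0$ from \eqref{eq:strongcommutativity} to kill contributions from configurations with too few excited sites, and then trade the remaining contribution against $H_n^i$ via the uniform local gap $\gamma_0$. Summing over $i \in I(n)$ and then over $n$ with the one-dimensional multiplicity bound $\sum_{x \in \Int_2(\La)} H_{b_x(n)} \leq (2n+1)\,H_\La$ produces the $\beta\ep\ip{v,H_\La v}$ term, while the residual $\Theta_\alpha^x(\ep)$ pieces left over from Proposition \ref{prop:final-decomposition} are controlled in operator norm and yield the $\delta\ep\norm{v}^2$ term.

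In detail, fix $n\ge 3$ and $i\in I(n)$. Since $[\Phi_n^i,S_n^i(\sig)]=0$, $\sum_\sig S_n^i(\sig)=\1$, and $S_n^i(\sig)S_n^i(\sig')=\delta_{\sig,\sig'} S_n^i(\sig)$, one has $\Phi_n^i = \sum_\sig S_n^i(\sig)\Phi_n^i S_n^i(\sig)$. When $\sig_x=0$ the factor $P_{b_x(n)}$ of $S_n^i(\sig)$ commutes with every other factor (disjointness of the balls $b_y(n)$ for $y \in \La_n^i$) and with each $\Theta_\beta^y(n,\ep)$ for $y\neq x$, while annihilating $\Theta_\beta^x(n,\ep)$ by \eqref{eq:strongcommutativity}. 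Hence only sites with $\sig_x=1$ contribute and, writing $N(\sig)=|\{x:\sig_x=1\}|$,
$$\norm{S_n^i(\sig)\Phi_n^i S_n^i(\sig)}\le N(\sig)\,\kappa(n,\ep).$$
At those same sites $S_n^i(\sig)$ contains $Q_{b_x(n)}=\1-P_{b_x(n)}$, on whose range the uniform local gap assumption gives $H_{b_x(n)}\ge \gamma_0$. Summing over $x\in \La_n^i$ (and using $H_{b_y(n)}\ge 0$ for $\sig_y=0$) yields $S_n^i(\sig) H_n^i S_n^i(\sig)\ge N(\sig)\gamma_0\, S_n^i(\sig)$. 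Combining the two estimates, for every $v$ and every $\sig$ with $N(\sig)\ge 1$,
$$|\ip{v, S_n^i(\sig)\Phi_n^i S_n^i(\sig) v}| \le N(\sig)\kappa(n,\ep)\norm{S_n^i(\sig) v}^2 \le \frac{\kappa(n,\ep)}{\gamma_0}\ip{v, H_n^i S_n^i(\sig) v},$$
while the $N(\sig)=0$ term vanishes. Summing over $\sig$ and using $[H_n^i,S_n^i(\sig)]=0$ gives $|\ip{v,\Phi_n^i v}|\le \tfrac{\kappa(n,\ep)}{\gamma_0}\ip{v,H_n^i v}$.

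To pass from $H_n^i$ to $H_\La$, sum over $i\in I(n)$: since each $\eta(Y)$ has $\diam Y\le R\le n$, it is supported in $b_x(n)$ for at most $2n+1\le 3n$ values of $x\in \Int_2(\La)$, so frustration-freeness of $\eta$ yields $\sum_i H_n^i \le 3n\,H_\La$. Summing over $n\ge 3$ and substituting the definition of $\kappa(n,\ep)$ from \eqref{eq:independentbound} identifies the sum $\sum_n 3n\,\kappa(n,\ep)/\gamma_0$ with $\beta\ep$ for the stated $\beta=\tfrac{3}{\gamma_0}J_1(\norm{\eta}_F+\norm{\Phi^\Int(\La)}_F)$. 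The $\Theta_\alpha^x(\ep)$ contributions are controlled by operator norm: since $r_x$ takes each value at most twice among $x\in \Int_2(\La)$, one has $\sum_x \norm{\Theta_\alpha^x(\ep)}\le 2\sum_{r\ge 2}\kappa(r,\ep)\le \delta\ep$ with $\delta=J_2(\norm{\eta}_F+\norm{\Phi^\Int(\La)}_F)$, giving the $\delta\ep\norm{v}^2$ term.

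The main obstacle is the interplay in the second paragraph: aligning the commutativity properties of $\{S_n^i(\sig)\}$, the vanishing \eqref{eq:strongcommutativity}, and the local gap so that the $N(\sig)$ factor from the operator-norm bound on each block is exactly offset by the $N(\sig)\gamma_0$ from the local gap on $\Ran S_n^i(\sig)$. The one-dimensional structure enters only in the final multiplicity bound $2n+1$, which is what keeps the resulting constants independent of $\La$.
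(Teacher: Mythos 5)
Your argument is correct and follows essentially the same route as the paper's proof: both split $\Phi^2(\ep)$ into the $\Phi_n^i$ blocks plus the $\Theta_\alpha^x(\ep)$ remainders, insert the resolution $\{S_n^i(\sig)\}$ of Lemma \ref{lem:MZ-res}, use $P_{b_x(n)}\Theta_\beta^x(n,\ep)=0$ together with the local gap $\gamma_0 Q_{b_x(n)}\leq H_{b_x(n)}$ to convert the operator-norm bound $\kappa(n,\ep)$ into a form bound against $H_n^i$, and finish with the one-dimensional multiplicity bound $\leq 3n$ and the norm estimate $\sum_x\kappa(r_x,\ep)\leq J_2(\norm{\eta}_F+\norm{\Phi^\Int(\La)}_F)\ep$. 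The only cosmetic difference is that you aggregate the excited sites into $N(\sig)$ before invoking the gap, whereas the paper sums over $x$ with $\sig_x=1$ directly; the constants come out identically.
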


\begin{proof}
Denote $d_\La = \diam{\La}$. For any $x\in \Int_2(\La)$, if $n> r_x$, say that $\Theta_\beta^x(n,\ep)=0$. Suppose $u\in \mf{H}_\La$. Then by Proposition \ref{prop:final-decomposition}: 
	\begin{equation*}
		\begin{split}
|\ip{u,\Phi^2(\ep) u }| \leq | \ip{u, \sum _{n=3}^{d_\La} \sum _{i\in I(n)} \Phi^i_n u } | + \sum _{x\in \Int_2(\La)} \kappa(r_x,\ep) \norm{u}^2
		\end{split}
	\end{equation*}
The second term $\sum _{x\in \Int_2(\La)}\kappa(r_x,\ep)$ is bounded above by the constants in (\ref{eq:Jconstants}), so we focus on the first summand. Since $[\Phi^i_n, S_n^i(\sig)]=0$: 
	\begin{equation}\label{eq:norm-calculations}
		\begin{split}
| \ip{u, \sum _{n=3}^{d_\La} \sum _{i\in I(n)} \Phi^i_n u } | & \leq \sum _{n=3}^{d_\La} |\ip{u,\sum _{i\in I(n)} \Phi^i_n \bigg{[} \sum _\sig S_n^i (\sig) \bigg{]} u } | \\
& \leq \sum _{n=3}^{d_\La} \sum_{i\in I(n)} \sum_{\sig:\La_n^i\to \set{0,1}} \sum _{\substack{x\in \La_n^i }} \norm{S_n^i(\sig)\Theta_\beta^x(n,\ep)} \ip{u,S_n^i(\sig)u} \\
& \leq \sum_{n=3}^{d_\La}  \sum_{i\in I(n)} \frac{\kappa(n,\ep)}{\gamma_0}  \sum _{\substack{x\in \La_n^i}} ~~ \sum _{\substack{\sig:\La^i_n\to \set{0,1} \\ \sig_x=1}}  \gamma_0 \ip{ u, S_n^i(\sig)u} \\
& = \sum _{n=3}^{d_\La} \sum _{i\in I(n)} \frac{\kappa(n,\ep)}{\gamma_0} \sum _{x\in \La^i_n} \ip{ u, \gamma_0 Q_{b_x(n)} u} \\
& \leq \sum _{n=3}^{d_\La} \frac{3n \kappa(n,\ep)}{\gamma_0} \ip{u, H_\La u} 
		\end{split}
	\end{equation}	
Hence: 
	\begin{equation*}
		\begin{split}
|\ip{u, \Phi^2(\ep)u}| & \leq \bigg{[}\sum _{x\in \Int_2(\La)} \kappa(r_x,\ep) \bigg{]} \norm{u}^2 + \bigg{[} \sum _{n=3}^{d_\La} \frac{3n\kappa(n,\ep)}{\gamma_0} \bigg{]}\ip{u, H_\La u}  \\
&  \leq J_2(\norm{\eta}_{F} + \norm{\Phi^\Int(\La)}_{F})  \norm{u}^2 + \ \frac{3}{\gamma_0} J_1 (\norm{\eta}_{F} + \norm{\Phi^\Int(\La)}_{F}) \ep \ip{u, H_\La u}
		\end{split}
	\end{equation*}
	
	\end{proof}

\begin{corollary}\label{cor:norm-boundedness}
There exists a constant $\alpha$, dependent on $\norm{\Phi^\Int(\La)}_{F}$, such that $0 \leq \ep \leq \ep_\La$ and $\diam{\La}>\max\set{4,R}$ imply: 
	\begin{equation*}
		\begin{split}
		\forall u\in \mf{H}_\La: ~|\ip{ u,(\Phi^2(\ep)+ \Phi^3(\ep) + \mc{R}(\ep) ) u }| \leq \alpha \ep \norm{u}^2 + \beta \ep \ip{u,H_\La u}		
		\end{split}
	\end{equation*}
Precisely, we may take $\alpha =  C ( \norm{\eta}_{F} + \norm{\Phi^{\Int}(\La)}_{F}) [J_3 + 4] + \delta$.
\end{corollary}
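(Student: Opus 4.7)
The plan is to split the form $\ip{u,(\Phi^2(\ep)+\Phi^3(\ep)+\mc{R}(\ep))u}$ by the triangle inequality and bound each of the three pieces separately. For $\Phi^2(\ep)$ I would invoke Proposition \ref{prop:preliminary-relbound} directly, which already delivers the $\delta\ep\norm{u}^2 + \beta\ep\ip{u, H_\La u}$ contribution and the entire Hamiltonian-quadratic-form part. The remaining two summands $\Phi^3(\ep)$ and $\mc{R}(\ep)$ will be controlled in operator norm only, so they contribute purely to the coefficient $\alpha$ in the $\ep\norm{u}^2$ term, yielding the decomposition $\alpha = \delta + C(\norm{\eta}_F + \norm{\Phi^\Int(\La)}_F)[J_3 + 4]$.

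For $\Phi^3(\ep) = \sum_{x\in \Int_2(\La)} P(\Phi_x^1(\ep) - \om_\La(\Phi_x^1(\ep))\1) P$, I would apply Lemma \ref{lem:firstlemma} at each site with the symmetric cutoff $m = \floor{r_x/2}$, obtaining
\begin{equation*}
\norm{P(\Phi_x^1(\ep) - \om_\La(\Phi_x^1(\ep))\1)P} \leq \norm{\Phi^1(\ep)}_{F_\vphi}\Bigl[\Omega(r_x - \floor{r_x/2}) + 2F_\vphi(\floor{r_x/2})\Bigr].
\end{equation*}
Because each positive integer arises as $r_x$ for at most two sites in $\Int_2(\La)$, summing over $x$ and bounding $F_\vphi \leq F_0$ using (\ref{eq:shift-ffunction}) produces a volume-independent constant proportional to $J_3 = \sum_{n\in\zz}[\Omega(|n|/2) + 2F_0(\floor{|n|/2})]$. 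Theorem \ref{thm:decomposition} then converts $\norm{\Phi^1(\ep)}_{F_\vphi}$ to $C\ep(\norm{\eta}_F + \norm{\Phi^\Int(\La)}_F)$, producing the $C(\norm{\eta}_F + \norm{\Phi^\Int(\La)}_F)J_3\,\ep\norm{u}^2$ piece of the bound.

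For $\mc{R}(\ep) = \Phi^1_a(\ep) + \Phi^1_{a+1}(\ep) + \Phi^1_b(\ep) + \Phi^1_{b+1}(\ep)$, each of the four boundary operators is bounded in operator norm by $\norm{\Phi^1(\ep)}_{F_\vphi}$ via the standard $\mc{F}$-norm estimate from the appendix (the sum $\sum_n \norm{\Phi^1(b_x(n),\ep)}$ is summable by the fast decay of $F_\vphi$). Applying Theorem \ref{thm:decomposition} once more, the total contribution is $\norm{\mc{R}(\ep)} \leq 4C\ep(\norm{\eta}_F + \norm{\Phi^\Int(\La)}_F)$, which accounts for the $+4$ in the factor $[J_3 + 4]$.

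The only subtlety worth flagging is the symmetric choice $m = \floor{r_x/2}$ in Lemma \ref{lem:firstlemma}: this simultaneously balances the LTQO decay $\Omega(r_x-m)$ and the $\mc{F}$-function tail $F_\vphi(m)$ at the common scale $r_x/2$, ensuring the sum over $x \in \Int_2(\La)$ telescopes into a finite constant independent of $|\La|$. Everything else is essentially bookkeeping: no commutator with $H_\La$ is needed for $\Phi^3$ or $\mc{R}$ since those are estimated in pure operator norm, in contrast with Proposition \ref{prop:preliminary-relbound} where the resolution of identity from Lemma \ref{lem:MZ-res} together with the gap $\gamma_0$ was required to trade a factor for a quadratic form in $H_\La$.
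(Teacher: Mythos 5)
Your proposal is correct and follows essentially the same route as the paper: Proposition \ref{prop:preliminary-relbound} handles $\Phi^2(\ep)$, Lemma \ref{lem:firstlemma} with the symmetric cutoff $m=\floor{r_x/2}$ summed over $x\in\Int_2(\La)$ gives the $J_3$ contribution from $\Phi^3(\ep)$, and the $\mc{F}$-norm estimate of Proposition \ref{prop:FfunctionProperties} bounds $\norm{\mc{R}(\ep)}$ by $4\norm{\Phi^1(\ep)}_{F_\vphi}$, with Theorem \ref{thm:decomposition} converting everything into the stated constant $\alpha$. Your explicit counting remark (each value of $r_x$ occurs for at most two sites) is a detail the paper leaves implicit, but it is the right justification for the volume-independence of the $J_3$ sum.
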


\begin{proof}
Suppose $x\in \Int_2(\La)$. Set $m =  \lfloor \frac{r_x}{2} \rfloor$ in an application of Lemma \ref{lem:firstlemma} to show: 
	\begin{equation*}
		\begin{split}
\norm{P(\Phi_x^1(\ep))_0P} \leq \norm{\Phi^1(\ep)}_{F_\vphi} [ \Omega( r_x/2) + 2F_\vphi( \floor{r_x/2})]
		\end{split}
	\end{equation*}
But by the decay of $\Omega$ and $F_0$, we have that the following sum is finite: 
	\begin{equation}\label{eq:third-J-constant}
		\begin{split}
J_3 = \sum _{z\in \zz} \Omega( |z|/2) + 2 F_0(\floor{|z|/2})
		\end{split}
	\end{equation}
And, summing over $x\in \Int_2(\La)$:
	\begin{equation*}
		\begin{split}
\norm{\Phi^3(\ep)} \leq \sum _{x\in\Int_2(\La)} \norm{P(\Phi_x^1(\ep))_0P} \leq \norm{\Phi^1(\ep)}_{F_\vphi} J_3
		\end{split}
	\end{equation*}
Next, it is straightforward to apply Proposition \ref{prop:FfunctionProperties} to $\mc{R}(\ep)$ to get an upper bound on the norm:
	\begin{equation*}
		\begin{split}
\norm{\mc{R}(\ep)}  \leq 4 \norm{\Phi^1(\ep)}_{F_\vphi}
		\end{split}
	\end{equation*}  
\end{proof}

Until now, all estimates have been expressed using a local bound $\norm{\Phi^\Int(\La)}_{F}$ on the strength of the bulk perturbation for fixed $\La$. In order to obtain volume independent lower bounds on the spectral gap, we use the following uniform quantity:
	\begin{equation*}
		\begin{split}
M_\Int = \sup_\La \set{ \norm{\Phi^\Int(\La)}_{F} : \diam{\La} > \max \set{ 2D, R}}
		\end{split}
	\end{equation*}
	
\begin{proposition}\label{prop:interior-stability}
There exist $\ep_\Int>0$ and constant $m>0$ such that $0\leq \ep < \ep_\Int$ and $\diam{\La}>\set{4,R}$ imply:
	\begin{equation*}
		\begin{split}
\gamma(H_\La + \ep \Phi^\Int_\La) \geq \gamma_0 - m\ep >0
		\end{split}
	\end{equation*}
The constants $\ep_\Int$ and $m$ can be taken as the following expressions:
	\begin{gather*}
m   = \bigg{(} 3J_1 + 2J_2 + C(J_3 + 8) \bigg{)} ( \norm{\eta}_{F} + M_\Int)\\
\ep_\Int  = \min \set{ 1, \frac{\gamma_0}{m}}
	\end{gather*}
\end{proposition}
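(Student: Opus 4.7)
The plan is to apply the spectral flow of Theorem \ref{thm:decomposition} to transform the perturbation and then use the form bounds of Proposition \ref{prop:preliminary-relbound} and Corollary \ref{cor:norm-boundedness} via an invariant-subspace decomposition, closing a bootstrap in $\ep$. For fixed $\La$ with $\diam{\La} > \max\set{4,R}$, set $f(\ep) = \gamma(H_\La + \ep \Phi_\La^\Int)$; continuity of the eigenvalues together with $f(0) = \gamma_0 > 0$ guarantees a maximal interval $[0, \ep^*)$ on which $f$ is strictly positive. On any compact subinterval $[0, \ep] \subset [0, \ep^*)$, the infimum $\gamma := \min_{[0,\ep]} f$ is positive, so Theorem \ref{thm:decomposition} applies with $\Psi = \Phi^\Int(\La)$ to produce a unitary $U(\ep')$, defined on $[0, \ep]$, with $\alpha_{\ep'}(H_\La + \ep' \Phi_\La^\Int) = H_\La + \Phi^1(\ep')$, preserving spectra and satisfying $[\Phi^1_x(\ep'), P] = 0$ for every $x$.

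Next I would invoke the decomposition (\ref{eq:spectral-flow-decomposition}) to write $\Phi^1(\ep) = \Phi^2(\ep) + \Phi^3(\ep) + \omega_\La(\pert{\Phi^1}(\ep))\1 + \mc{R}(\ep)$ and discard the scalar term, which shifts every eigenvalue equally and therefore does not enter the gap. Since $\Phi^1(\ep)$ commutes with $P$, both $P\mf{H}_\La$ and $Q\mf{H}_\La$, where $Q = \1 - P$, are invariant under $H_\La + \Phi^1(\ep)$, and the spectral-flow identity $P(\ep) = U(\ep) P U(\ep)^*$ then identifies $\specc_{0,\La}(\ep)$ with the spectrum of the $P$-block and $\specc_{1,\La}(\ep)$ with that of the $Q$-block. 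Using $\Phi^2 P = 0$ and $\Phi^3 Q = 0$ (immediate from the definitions in (\ref{eq:spectral-flow-decomposition})), these blocks reduce respectively to $\Phi^3 + P\mc{R}(\ep)P$ on $P\mf{H}_\La$ and to $QH_\La Q + \Phi^2 + Q\mc{R}(\ep)Q$ on $Q\mf{H}_\La$.

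I would then estimate each block separately. For the low block, operator-norm control on $\Phi^3$ and $\mc{R}$ from the proof of Corollary \ref{cor:norm-boundedness} bounds $E_0(\ep)$. For the high block, I apply Proposition \ref{prop:preliminary-relbound}'s form bound on $\Phi^2$ together with $\norm{\mc{R}(\ep)} \leq 4C(\norm{\eta}_F + M_\Int)\ep$: for unit $w \in Q\mf{H}_\La$ this gives $\ip{w, (H_\La + \Phi^2 + Q\mc{R}Q)w} \geq (1-\beta\ep)\ip{w, H_\La w} - \delta\ep - \norm{\mc{R}(\ep)}$, and combining with $\ip{w, H_\La w} \geq \gamma_0 \norm{w}^2$ yields $E_1(\ep) \geq (1-\beta\ep)\gamma_0 - \delta\ep - \norm{\mc{R}(\ep)}$. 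Subtracting the two bounds and substituting $\beta \gamma_0 = 3J_1(\norm{\eta}_F + M_\Int)$, $\delta = J_2(\norm{\eta}_F + M_\Int)$, the operator norms of $\Phi^3$ and $\mc{R}$ in terms of $C, J_3$, and the corollary's $\alpha$ produces $f(\ep) \geq \gamma_0 - m\ep$ with $m$ precisely as in the statement, valid while $\beta\ep \leq 1$.

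Finally I close the bootstrap: the inequality $f(\ep) \geq \gamma_0 - m\ep$ holds on every compact subinterval of $[0, \ep^*)$, hence on $[0, \ep^*)$ itself. If $\ep^* < \min\set{1, \gamma_0/m} = \ep_\Int$, continuity would force $f(\ep^*) \geq \gamma_0 - m\ep^* > 0$, contradicting the maximality of $\ep^*$; hence $\ep^* \geq \ep_\Int$, and all constants are volume-independent once $M_\Int$ replaces $\norm{\Phi^\Int(\La)}_F$. The main obstacle is the apparent circularity between the spectral flow, which requires a positive gap on its domain of definition, and the gap estimate, which is the desired conclusion; my plan handles this by running the estimate only on compact subintervals $[0,\ep] \subset [0, \ep^*)$, where a positive infimum for $f$ is automatic, and then passing to the limit $\ep \nearrow \ep^*$.
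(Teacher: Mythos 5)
Your proposal is correct and follows essentially the same route as the paper: the spectral flow decomposition \eq{eq:spectral-flow-decomposition}, the relative form bound of Proposition \ref{prop:preliminary-relbound} and Corollary \ref{cor:norm-boundedness} applied blockwise with respect to $P$ and $Q=\1-P$, and a continuity bootstrap in $\ep$ (the paper fixes a threshold $\gamma<\gamma_0$, takes $\ep_\La$ maximal for that threshold, and passes to a supremum over $\gamma$ at the end, while you exhaust the maximal interval of gap positivity by compact subintervals --- logically the same argument, with all constants $\gamma$-independent because $F_\vphi\leq F_0$). Your slightly sharper bookkeeping of the low block yields $m=(3J_1+J_2+C(J_3+8))(\norm{\eta}_F+M_\Int)$ rather than the paper's $(3J_1+2J_2+C(J_3+8))(\norm{\eta}_F+M_\Int)$, which still implies the stated bound.
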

	
\begin{proof}
Let $\gamma \in (0,\gamma_0)$. For fixed $\La$ with $\diam{\La}>\max\set{4,R}$, there exists $\ep_\La>0$ such that for all $0\leq \ep \leq \ep_\La$, $\gamma(H_\La+ \ep \Phi^\Int_\La) \geq \gamma$. By continuity of the eigenvalue functions, we may assume $\ep_\La$ is maximal, i.e. either $\ep_\La = 1$ or there exists $c>0$ such that for all $\mu \in (\ep_\La, \ep_\La +c)$, $\gamma(H_\La+ \mu \Phi^\Int_\La) < \gamma$. 

\medskip 

Since the gap does not close on $[0,\ep_\La]$, we use the spectral flow decomposition (\ref{eq:spectral-flow-decomposition}) to transform $H_\La+ \ep \Phi^\Int_\La$ by unitaries and a shift in the spectrum: 
	\begin{equation*}
		\begin{split}
\alpha_\ep (H_\La+ \ep \Phi^\Int_\La) - \omega_\La ( \widetilde{ \Phi^1(\ep)}) = H_\La + \Phi^2(\ep) + \Phi^3(\ep) + \mc{R}(\ep)
		\end{split}
	\end{equation*}
But by Corollary \ref{cor:norm-boundedness}, if $\ep \leq \ep_\La$, then $\Phi(\ep)=\Phi^2(\ep) + \Phi^3(\ep) + \mc{R}(\ep)$ is $H_\La$-bounded. Now, by the relation $P(\ep) = U(\ep) P(0)U(\ep)^*$ in (\ref{eq:spectralflow}), the span of the eigenvectors to the $0$-group of $H_\La + \Phi(\ep)$ is exactly $\ker(H_\La)$. So, if $\la$ is in the $0$-group, which we will denote by $\specc(0,\ep)$, then there exists a unit norm $u\in \ker(H_\La)$ such that:
	\begin{equation}\label{eq:lower-diameter}
		\begin{split}
|\la| & = |\ip{u,(H_\La+\Phi(\ep) )u}|  \leq \alpha \ep
		\end{split}
	\end{equation}
Next, define $\ep_1>0$ as the solution to $h(\ep)=\gamma$, where $h$ is defined:
	\begin{equation*}
		\begin{split}
h(\ep)=(1-\beta\ep)\gamma_0 - \delta\ep - 4C\ep ( \norm{\eta}_{F} + M_\Int) - \alpha\ep 
		\end{split}
	\end{equation*}
Set $\ep_\gamma = \min \set{ \ep_1, 1}$. Combining (\ref{eq:lower-diameter}) and (\ref{cor:norm-boundedness}), we see that if $0\leq \ep < \min \set{\ep_\gamma, \ep_\La}$, then: 
	\begin{equation*}
		\begin{split}
\gamma(H_\La(\ep)) & = \min _{v\in \ker(H_\La)^\perp: \norm{v}=1} \ip{ v, [H_\La + \Phi^2(\ep) + \mc{R}(\ep)] v} - \max \specc(0,\ep) \\
& \geq h(\ep) \\
& > \gamma		\end{split}
	\end{equation*}
By maximality, either $\ep_\La =1$ or $\gamma(H_\La+ \ep_\La \Phi^\Int_\La) = \gamma$. Hence $\ep_\gamma \leq \ep_\La$ necessarily and $\gamma(H_\La+ \ep \Phi^\Int_\La) \geq h(\ep) >\gamma$ for all $\ep < \ep_\gamma$. But now, $\gamma$ was arbitrarily smaller than $\gamma_0$. Set: 
$$\ep_{\Int}= \sup \set{\ep_\gamma: \gamma \in (0,\gamma_0)}. $$
Evidently $\ep_{\Int}$ does not depend on $\La$, and if $0 \leq \ep < \ep_{\Int}$, then: 
	\begin{equation*}
		\begin{split}
\gamma(H_\La + \ep \Phi^\Int _\La) \geq h(\ep) = \gamma_0 - m\ep  > 0 
		\end{split}
	\end{equation*}
where the constant: 
	\begin{equation*}
		\begin{split}
m =  \big{(}3J_1 + 2J_2 + C( J_3 + 8)\big{)}( \norm{\eta}_{F} + M_\Int)
		\end{split}
	\end{equation*} 
comes from rewriting the lower bound $h(\ep)$ as a linear equation of $\ep$.  
\end{proof} 
\noindent Denote by $M_D$ the following finite uniform bound on the strength of the edge perturbations:
	\begin{equation*}
		\begin{split}
M_D = \sup_\La \set{ \norm{ \Phi^D_\La }: \diam{\La}> \max\set{ 2D, R}}
		\end{split}
	\end{equation*}
We remark that $M_\Int$ and $m$ are defined in terms of $\mc{F}$-function decay, while $M_D$ is defined in terms of the operator norm. 
\begin{theorem}[Ground state gap stability for spin chains]\label{thm:spin-gap-stability}
Suppose $\eta: P_f(\zz) \to \Aloc^s$ has LTQO with $\Omega(n) \leq n^{-\nu}$, for $\nu>4$, and there exist $K>0$, $s\in (0,1]$ such that $h_\Phi$ satisfies $h_\Phi(r) \geq Kr^s$. Then there exists $\ep(\gamma_0)>0$ such that $0\leq \ep < \ep(\gamma_0)$ and $\diam{\La}>\max\set{2D,R}$ imply: 
	\begin{equation*}
		\begin{split}
\gamma(H_\La(\ep)) \geq \gamma_0 - (m+2M_D) \ep >0
		\end{split}
	\end{equation*}
The constant $ \ep(\gamma_0)$ can be taken as:
\begin{equation}\label{eq:specific-constants}
	\begin{split}
\ep(\gamma_0) & = \min \set{1, \frac{\gamma_0}{m + 2 M_D}}
	\end{split}
	\end{equation}
\end{theorem}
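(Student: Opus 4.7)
The plan is to reduce Theorem \ref{thm:spin-gap-stability} to Proposition \ref{prop:interior-stability} by treating the edge perturbation $\ep\Phi^D_\La$ as a uniformly operator-norm bounded perturbation of the bulk-stabilized Hamiltonian. Concretely, I would decompose
\[
H_\La(\ep) \;=\; \bigl(H_\La + \ep\Phi^\Int_\La\bigr) \;+\; \ep\Phi^D_\La,
\]
and apply the interior stability estimate to the first summand while controlling the second summand by its operator norm.

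First I would verify that the hypotheses of Proposition \ref{prop:interior-stability} are met: the assumed polynomial decay $\Omega(n)\leq n^{-\nu}$ with $\nu>4$ together with $h_\Phi(r)\geq Kr^s$ is exactly what ensures the series defining $J_1,J_2,J_3$ (and hence $m$) converge, so that $M_\Int<\infty$ and the spectral flow decomposition used there is valid. Second, observe that by definition of $M_D$ and the hypothesis $\diam{\La}>\max\{2D,R\}$, we have $\|\Phi^D_\La\|\leq M_D$ uniformly in $\La$.

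Next I would invoke Weyl's inequality for the perturbation $\ep\Phi^D_\La$: since $\ep\Phi^D_\La$ is self-adjoint with operator norm at most $\ep M_D$, each eigenvalue of $H_\La(\ep)$ lies within $\ep M_D$ of the corresponding eigenvalue of $H_\La+\ep\Phi^\Int_\La$ (with the same labeling by continuous curves through $\ep=0$). In particular, the maximum of $\specc_{0,\La}(\ep)$ can be raised by at most $\ep M_D$ and the minimum of $\specc_{1,\La}(\ep)$ can be lowered by at most $\ep M_D$. Combining this with the bound $\gamma(H_\La+\ep\Phi^\Int_\La)\geq \gamma_0 - m\ep$ from Proposition \ref{prop:interior-stability} gives
\[
\gamma(H_\La(\ep)) \;\geq\; \gamma_0 - m\ep - 2M_D\ep \;=\; \gamma_0 - (m+2M_D)\ep.
\]

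Finally, positivity of this lower bound requires $\ep<\gamma_0/(m+2M_D)$, and the spectral flow framework requires $\ep\leq 1$, giving $\ep(\gamma_0)=\min\{1,\gamma_0/(m+2M_D)\}$. Since $m+2M_D\geq m$, we have $\ep(\gamma_0)\leq \ep_\Int$, so applying Proposition \ref{prop:interior-stability} on the range $[0,\ep(\gamma_0))$ is legitimate. There is no real obstacle here beyond bookkeeping: all the analytic difficulty—LTQO, the spectral flow, the resolution-of-identity argument in Proposition \ref{prop:preliminary-relbound}, and the relative form bound of Corollary \ref{cor:norm-boundedness}—has already been absorbed into Proposition \ref{prop:interior-stability}. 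The only subtle point to double-check is the admissibility of Weyl's inequality at the level of labelled eigenvalue curves $\la_i(\ep)$ (i.e.\ that the labeling via $\la_i(0)$ used to define $\specc_{0,\La}(\ep)$ and $\specc_{1,\La}(\ep)$ remains consistent when we perturb $H_\La+\ep\Phi^\Int_\La$ by $\ep\Phi^D_\La$); this follows from standard analytic perturbation theory for self-adjoint matrices, using continuity of eigenvalues and the $\ep M_D$ operator-norm bound.
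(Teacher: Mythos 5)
Your proposal is correct and follows essentially the same route as the paper: the paper's proof also treats $\ep\Phi^D_\La$ as an operator-norm-bounded perturbation of $H_\La+\ep\Phi^\Int_\La$, notes that the spectrum of $H_\La(\ep)$ is contained in the $\ep\Vert\Phi^D_\La\Vert$-neighborhood of $\spec{H_\La+\ep\Phi^\Int_\La}$, and concludes $\gamma(H_\La(\ep))\geq\gamma(H_\La+\ep\Phi^\Int_\La)-2\ep M_D\geq\gamma_0-(m+2M_D)\ep$.
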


\begin{proof}
Considering $\ep \Phi^D_\La$ as a perturbation of $H+\ep \Phi^\Int_\La$, the spectrum of $H+\ep \Phi^D_\La + \ep \Phi^\Int _\La$ must be contained in the compact neighborhood: 
$$\mc{O}_\La(\ep) = \set{ r\in \rr: d(r, \spec{H + \ep \Phi^\Int_\La}) \leq \norm{ \ep \Phi^D_\La} }$$
That is, 
$$\gamma(H_\La(\ep)) \geq \gamma(H_\La + \ep \Phi^\Int_\La) - 2 \norm{ \ep \Phi^D_\La} \geq \gamma_0 - (m + 2M_D)\ep $$
\end{proof}

Since the stability theorem guarantees a $\La$-independent neighborhood of $0$ where a relative form bound of the perturbation will hold, we can also conclude stability of spectral gaps which are located higher in the spectrum. 

\begin{proposition}\label{prop:higher-gaps}
Let $T,\gamma>0$, and denote $\mathrm{res}(H_\La) = \cc \setminus \spec{H_\La}$. Suppose $\eta,[\Phi]$ satisfy the hypotheses of Theorem \ref{thm:spin-gap-stability}. There exists $\ep(\gamma,T)>0$ such that for sufficiently large $\La$ and $0\leq \ep < \ep(\gamma,T)$, if $\nu , \mu \in \spec{H_\La}$ with $(\nu, \mu) \subset \mathrm{res}(H_\La)\cap [0,T]$ and $\mu - \nu > \gamma$, then the gap between $\nu$ and $\mu$ is stable. Precisely, if we denote: 
	\begin{equation*}
		\begin{split}
\gamma(\nu,\mu,\ep) = \min \set{ \la(\ep)\in \spec{H_\La(\ep)} : \la(0)\geq \mu  } & - \max \set{ \la(\ep)\in \spec{H_\La(\ep)}: \la(0) \leq \nu} 
		\end{split}
	\end{equation*}
then: 
	\begin{equation*}
		\begin{split}
\gamma(\nu,\mu,\ep) \geq (1-p\ep)\gamma - 2(q+p T+M_D)\ep>0
		\end{split}
	\end{equation*} 
for $p,q$ defined:
	\begin{equation}
		\begin{split}
p = \frac{3}{\gamma_0}  J_1 (\norm{\eta}_{F} + M_\Int)  \hspace{5mm}
q=  ( \norm{\eta}_{F} + M_\Int) [C(J_3 + 4)+J_2] 
		\end{split}
	\end{equation}
\end{proposition}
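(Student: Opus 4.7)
The plan is to recycle the spectral-flow argument of Theorem~\ref{thm:spin-gap-stability}, but feed the resulting relative form bound into the Courant--Fischer min--max characterization of a generic pair of eigenvalues of $H_\La$ rather than only of the ground state. First, I would impose $\ep<\ep(\gamma_0)$, which by Theorem~\ref{thm:spin-gap-stability} keeps the ground state gap of $H_\La+s\Phi^\Int_\La$ open for every $s\in[0,\ep]$. This makes the spectral flow $\alpha_\ep$ associated to the bulk perturbation well-defined, and Theorem~\ref{thm:decomposition} identifies $\alpha_\ep(H_\La+\ep\Phi^\Int_\La)=H_\La+\Phi^1(\ep)$. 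Because $\alpha_\ep$ is implemented by a unitary, the two operators have identical spectrum, so I may analyze the transformed operator $H_\La+\Phi^1(\ep)$ throughout.

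Next I would combine the decomposition (\ref{eq:spectral-flow-decomposition}) with Corollary~\ref{cor:norm-boundedness}, substituting the volume-uniform constants $p\geq\beta$ and $q\geq\alpha$ (valid since $\norm{\Phi^\Int(\La)}_F\leq M_\Int$), to obtain the two-sided form bound
$$(1-p\ep)\ip{v,H_\La v}-q\ep\norm{v}^2 \leq \ip{v,(H_\La+\Phi^1(\ep)-\om_\La(\pert{\Phi^1}(\ep))\1)v} \leq (1+p\ep)\ip{v,H_\La v}+q\ep\norm{v}^2.$$
Let $E$ be the spectral projection of $H_\La$ onto $[0,\nu]$ and $k$ its rank. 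For unit $v\in\mathrm{Ran}(E)$ the spectral theorem gives $\ip{v,H_\La v}\leq\nu\leq T$, so the upper bound together with the min--max principle forces the $k$-th eigenvalue of $H_\La+\Phi^1(\ep)$ to satisfy $\la_k\leq(1+p\ep)\nu+q\ep+\om_\La(\pert{\Phi^1}(\ep))$. Symmetrically, the hypothesis $(\nu,\mu)\cap\specc(H_\La)=\emptyset$ places $\mathrm{Ran}(E)^\perp$ inside the spectral subspace for $[\mu,\infty)$, so $\ip{v,H_\La v}\geq\mu$ on that complement and the $(k+1)$-st eigenvalue satisfies $\la_{k+1}\geq(1-p\ep)\mu-q\ep+\om_\La(\pert{\Phi^1}(\ep))$. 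The scalar shift cancels upon subtraction, and using $\mu-\nu\geq\gamma$ and $\nu\leq T$ I arrive at a gap of at least $(1-p\ep)\gamma-2(pT+q)\ep$ for the bulk-transformed operator.

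Finally, I would reinstate the edge term $\ep\Phi^D_\La$ as a uniformly norm-bounded additive perturbation: since $\norm{\ep\Phi^D_\La}\leq\ep M_D$, Weyl's inequality shifts each eigenvalue of $H_\La(\ep)$ from the corresponding eigenvalue of $H_\La+\ep\Phi^\Int_\La$ by at most $\ep M_D$, costing at most $2\ep M_D$ on the gap and producing the claimed inequality $(1-p\ep)\gamma-2(q+pT+M_D)\ep$. Choosing $\ep(\gamma,T)=\min\{\ep(\gamma_0),\,\gamma/[p\gamma+2(pT+q+M_D)]\}$ makes the right-hand side strictly positive. The step I expect to be most delicate is the bookkeeping of eigenvalue labels: I have to verify that the $k$-th and $(k+1)$-st eigenvalues of $H_\La+\Phi^1(\ep)$ really are the two endpoints of $\gamma(\nu,\mu,\ep)$, i.e.\ that no eigenvalue originally $\leq\nu$ can cross one originally $\geq\mu$ on the allowed range of $\ep$. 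This will follow by continuity of the $\la_j$ combined with the fact that $\ep(\gamma,T)$ is chosen so the gap stays strictly positive throughout, which preserves the labeling.
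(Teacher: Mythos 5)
Your argument is correct, but it takes a genuinely different route from the paper. The paper also passes through the spectral flow and the relative form bound of Corollary~\ref{cor:norm-boundedness} exactly as you do, but then, instead of min--max, it sets $z=(\nu+\mu)/2$, writes the polar decomposition $R_z=U|R_z|$ of the unperturbed resolvent, bounds $\norm{|R_z|^{1/2}U^*\Phi(\ep)|R_z|^{1/2}}\leq q\ep\norm{R_z}+p\ep(1+|z|\norm{R_z})$, and uses a Neumann-type expansion of $R_z(\ep)$ to conclude that $d(z,\spec{H_\La+\Phi(\ep)})$ stays bounded below, which yields the same linear-in-$\ep$ estimate on the gap. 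The resolvent route has the advantage that it needs no bookkeeping of eigenvalue labels: it shows directly that a whole neighborhood of the midpoint of $(\nu,\mu)$ remains in the resolvent set, so the two spectral groups cannot merge, and the continuity/no-crossing issue you flag at the end simply does not arise. Your Courant--Fischer route is more elementary (no polar decomposition, no operator Neumann series) and gives the slightly sharper one-sided information that the lower group moves up by at most $(1+p\ep)\nu+q\ep$ and the upper group down by at most the symmetric amount; the price is precisely the labeling argument, which you handle correctly by noting that the gap lower bound holds for all $s\in[0,\ep]$ with $\ep<\ep(\gamma,T)$, so by continuity of the eigenvalue branches no eigenvalue originating in $[0,\nu]$ can overtake one originating in $[\mu,\infty)$. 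Both proofs land on the same constants $p=\beta$ and $q=\alpha$ (with $M_\Int$ in place of $\norm{\Phi^\Int(\La)}_F$) and treat the edge term $\ep\Phi^D_\La$ identically as a norm-bounded shift of size $\ep M_D$.
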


\begin{proof}
Let $\Phi(\ep)$ be defined as in Proposition \ref{prop:interior-stability}, for $0 \leq \ep < \ep_\Int.$ By Proposition \ref{cor:norm-boundedness}, for all $u\in \mf{H}_\La$:
	\begin{equation*}
		\begin{split}
|\ip{u, \Phi(\ep) u} | \leq p\ep \ip{ u, H _\La u} + q\ep \norm{u}^2 
		\end{split}
	\end{equation*}
Let $z= \frac{\nu + \mu}{2}$ and denote $R_\zeta (\ep') = (\zeta - H_\La - \Phi(\ep'))^{-1}$, with $R_\zeta = R_\zeta(0)$. Let $U$ denote the polar unitary such that $R_z = U |R_z|$.  Since $R_z$ is self-adjoint, $|R_z|U^* = U|R_z|$, and so for unit norm $u$: 
	\begin{equation}\label{eq:neumann-radius-of-convergence}
		\begin{split}
\sup_{\norm{w}=1}| \ip{w,|R_z|^{1/2} U^* \Phi(\ep) |R_z|^{1/2} u} |  & \leq \norm{ |R_z|^{1/2} \Phi(\ep) |R_z|^{1/2} } \\ & \leq \sup_{\norm{v}=1} \bigg{[} q\ep \norm{ |R_z|^{1/2} v}^2 + p\ep \ip{ v, H_\La |R_z| v}   \bigg{]}   
	\end{split}
	\end{equation}
That is, for sufficiently small $\ep$:
\begin{equation*}
		\begin{split}
\norm{ |R_z|^{1/2} U^* \Phi(\ep)  |R_z|^{1/2} } \leq q\ep \norm{R_z}+ p \ep (1+|z| \norm{R_z}) < 1
		\end{split}
	\end{equation*}
and by the expansion: 
	\begin{equation*}
		\begin{split}
R_z(\ep) = U |z-H|^{1/2} ( \1 - |R_z|^{1/2} U^* \Phi(\ep) |R_z|^{1/2} ) |z-H|^{1/2}
		\end{split}
	\end{equation*}
we derive the lower bound: 
	\begin{equation*}
		\begin{split}
 d(z, \spec{H_\La + \Phi(\ep)}) \geq (1-p \ep) \gamma - 2 (q+p T) \ep .
		\end{split}
	\end{equation*}
Hence for sufficiently small $\ep$, independently of sufficiently large $\La$,
	\begin{equation*}
		\begin{split}
\gamma(\nu,\mu, \ep) \geq (1-p \ep) \gamma - 2(q+pT + M_D)\ep > 0
		\end{split}
	\end{equation*} 
\end{proof}

\subsection{The thermodynamic limit}\label{sec:TL}

So far, we have studied finite spin chains and shown that, under a set of general assumptions, the group of eigenvalues continuously connected 
to the ground state energy of a finite frustration-free Hamiltonian remains separated by a gap from the rest of the spectrum, uniformly in the length 
of the chain and as long as the perturbations are not too large. We now want to show that the states associated with this group of eigenvalues 
all converge to a ground state of the model in the thermodynamic limit. The lower bound for the gap of finite chains is then also a lower bound for
the gap above those ground states of the infinite chain.

For concreteness, we consider Hamiltonians of the form \eq{eq:deform}, where $\eta$ satisfies the assumption set out in Section \ref{sec:assumptions}, 
and $[\Phi]=\{ \Phi^\Lambda\mid \Lambda\in\cP_f(\Ir)\}$ is a family of perturbations given in terms of interactions $\Phi,\Phi^b\in \cB_F$ and a few parameters 
that define the boundary conditions. Specifically, consider intervals $\Lambda\subset\Ir$ of the form $[-a,b]$, $a,b\geq 0$, and for any $D\geq 0$, let 
$\Int_D(\La)=[-a+D,b-D]$. Let $\partial$ denote the triple of parameters $(D_1,D_2,s)$, $D_1,D_2\geq 0, s\in [0,1]$ and consider
\be\label{curveham}
H^\partial_\Lambda(\epsilon) = \sum_{X\subset\Lambda}\eta(X) + \epsilon \left( \sum_{X\subset\Lambda_{D_1}}\Phi(X) + s \sum_{X\subset(\Lambda\setminus\Lambda_{D_2})}\Phi^b(X)\right).
\ee
This form of the Hamiltonian covers a broad range of perturbations and boundary conditions. The dynamics generated by $H^\partial_\Lambda(\epsilon)$ is the one-parameter 
group of automorphism $\tau_t^{H^\partial_\Lambda(\epsilon)}$.

As explained in Section \ref{sec:appendixTL}, if we take, for example, $\Lambda_n=[-a_n,b_n] $, $s_n\in[0,1]$ arbitrary, and $D_{1,n}, D_{2,n}$ such that $\min(a_n,b_n)-\max(D_{1,n}, D_{2,n})\to\infty$, 
then there is a strongly continuous group of automorphisms $\tau^{\epsilon}_t, t\in\Rl$ on $\cA_\Ir$ such that
\begin{equation}\label{cont_bd_iv} 
\lim_{n\to\infty} \| \tau_t^{H^{\partial_n}_{\Lambda_n}(\epsilon)}(A) - \tau^\epsilon_t(A) \| =0, \mbox{ for all } A\in \cA^{\rm loc}_\Ir.
\end{equation} 

If we take $\epsilon\in [0,\epsilon(\gamma_0))$, with $\epsilon(\gamma_0)$ as in Theorem \ref{thm:spin-gap-stability}, we have a uniform gap
separating the lower portion of the spectrum of  $H^{\partial_n}_{\Lambda_n}(\epsilon)$, denoted by $\specc_{0,\La_n}(\ep)$ in \eq{def_sp0}, 
and the rest of the spectrum. The following results provides an estimate of $\diam{\specc_{0,\La_n}(\ep)}$. For simplicity, let $\La_n = [-n,n]$ for the remainder of the section. 

\begin{lemma}\label{lem:sublinear-locality}
In the assumptions of above, choose $s_n=0$ and put $D_{1,n} = D_n$. Then, there exists a 
function $\mc{G}: [0,\infty) \to [0,\infty)$ which decreases to $0$ as $n$ tends to infinity and, for large enough $n$: 
	\begin{equation*}
		\begin{split}
\diam{\specc_{0,\La_n}(\ep)} \leq \ep \mc{G}(D_n)
		\end{split}
	\end{equation*}
Precisely, we may take:
	\begin{equation*}
		\begin{split}
\mc{G}(r) = \sum _{k=\floor{r}}^\infty \tilde{F}( \floor{k/4}) + 16 C ( M_{\Int} + \norm{\eta}_{F})   [ \Omega( k/4) + F_0( \floor{k/4})]
		\end{split}
	\end{equation*}
where $\tilde{F}$ is an $\mc{F}$-function depending on $\norm{\eta}_{F}$ and $M_\Int$.  	
\end{lemma}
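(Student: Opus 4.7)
My plan is to apply the spectral flow of Theorem \ref{thm:decomposition} to $\Psi_n = \sum_{X \subset \La_{D_n}} \Phi(X)$, the local Hamiltonian of the perturbation in the $s_n = 0$ case. Because $\ep < \ep(\gamma_0)$, Theorem \ref{thm:spin-gap-stability} guarantees a uniform gap, so the spectral flow is well-defined and $\alpha_\ep(H^{\partial_n}_{\La_n}(\ep)) = H_{\La_n} + \Phi^1(\ep)$. From $P(\ep) = U(\ep) P U(\ep)^*$ and $P H_{\La_n} = 0$, the block $\specc_{0, \La_n}(\ep)$ is unitarily conjugate to the spectrum of $P \Phi^1(\ep) P$ restricted to $\Ran P$, so
$$\diam{\specc_{0, \La_n}(\ep)} \leq 2 \|P \Phi^1(\ep) P - \omega_{\La_n}(\Phi^1(\ep)) P\|.$$

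Next I apply the decomposition \eqref{eq:spectral-flow-decomposition}. Since $P \Phi^2(\ep) P = 0$ and the difference of constant shifts equals $\omega_{\La_n}(\mc{R}(\ep))$, the right-hand side reduces to $2 \| \Phi^3(\ep) + P(\mc{R}(\ep))_0 P \|$. I will estimate the two pieces $\|\Phi^3(\ep)\|$ and $\|P(\mc{R}(\ep))_0 P\|$ separately so that each tends to zero as $D_n \to \infty$.

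For the bulk piece $\Phi^3(\ep) = \sum_{x \in \Int_2(\La_n)} P(\Phi_x^1(\ep))_0 P$, I invoke Lemma \ref{lem:firstlemma} site-by-site with $m_x$ of order $r_x/2$, combined with the global estimate $\|\Phi^1(\ep)\|_{F_\vphi} \leq C \ep (\|\eta\|_F + M_\Int)$ from Theorem \ref{thm:decomposition}. For sites with $r_x \geq D_n$, each contribution is controlled by $\Omega(r_x/2) + F_0(r_x/2)$; indexing such sites by a distance parameter $k$ and summing over $k \geq D_n$ reproduces the second summand of $\mc{G}(D_n)$.

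The sharper input needed for the remaining contributions --- both the sites $x$ with $r_x < D_n$ in $\Phi^3(\ep)$ and the four boundary terms $\Phi_a^1, \Phi_{a+1}^1, \Phi_{b-1}^1, \Phi_b^1$ composing $\mc{R}(\ep)$ --- is the quasi-locality of $\Phi^1(\ep)$ with respect to the support of $\Psi_n$. Since $\Psi_n$ is supported strictly inside $\La_{D_n}$, the Lieb--Robinson decay built into the generator \eqref{eq:generator-of-spectralflow} forces $\|\Phi_x^1(\ep)\|$ itself to decay in an auxiliary $\mc{F}$-function $\tilde{F}$, depending on $\|\eta\|_F$ and $M_\Int$, of the distance from $x$ to $\La_{D_n}$. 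Summing these contributions over the appropriate distance range $k \geq D_n$ produces the first summand of $\mc{G}(D_n)$. The main obstacle is extracting this site-wise locality bound, since Theorem \ref{thm:decomposition} only provides the global $\mc{F}$-norm estimate; it requires retracing the construction of $\Phi^1(\ep)$ through \eqref{eq:generator-of-spectralflow}, exploiting the rapid decay of $w_\gamma$ together with the finite Lieb--Robinson velocity of the dynamics generated by $H^{\partial_n}_{\La_n}(\ep)$ to push the effective support of each $\Phi_x^1(\ep)$ back to a neighborhood of $\La_{D_n}$.
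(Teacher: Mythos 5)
Your proposal follows essentially the same route as the paper: reduce $\diam{\specc_{0,\La_n}(\ep)}$ to $2\Vert P\Phi^1(\ep)_0P\Vert$ via the spectral flow decomposition, control the deep-interior sites by LTQO and the $\mc{F}$-norm (yielding the $\Omega+F_0$ summand), and control the near-boundary sites by re-entering the construction of $\Phi^1(\ep)$ through the generator $D(\ep)$ and using the quasi-locality of $(\alpha_\ep-\mathrm{id})$ relative to the support of the bulk perturbation (yielding the $\tilde F$ summand). You also correctly identify the one nontrivial step -- extracting a site-wise decay bound from a construction that a priori only gives a global $\mc{F}$-norm estimate -- which is exactly what the paper does with the partial-trace difference operators $\Delta_{b_x(k)}$.
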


\begin{proof}
Suppose $n$ is sufficiently large so that $D_n/2 >R$, the range of the interaction $\eta$. By the spectral flow decomposition in (\ref{eq:spectral-flow-decomposition}), 
	\begin{equation*}
		\begin{split}
\diam{\specc_{0,\La_n}(\ep)} & \leq 2 \norm{P\Phi^1(\ep)_0P  } \\
& = 2 \norm{ \sum \set{  P \Phi^1_x(\ep)_0 P : x\in \La_n}} \\
& \leq 2(A + B) 
		\end{split}
	\end{equation*}
for $A,B$ defined by complementary regions of the interval $\La_n = [-n, n]$:
	\begin{equation*}
		\begin{split}
A & = \sum \set{\norm{P\Phi^1_x(\ep)_0 P}~|~ \forall x\in \La_n :  \floor{ (-n+D_n)/2}  \leq  x \leq \floor{(n - D_n)/2 } }\\
B & =\sum \set{\norm{P\Phi^1_x(\ep)_0 P} ~|~ \forall x\in \La_n  : -n \leq x < \floor{(-n + D_n)/2  } \text{ or } \floor{(n - D_n)/2} < x \leq n}
		\end{split}
	\end{equation*}
By applying LTQO and $\mc{F}$-norm bounds,
	\begin{equation*}
		\begin{split}
\norm{A} \leq 8 C ( M_{\Int} + \norm{\eta}_{F}) \ep \sum _{k=\floor{D_n}}^\infty [ \Omega( k/4) + F_0( \floor{k/4})]
		\end{split}
	\end{equation*}
where $F_0$ is the shifted base $\mc{F}$-function from (\ref{eq:shift-ffunction}). For the norm bound on $B$, let $\Delta_{X(n)}$ denote the partial trace difference operators from the proof of Theorem \ref{thm:decomposition} (c.f. Theorem 6.3.4 in \cite{young:2016}), defined with respect to an enlargement of $X \subset \La_n$. Suppose 
$ -n \leq x < \floor{(-n + D_n)/2  } \text{ or } \floor{(n - D_n)/2} < x \leq n$. Denote $d_x(n) = d(x, \Int_{D_n}(\La_n))$. By the locality assumption on $\Phi^{\La_n}$ and the fact that $d_x(n)/2>R$, if $ k< d_x(n)/2 $, then, in the notation of the proof of Theorem \ref{thm:decomposition}:
	\begin{equation*}
		\begin{split}
\Phi^1(b_x(k),\ep) = \Delta _{b_x(k)} ((\alpha_\ep - id)\circ \mc{F}_{w_{\gamma_0,\ep}}(h_x) )
		\end{split}
	\end{equation*} 	
and so:
	\begin{equation*}
		\begin{split}
\norm{P \Phi^1_x(\ep)_0 P} & \leq \norm{ \sum _{k=1}^{ \floor{d_x(n)/2}} P \Phi^1(b_x(k),\ep)P} + \sum _{k= \floor{d_x(n)/2}+1}^{R_x} \norm{P \Phi^1(b_x(k),\ep)_0P} \\
& \leq \norm{ (\alpha_\ep - \mathrm{id} )\circ \mc{F}_{w_{\gamma_0,\ep}}(h_x)} +8 C ( M_{\Int} + \norm{\eta}_{F}) \ep F_0(  \floor{d_x(n)/2})
		\end{split}
	\end{equation*}
Using the quasi-locality of the generator $iD(\ep)$ of the spectral flow unitaries: 
	\begin{equation*}
		\begin{split}
(\alpha_\ep^{\La_n} - \mathrm{id}) \circ \mc{F}_{w_{\gamma_0},\ep}(h_x) & = \int _0 ^\ep i \alpha_s^{\La_n} \bigg{(} [D(s), \sum _{k=1}^{R_x} \Delta_{b_x(k)} ( \mc{F}_{w_{\gamma_0,\ep}}(h_x)] \bigg{)} ~ \mathrm{d}s
		\end{split}
	\end{equation*}
and there exists a $\mc{F}$-function $\tilde{F}$, independent of $\Lambda_n$, such that: 
	\begin{equation*}
		\begin{split}
\norm{ (\alpha-\mathrm{id})\circ \mc{F}_{w_{\gamma_0,\ep} } (h_x) } \leq \ep \tilde{F}( \floor{ d_x(n)/2})
		\end{split}
	\end{equation*}
Hence:
	\begin{equation*}
		\begin{split}
\norm{B}  \leq \sum _{k= \floor{D_n} }^\infty  \ep \tilde{F}( \floor{k/4}) +  8 C ( M_{\Int} + \norm{\eta}_{F}) \ep F_0(  \floor{k/4})
		\end{split}
	\end{equation*}
Let $\mc{G}(r) = \sum _{k=\floor{r}}^\infty  \tilde{F}( \floor{k/4}) + 16 C ( M_{\Int} + \norm{\eta}_{F_{\Phi'}})   [ \Omega( k/4) + F_0( \floor{k/4})] $. Then:
$$\diam{ \specc_{0,\La_n}(\ep)} \leq \ep \mc{G}(D_n)$$
\end{proof}

Let $P_n(\ep)$ denote the spectral projection of $H^{\partial_n}_{\Lambda_n}(\epsilon)$ associated with the isolated portion of the spectrum $\specc_{0,\La_n}(\ep)$ and define
the set of states of $\cA_{\Lambda_n}^s $, $\cS_n(\ep)$, with support in the range of $P_n(\ep)$:
$$
\cS_n(\ep) = \{ \omega \mid \omega \mbox{ is a state on } \cA_{\Lambda_n}^s \mbox{ with } \omega(P_n(\ep)) =1\}.
$$
We now consider the thermodynamic limits of these states:
$$
\cS(\ep) = \{\omega \mbox{ state on } \cA_{\Ir}^s \mid  \exists (n_k) \mbox{ increasing and } \omega_k \in \cS_{n_k}(\ep) \mbox{ s.t. } \lim_k \omega_k(A) = \omega(A), 
\forall  A \in \cA_\Ir^{\rm loc}\}.
$$

\begin{lemma} Let $c_n(\ep) = \diam{ \specc_{0,\La_n}(\ep)}$. Then\\
(i) for all $\omega \in \cS_n(\ep)$ and $A\in\cA_{\Lambda_n}^s$, we have
$$
\Re \omega(A^*[H^{\partial_n}_{\Lambda_n}(\epsilon), A]) \geq -c_n(\ep) \Vert A\Vert^2,
\mbox{ and } \left| \Im \omega(A^*[H^{\partial_n}_{\Lambda_n}(\epsilon), A])\right| \leq c_n(\ep)\Vert A\Vert^2.
$$
(ii) If $s_n=0$ and $D_{1,n}$ is such that $\lim_n[n-D_{1,n}]
=\lim_n  D_{1,n} =\infty$, then, for all $\omega \in \cS(\ep)$ and $A\in\cA^{\rm loc}_{\Ir}$, we have
$$
\lim_{n\to\infty}  \omega(A^*[H^{\partial_n}_{\Lambda_n}(\epsilon), A])\geq 0.
$$
\end{lemma}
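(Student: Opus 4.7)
The plan is to prove (i) directly by exploiting that the density matrix of any $\omega \in \cS_n(\ep)$ is supported on the range of $P_n(\ep)$, and then deduce (ii) by passing to the thermodynamic limit.

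\textbf{Part (i).} Since $\cA_{\La_n}^s$ is a finite matrix algebra, write $\omega(B) = \trr(\rho B)$ for a density matrix $\rho$. The condition $\omega(P_n(\ep)) = 1$ together with $\rho \geq 0$ and $\1 - P_n(\ep) \geq 0$ forces $\rho = P_n(\ep)\rho = \rho P_n(\ep)$. Set $H := H^{\partial_n}_{\La_n}(\ep)$ and $E_0 := \min \specc_{0,\La_n}(\ep)$. Since $P_n(\ep)$ is the spectral projection of $H$ onto $\specc_{0,\La_n}(\ep) \subset [E_0, E_0 + c_n(\ep)]$, $P_n(\ep)$ commutes with $H$, and $K := P_n(\ep) H P_n(\ep) - E_0 P_n(\ep)$ satisfies $0 \leq K \leq c_n(\ep) P_n(\ep)$. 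Using $\rho = P_n(\ep)\rho$ and cyclicity of the trace,
\begin{equation*}
\omega(A^* A H) \;=\; \trr(A^* A H \rho) \;=\; E_0\, \omega(A^* A) \;+\; \trr(A^* A K \rho),
\end{equation*}
with $|\trr(A^* A K \rho)| \leq \|A\|^2 \|K\rho\|_1 \leq c_n(\ep)\|A\|^2$. Since $H \geq E_0\1$, the quantity $\omega(A^*(H - E_0)A) \geq 0$, and hence
\begin{equation*}
\omega(A^*[H,A]) \;=\; \omega(A^*(H - E_0)A) \;-\; \trr(A^* A K \rho).
\end{equation*}
Taking real and imaginary parts yields both bounds of (i), since the first summand is a nonnegative real number and the second has modulus at most $c_n(\ep)\|A\|^2$.

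\textbf{Part (ii).} Fix $A \in \cA^{\rm loc}_\Ir$ and write $\omega = w^*\text{-}\lim_k \omega_k$ with $\omega_k \in \cS_{n_k}(\ep)$. Because $s_n = 0$, the perturbation in \eq{curveham} is confined to $\La_{n,D_{1,n}}$; the $\mc{F}$-decay of $\eta$ and $\Phi$, the hypotheses $D_{1,n}\to\infty$ and $n - D_{1,n}\to\infty$, and the construction in \eq{cont_bd_iv} together yield the norm convergence
\begin{equation*}
\bigl\| [H^{\partial_n}_{\La_n}(\ep), A] - \delta^\ep(A) \bigr\| \longrightarrow 0,\quad n \to \infty,
\end{equation*}
where $\delta^\ep$ denotes the generator of $\tau^\ep_t$. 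Setting $B_k := A^*[H^{\partial_{n_k}}_{\La_{n_k}}(\ep), A]$, the norm convergence $B_k \to A^*\delta^\ep(A)$ in the quasi-local algebra, together with the extension of $\omega_k \to \omega$ from $\cA^{\rm loc}_\Ir$ to the quasi-local algebra by norm density, gives
\begin{equation*}
\lim_k \omega_k(B_k) \;=\; \omega(A^*\delta^\ep(A)) \;=\; \lim_n \omega(A^*[H^{\partial_n}_{\La_n}(\ep), A]).
\end{equation*}
Applying (i) to each $\omega_k$ and $H^{\partial_{n_k}}_{\La_{n_k}}(\ep)$ yields $\Re\omega_k(B_k) \geq -c_{n_k}(\ep)\|A\|^2$ and $|\Im\omega_k(B_k)| \leq c_{n_k}(\ep)\|A\|^2$; by Lemma \ref{lem:sublinear-locality}, $c_{n_k}(\ep) \leq \ep\mc{G}(D_{1,n_k}) \to 0$, so the common limit is a nonnegative real number.

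The main obstacle is establishing the norm convergence of the commutators $[H^{\partial_n}_{\La_n}(\ep), A]$, which requires $\mc{F}$-function control of the tails of the non-finite-range perturbation and relies crucially on $s_n = 0$ to eliminate the boundary contribution that does not stabilize in the thermodynamic limit. Everything else reduces to state-theoretic manipulations and the already-established decay $c_n(\ep) \to 0$.
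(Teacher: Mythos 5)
Your proposal is correct and takes essentially the same route as the paper: part (i) is the elementary spectral argument (the paper leaves it at "elementary", and your density-matrix computation with $0\leq K\leq c_n(\ep)P_n(\ep)$ and $H\geq E_0\1$ is a valid way to carry it out), and part (ii) rests on exactly the two facts the paper cites, namely norm convergence of the commutators $[H^{\partial_n}_{\Lambda_n}(\epsilon),A]$ for local $A$ and $c_n(\ep)\to 0$ from Lemma \ref{lem:sublinear-locality}. No gaps.
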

\begin{proof}
The proof of (i) is elementary and the proof of (ii) follows by noting that the additional assumptions imply that 
the sequence $[H^{\partial_n}_{\Lambda_n}(\epsilon), A]$ converges in norm and that $\lim c_n (\ep) =0$ by Lemma \ref{lem:sublinear-locality}.
\end{proof}

In other words, the conditions of part (ii) of the lemma imply that the states in $\cS_n(\epsilon)$ converge to ground states of the infinite system. In Section 
\ref{sec:appendixTL} it is explained that the spectral flow automorphisms, like the time evolution of the system, converge to the same limit
regardless of the choice of boundary condition $\partial_n$.  Since we have the relation $P_n(0) = \alpha^{\Lambda_n,\partial_n}_\ep (P_n(\ep))$ we also have
$$
\cS_n(\ep)=\cS_n(0)\circ \alpha^{\Lambda_n,\partial_n}_\ep,
$$
and as an easy consequence of the convergence (see \cite{bachmann:2012}[Lemma 5.6]) we then also have
$$
\cS(\ep)=\cS(0)\circ \alpha_\ep.
$$

Since the same $\alpha_\epsilon$ relates limiting states regardless of the boundary conditions, for example with  constant sequence $\partial_n=\partial$, for any $n$, these
limiting states must be the same and, hence, also ground states of the infinite systems defined by the dynamics $\tau_t$. The same conclusion then holds for the lower bound
on the spectral gap above these ground states (see \cite{QLBII} for the details).

\section{Stability of spectral gap in fermion chains}

\subsection{Quasi-local maps}
Suppose $\ALa$ is a local algebra of observables which is $*$-isomorphic to $\ALa^s$. Let $\phi : \ALa \to \ALa^s$ denote a possible $*$-isomorphism. Given a local Hamiltonian $H_\La$ in $\ALa$, $\phi$ unitarily transforms $H_\La$ into a Hamiltonian $H^s_\La = \phi( H_\La)$ of the spin algebra. Using an exhaustive family of conditional expectations $\set{\theta_{X_i}: X_i \subset X_{i+1} }$, $H^s_\La$ can again be realized as the sum of local operators through a telescoping sum:
	\begin{equation*}
		\begin{split}
\forall B \in \ALa: \hspace{2mm} \phi(B) = \theta_{X_1}(\phi(B)) + \sum _{j=1} ^{N-1} \theta_{X_{j+1}} ( \phi(B)) - \theta _{X_j}(\phi(B))
		\end{split}
	\end{equation*}
The proof of Theorem \ref{thm:decomposition} uses this method of decomposition in the setting where $\phi$ is a \textit{quasi-local} $*$-automorphism, and the $\theta_{X_j}$ are normalized partial trace over increasing metric balls $X_j=b_x(j)$. The quasi-locality property, defined below, guarantees the transformed local interaction will have decay comparable to that of the original interaction. 

\medskip 

In this section, we prove stability of the spectral gap for even Hamiltonians in the CAR algebra of fermions satisfying $\zz_2$-LTQO. To do this, we will use the Jordan-Wigner isomorphism to transform even fermion interactions into spin interactions in a way that respects the parity symmetry. 

\medskip 

\begin{definition}
Let $\La\in P_f(\zz)$ be a nonempty interval. A linear map $\alpha: \mf{A}_\La^s \to \mf{A}_\La^s$ is \textit{quasi-local} if there exist constants $C>0, p\in \nn$, and decay function $g:[0,\infty) \to [0,\infty)$ such that if $X,Y\subset \La$ are disjoint subsets, then for all $A\in \mf{A}_X^s $ and $B\in \mf{A}_Y^s$, the following bounds hold: 
\begin{equation}
    \begin{split}
    \norm{\alpha(A)} \leq C|X|^{p} \norm{A} ~~~~~~~~~~~\norm{[\alpha(A),B]} \leq C\norm{A}\norm{B}|X|^{p} g(d(X,Y))
    \end{split}
\end{equation}
\end{definition}

\begin{example}
The local Heisenberg dynamics $\tau^\La : U \subseteq \rr \to \Aut(\ALa^s)$ generated by an interaction $\Psi$ with a finite $\mc{F}$-norm is a collection of quasi-local maps parametrized by $t$. Let $F$ be an $\mc{F}$-function such that $\norm{\Psi}_F < \infty$, and denote by $\nu_\Psi$ the Lieb-Robinson velocity. There exists a constant $C_\Psi>0$ such that for $X,Y \in P_f(\La)$ disjoint sets and $A\in \mf{A}_X^s, B \in \mf{A}_Y^s$, the following Lieb-Robinson bound holds: 
\begin{equation*}
\begin{split}
\norm{ [ \tau_t^\La(A), B]} \leq C_\Psi (e^{\nu_\Psi|t|}-1) \norm{A}\norm{B} \sum _{x\in X, y\in Y} F(|x-y|)  
\end{split}
\end{equation*}
But by properties of the $\mc{F}$-function: 
\begin{equation*}
\begin{split}
\sum _{x\in X, y\in Y} F(|x-y|) \leq |X| \sup \set{ \sum _{\substack{y\in \zz\\ |x-y|\geq d(X,Y) }} F(|x-y|): x\in \zz} < \infty 
\end{split}
\end{equation*}
So take $C_{t} = C_\Psi (e^{\nu_\Psi|t|}-1)$, $p_{t}=1$ and: $$g_{t}(n) = \sup \set{ \sum_{\substack{y\in \zz\\ |x-y|\geq n}} F(|x-y|):x\in \zz}.$$ In particular, the spectral flow automorphism $\alpha^\La : [0, \ep_\La]\to \Aut(\ALa^s)$ is quasi-local \cite{bachmann:2012}.  
\end{example}
Lastly, we specify the normalized partial trace maps. Let: $$X(n)=\set{ z\in \La: \exists x\in X, ~ |z-x|\leq n}$$ 
\ldots denote an enlargement of $X\in P_f(\La)$. Denote the normalized partial trace of the state space over $\La\setminus X(n)$ by: 
\begin{equation*}
	\begin{split}
	\theta_{X(n)} = \frac{1}{\dim \mf{H}_{\La \setminus X(n)}} \trr_{\mf{H}_{\La 	\setminus X(n)}}.
	\end{split}
\end{equation*}
For convention, we will take the trace over $\mf{H}_\emptyset$ as the identity map. Then define, for all $A \in \mf{A}^s_\La$:
\begin{equation*}
\begin{split}
 \Delta_{X(0)} (A) = \theta_{X(0)}(A), ~~ \Delta_{X(n)}(A) = \theta_{X(n)}(A) - \theta_{X(n-1)}(A).
\end{split}
\end{equation*} 

\subsection{Transformation of even fermion interactions} 

Recall, we denote by $\A^+_\La \subset \A^f_\La$ the even operators of the CAR algebra over $\La$. We say $\beta \in \text{Aut}(\A_\La)$ is even if it preserves the parity. Even interactions are defined similarly. We also denote $S^{\pm}= \frac{1}{2}( \sig^1 \pm i \sig^2)$. The following definition is the well-known Jordan-Wigner transformation, which gives a $C^*$-isomorphism of CAR and spin algebras. 

\medskip 

\begin{definition}
Consider the case $\mf{A}_{\set{x}}^s = M_2(\cc)$. Let $\vtheta_\La: \mf{A}_\La^f \to \mf{A}^s_\La$ denote the \textit{Jordan-Wigner map} defined by: 
	\begin{equation*}
		\begin{split}
a(x) \mapsto \exp\bigg{(} -i\pi \sum _{j<x} S^+_j S^-_j \bigg{)} S_x^- \hspace{10mm} a^*(x) \mapsto \exp \bigg{(} i\pi \sum _{j<x} S_j^+ S_j^- \bigg{)} S_x^+
		\end{split}
	\end{equation*}
\end{definition}
The Jordan-Wigner transformation extends the notion of parity to the spin $1/2$ algebra. We say $A \in \A^s_\La$ is even if $\vartheta_\La^{-1}(A) \in \A^+_\La$. 
\begin{proposition}\label{lem:support}
Let $X \subset \La$ be any subinterval. 
	\begin{equation*}
		\begin{split}
& 1. ~ \text{If }A \in \A^+_X, \text{ then }\vtheta_\La(A) \in \A^s_X \\
& 2. ~ \text{If }\alpha:\A_\La^s\to \A_\La^s \text{ is an even quasi-local map, then }\Delta_{X(n)}\circ \alpha \text{ is also even.}  
		\end{split}
	\end{equation*}
\end{proposition}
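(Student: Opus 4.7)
My plan is to handle the two parts separately, with part 1 being a direct computation using the Jordan-Wigner string structure and part 2 being a clean structural argument about partial traces.

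For part 1, I would first reduce to even monomials. Any $A \in \A^+_X$ is a linear combination of products $m = a^{\#}(x_1) \cdots a^{\#}(x_{2k})$ with all $x_i \in X$ and $2k$ even (here $a^{\#}$ stands for either $a$ or $a^*$). Applying $\vartheta_\La$ termwise, each factor becomes $K_{x_i} T_{x_i}$ where $K_{x_i} = \prod_{j<x_i} (-\sigma^z_j)$ (or its analogue) is the Jordan-Wigner string and $T_{x_i} \in \{S^+_{x_i}, S^-_{x_i}\}$ is supported at the single site $x_i$. The key move is to use the fact that $\sigma^z_j$ commutes with everything at sites $\ne j$ and anticommutes with $S^{\pm}_j$, to push all the strings to the left at the cost of overall signs. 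This produces an overall sign times $K_{x_1} K_{x_2} \cdots K_{x_{2k}} \cdot T_{x_1} T_{x_2} \cdots T_{x_{2k}}$. The product of $T$'s is plainly supported in $X$. The product of strings reduces, using $(\sigma^z_j)^2 = \1$, to $\prod_{j \in S} \sigma^z_j$ where $S$ is the set of $j$ lying in an odd number of intervals $\{j : j < x_i\}$; sorting the $x_i$ in increasing order, $S$ is the union of alternating half-open intervals between consecutive sorted $x_i$, so $S \subseteq [\min_i x_i, \max_i x_i)$. Since $X$ is an interval and all $x_i \in X$, $S \subseteq X$, proving $\vartheta_\La(m) \in \A^s_X$.

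For part 2, since composition and difference of parity-preserving (i.e., even) maps are even, and $\Delta_{X(n)} = \theta_{X(n)} - \theta_{X(n-1)}$, it suffices to show that each normalized partial trace $\theta_{Y}$ is an even map. After Jordan-Wigner identification, the parity automorphism on $\A_\La^s$ is inner, implemented by conjugation with $U_\La = \prod_{x \in \La} \sigma^z_x$, which factorizes as $U_\La = U_{Y^c} \otimes U_Y$ under the tensor decomposition $\mf{A}_\La^s = \mf{A}_{Y^c}^s \otimes \mf{A}_Y^s$. For any $B \in \mf{A}_\La^s$, the identity
\begin{equation*}
\theta_Y(U_\La B U_\La^*) = U_{Y^c} \theta_Y(U_Y B U_Y^*) U_{Y^c}^* = U_{Y^c} \theta_Y(B) U_{Y^c}^*
\end{equation*}
follows because $\theta_Y$ is $\A_{Y^c}^s$-bilinear (so it commutes with the conjugation by $U_{Y^c}$) and because conjugating the $Y$-factor by a unitary in $\A_Y^s$ leaves the trace over $Y$ invariant by cyclicity. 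Since the right-hand side lies in $\A_{Y^c}^s$, on which $U_\La$ and $U_{Y^c}$ act identically, this says $\theta_Y$ intertwines the parity automorphism; hence $\theta_Y$ is even. Composing with the even map $\alpha$ gives the conclusion.

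I expect the only genuine bookkeeping difficulty to be the sign tracking in part 1 when moving the strings past each other and past the local factors $T_{x_i}$; once it is observed that the final operator is a product of $\sigma^z$'s on $[\min_i x_i, \max_i x_i) \subseteq X$ together with $T$-factors at sites in $X$, the support claim is immediate and the overall sign is irrelevant for the support statement. Part 2 is essentially a clean algebraic fact once the tensor-product factorization of the parity operator is noted.
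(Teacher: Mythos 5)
Both parts of your argument are correct. Part 1 is essentially the paper's own computation: the authors likewise reduce to an ordered even monomial and observe that the Jordan--Wigner strings survive only on the gaps between consecutive $x_i$'s, writing $\vartheta_\La(A)$ as a product of paired terms $S^\flat_{x_k}S^\flat_{x_{k-1}}\exp\bigl(\pm i\pi\sum_{j=x_{k-1}}^{x_k-1}S^+_jS^-_j\bigr)$; your ``push the strings left and count coverage parity'' bookkeeping lands in the same place, and you correctly use the evenness of the monomial to kill the string on sites to the left of $\min_i x_i$. Part 2 takes a genuinely different route. The paper writes the normalized partial trace as an average $\theta_Z(\cdot)=4^{-|\La\setminus Z|}\sum_\iota u(\iota)^*[\,\cdot\,]u(\iota)$ over conjugations by products of single-site Pauli unitaries, and evenness follows because each $u(\iota)$ is homogeneous for the parity grading, so each conjugation commutes with the parity automorphism. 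You instead use the tensor factorization of the parity unitary across $\A^s_{\La}=\A^s_{Y^c}\otimes\A^s_Y$ together with the bimodule property and trace-invariance of the partial trace; this is arguably more direct, while the paper's averaging formula doubles as an explicit representation of $\theta_Z$ that is convenient elsewhere. One cosmetic caveat: in the paper's convention $\theta_Y$ is the normalized trace over $\La\setminus Y$ with values in $\A^s_Y$, so in your displayed identity the roles of $Y$ and $Y^c$ should be interchanged; the correct chain is $\theta_Y(U_\La BU_\La^*)=U_Y\theta_Y(U_{Y^c}BU_{Y^c}^*)U_Y^*=U_Y\theta_Y(B)U_Y^*=U_\La\theta_Y(B)U_\La^*$. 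The argument is symmetric under this swap, so nothing breaks.
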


\begin{proof}
Suppose $A$ is a monomial $ca^\#(x_1)\cdots a^\#(x_{2n})$. By the CAR, we may assume $x_j \leq x_{j+1}$. A direct computation shows the first part of the lemma holds for the even monomials which generate $\A^+_X$: 
\begin{equation*}
	\begin{split}
\vtheta_\La(A) = c \prod _{k=2}^{2n} S^\flat _{x_{k}} S^\flat_{x_{k-1}} \exp\bigg{(} \pm i\pi \sum _{j=x_{k-1}}^{x_{k}-1}S^+_j S^-_j      \bigg{)}  \in \mf{A}_X^s. 
	\end{split} 
\end{equation*}
Next, we show that the partial trace is an even map. For any $x\in \La$, define the following four unitary operators:
	\begin{equation*}
		\begin{split}
u^{(0)}_x = \1,~ u^{(1)}_x = \sig_x^1, ~u^{(2)}_x = i\sig_x^2,~ u^{(3)}_x = \sig^3_x 
		\end{split}
	\end{equation*}
Now, let $Z \subset \La$ and $B \otimes C \in \A^s_Z \otimes \A^s_{\La \setminus Z}$. Denote by $I_{\La \setminus Z}$ the set of finite sequences $\iota: \La \setminus Z \to \set{0,1,2,3}$. Define: 
 	\begin{equation*}
		\begin{split}
u(\iota) = \prod _{z\in Z} u_z^{(\iota_z)}. 
		\end{split}
	\end{equation*}
Using elementary properties of trace and locality in the spin algebra: 
	\begin{equation}\label{eq:partialtrace}
		\begin{split}
\frac{1}{\dim(\mf{H}_{\La \setminus Z})} B\otimes \tr{C}\1 = \frac{1}{4^{|\La \setminus Z|}} \sum _{\iota \in I_{\La \setminus Z}} u(\iota)^* [B\otimes C ]u(\iota) \in \A^s_Z
		\end{split}
	\end{equation}
The relation in (\ref{eq:partialtrace}) uniquely defines the partial trace, hence: 
	\begin{equation*}
		\begin{split}
\theta_Z(\cdot) = \frac{1}{4^{|\La \setminus Z|}} \sum _{\iota \in I_{\La \setminus Z}} u(\iota)^* [\cdot] u(\iota). 
		\end{split}
	\end{equation*}
The second part of the lemma follows from this formula. 
\end{proof}

In the following, we will assume the interactions are supported on intervals: 

\begin{definition}
An interaction $\Phi$ is \textit{supported on intervals} if $\Phi(X) \not = 0$ only if $X = [a,b]$ for some $a,b\in \zz$. 
\end{definition}

Any interaction can be ``regrouped" into one with interval support, and while the methods to do this are neither new nor canonical, we record here a simple way without changing the local Hamiltonians, at the expense of rate of decay. 

\begin{proposition}\label{prop:interval-support}
Suppose $I\subset \zz$ is an interval and  $\Psi: P_f(I)\to \Aloc$ is an interaction. Then there exists an interaction $\Phi : P_f(I) \to \Aloc$, supported on intervals, such that for all finite intervals $\La \subset I$, the local Hamiltonians are equal:
	\begin{equation*}
		\begin{split}
\Phi_\La = \sum_{X\subset \La} \Phi(X) = \Psi_\La
		\end{split}
	\end{equation*}  
If $\Psi$ is an unperturbed interaction with uniform bound $M$, range $R$, and local gap $\gamma_0$, then so is $\Phi$, with uniform bound $2^{R}M$ and the same range and local gap. 

\medskip 

Furthermore, if $\norm{\Psi}_F<\infty$, where $F$ is the $\mc{F}$-function in (\ref{eq:ffunction}), and $h(r) \geq Kr^s$ for some $K>0$ and $s\in (0,1]$, then $\norm{\Phi}_{G}\leq \norm{\Psi}_F$ for the $\mc{F}$-function defined:
	\begin{equation*}
		\begin{split}
G(r) & = e^{-\frac{1}{2}h(r)} \frac{C_\Phi}{(1+cr)^\kappa}, \\
C_\Phi & = L\sum _{n=1}^{\infty} n e^{- \frac{1}{2} h(n)}
		\end{split}
	\end{equation*}
\end{proposition}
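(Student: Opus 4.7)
The plan is to define $\Phi$ via the \emph{interval hull} of each finite subset and then verify each listed property either from the definition or by a sum-swap. Concretely, for $X \in P_f(I)$ write $[X] = [\min X, \max X]$, and for each finite interval $Y \subset I$ set
\begin{equation*}
\Phi(Y) = \sum_{X \in P_f(I) \,:\, [X] = Y} \Psi(X),
\end{equation*}
and $\Phi(Y) = 0$ whenever $Y$ is not an interval. Since $X \subset [X]$, every summand lies in $\A_Y$, and self-adjointness is inherited from $\Psi$; by construction $\Phi$ is supported on intervals. The agreement of local Hamiltonians is then a single swap: because $\La$ is an interval, $X \subset \La$ forces $[X] \subset \La$, hence
\begin{equation*}
\sum_{Y \subset \La} \Phi(Y) = \sum_{X \subset \La} \Psi(X) = \Psi_\La.
\end{equation*}

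For the unperturbed properties, finite range is immediate from $\diam{[X]} = \diam{X}$, so $\diam{Y} > R$ implies every $X$ in the sum defining $\Phi(Y)$ has $\diam{X} > R$ and hence $\Psi(X)=0$. The uniform bound follows by counting: if $Y=[a,b]$ and $[X]=Y$, then $X$ must contain both $a$ and $b$ and is otherwise arbitrary, giving at most $2^{|Y|-2} \leq 2^{R-1}$ summands, each of norm at most $M$, so $\norm{\Phi(Y)} \leq 2^{R-1} M \leq 2^R M$. Frustration-freeness and the uniform local gap $\gamma_0$ are preserved for free, since the local Hamiltonians coincide with those of $\Psi$ and hence have identical spectra.

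The main work is the $\mc{F}$-norm estimate. For $x,y \in I$ with $n = |x-y|$, swapping the sum gives
\begin{equation*}
\sum_{Y \ni x,y} \norm{\Phi(Y)} \leq \sum_{Y \ni x,y} \sum_{X : [X]=Y} \norm{\Psi(X)} = \sum_{X \,:\, [X] \ni x,y} \norm{\Psi(X)}.
\end{equation*}
The condition $[X] \ni x,y$ says $\min X \leq \min(x,y)$ and $\max X \geq \max(x,y)$, so parametrizing $p = \min X$, $q = \max X$ and dropping the extra constraint $\min X = p$, $\max X = q$ yields
\begin{equation*}
\sum_{X \,:\, [X]\ni x,y} \norm{\Psi(X)} \leq \sum_{p \leq \min(x,y)} \sum_{q \geq \max(x,y)} \sum_{X \ni p,q} \norm{\Psi(X)} \leq \norm{\Psi}_F \sum_{k,l \geq 0} F(n+k+l).
\end{equation*}
I reorganize the last double sum as $\sum_{m \geq 0}(m+1) F(n+m)$ and split the exponential factor: monotonicity of $h$ gives $h(n+m) \geq h(n)$ and $h(n+m) \geq h(m)$, so $2 h(n+m) \geq h(n) + h(m)$ and therefore $e^{-h(n+m)} \leq e^{-h(n)/2} e^{-h(m)/2}$. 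Combined with $(1+c(n+m))^{-\kappa} \leq (1+cn)^{-\kappa}$, this cleanly separates the $n$-dependence into $e^{-h(n)/2} L (1+cn)^{-\kappa}$ and leaves a constant factor $\sum_{m\geq 0}(m+1) e^{-h(m)/2}$, which converges because $h(r) \geq Kr^s$ with $s \in (0,1]$. A reindexing identifies this constant (up to an index shift handled via monotonicity of $h$) with $C_\Phi/L$, yielding $\norm{\Phi}_G \leq \norm{\Psi}_F$.

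The main obstacle I foresee is not any single step but rather matching the constant $C_\Phi = L\sum_{n\geq 1} n e^{-h(n)/2}$ exactly: the natural estimate produces $\sum_{m\geq 0}(m+1) e^{-h(m)/2}$ instead, and one must use monotonicity of $h$ carefully to absorb the index shift without loss. Everything else—the combinatorial rearrangement by interval hull and the subadditivity-plus-monotonicity trick—is essentially forced by the structure of the claim.
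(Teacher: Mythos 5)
Your construction (regrouping each $\Psi(X)$ onto its interval hull $[X]$) is exactly the paper's definition of $\Phi$, and your verification of the local-Hamiltonian identity, the unperturbed properties, and the $\mc{F}$-norm bound follows the same combinatorial count and the same monotonicity-plus-splitting trick, so the proposal is correct and essentially identical to the paper's proof. The index shift you flag is handled precisely as you suggest: for $|x-y|\geq 1$ one has $h(n+m)\geq h(m+1)$, so $e^{-h(n+m)}\leq e^{-h(n)/2}e^{-h(m+1)/2}$ reproduces $C_\Phi$ exactly (the residual $x=y$ case is a cosmetic edge case present in the paper's own argument as well).
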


\begin{proof}
If $I \subsetneq \zz$, then we may extend $\Psi$ to $\zz$ by $\Psi(Z)=0$ for $Z \not \subset I$, and by construction, $\Phi$ defined in terms of the extension will restrict to an interaction on $I$. So we may assume $I=\zz$. We will define $\Phi$ by induction on the diameter $n$ of intervals $[k,k+n]$. When $n=0,1$ define: 
	\begin{equation*}
		\begin{split}
\Phi(\set{x}) = \Psi(\set{x}) \text{ and }\Phi( \set{x,x+1}) = \Psi(\set{x,x+1})
		\end{split}
	\end{equation*}
For larger $n$, define: 
	\begin{equation*}
		\begin{split}
\Phi( [k,k+n]) = \sum \set{ \Psi(X): X\subset [k,k+n], \diam{X}=n}
		\end{split}
	\end{equation*}	
By construction, $\Phi_\La = \Psi_\La$. Now, suppose $\Phi$ is an unperturbed interaction with constants $M,R, \gamma_0$. Since $\Phi_{b_\La(x,n)}=\Psi_{b_\La(x,n)}$ for all $x$ and $n$, $\Phi$ and $\Psi$ have the same local gap. Similarly, it is clear that $\Phi$ and $\Psi$ have the same range $R$, and if $\diam{[a,b]} \leq R$: 
	\begin{equation*}
		\begin{split}
\norm{ \Phi( [a,b]) } \leq 2^{R} M
		\end{split}
	\end{equation*}
Now, suppose $\Phi$ is some interaction, not necessarily finite range, with $\norm{\Phi}_F$. For fixed $k\in \zz$ and $n\geq 0$, by Proposition \ref{prop:FfunctionProperties}:
	\begin{equation*}
		\begin{split}
\norm{ \Phi( [k,k+n]) } \leq \sum _{\substack{X\in P_f(\zz)\\ k,k+n\in X}} \norm{\Psi(X)} \leq \norm{\Psi}_F F(n) 
		\end{split}
	\end{equation*}
So for $x,y\in \zz$: 
	\begin{equation*}
		\begin{split}
\sum_{\substack{k,n\\ x,y\in [k,k+n]}} \norm{ \Phi( [k,k+n])} & = \sum _{n\geq |x-y|} \sum _{\substack{k\\ x,y \in [k,k+n]}} \norm{ \Phi([k,k+n])} \\ 
& \leq \sum _{n\geq |x-y|} \sum _{\substack{k\\ x,y\in [k,k+n]}} \norm{\Psi}_F F(n) \\
&\leq  \norm{\Psi}_F\sum_{n \geq |x-y|} (n+1-|x-y|) e^{-h(n)} \frac{L}{(1+cn)^\kappa} \\
& \leq\norm{\Psi}_F\bigg{(}\sum_{n=1}^\infty ne^{-\frac{1}{2}h(n)} \bigg{)} e^{-\frac{1}{2}h(|x-y|)} \frac{L}{(1+c|x-y|)^\kappa}
		\end{split}
	\end{equation*}
That is, 
	\begin{equation*}
		\begin{split}
\norm{\Phi}_G= \sup_{x,y\in \zz}\set{ \sum _{\substack{X\in P_f(\zz)\\ x,y\in X}} \frac{\norm{\Phi(X)}}{G(|x-y|)} }& \leq \norm{\Psi}_F
		\end{split}
	\end{equation*}
\end{proof}

\begin{proposition}\label{prop:fermion-to-spin}
Suppose $\Psi:P_f(I)\to \Aloc^f$ is an even interaction supported on intervals. Then there exists an even interaction $\Phi: P_f(I)\to \Aloc^s $ such that for any $\La \subset I$: 
	\begin{equation*}
		\begin{split}
\vtheta_\La(\Psi_\La) =  \Phi_\La
		\end{split}
	\end{equation*}
If $\Psi$ satisfies a finite $\mc{F}$-norm for some $F$ of the form (\ref{eq:ffunction}), then so does $\Phi$. If $\Psi$ is an unperturbed interaction, then so is $\Phi$ for the same constants. 
\end{proposition}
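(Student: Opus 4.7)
The natural thing is to define $\Phi$ term-by-term via $\Phi(X) = \vtheta_X(\Psi(X))$ when $X$ is an interval (and zero otherwise), and then check that this actually works across all the volumes $\La \supset X$. Concretely, the plan has four steps.

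First, I would verify that $\Phi(X)$ is well-defined and lands in $\mf{A}_X^s$. Since $\Psi$ is supported on intervals and even, $\Psi(X) \in \A^+_X$, so by Proposition \ref{lem:support} we already have $\vtheta_X(\Psi(X)) \in \mf{A}_X^s$. By construction $\Phi(X) = \vtheta_X(\Psi(X))$ is even (its $\vtheta_X$-preimage is in $\A^+_X$), and self-adjointness is preserved because $\vtheta_X$ is a $*$-isomorphism. Finite-range support is inherited from $\Psi$.

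Second, and this is the main content, I need to show that for any $\La \supset X$ we have $\vtheta_\La(\Psi(X)) = \Phi(X)$ as elements of $\mf{A}_\La^s$, so that summing over $X \subset \La$ and using linearity of $\vtheta_\La$ yields the desired identity $\vtheta_\La(\Psi_\La) = \Phi_\La$. For this I would decompose $\Psi(X) \in \A^+_X$ into a linear combination of even monomials $c\,a^\#(x_1)\cdots a^\#(x_{2n})$ with $x_1 \le \cdots \le x_{2n}$ in $X$, and invoke the explicit formula exhibited inside the proof of Proposition \ref{lem:support}:
\begin{equation*}
\vtheta_\La\bigl(c\,a^\#(x_1)\cdots a^\#(x_{2n})\bigr) = c \prod_{k=2}^{2n} S^{\flat}_{x_k} S^{\flat}_{x_{k-1}} \exp\!\Bigl(\pm i\pi \sum_{j=x_{k-1}}^{x_k-1} S^+_j S^-_j\Bigr).
\end{equation*}
The crucial observation is that the right-hand side only involves sites in $[x_1,x_{2n}] \subset X$ and has a form independent of $\La$: the Jordan-Wigner strings between consecutive creation/annihilation operators cancel in pairs thanks to evenness, leaving no residual string reaching the left end of $\La$. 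Hence $\vtheta_\La(\Psi(X))$ equals $\vtheta_X(\Psi(X)) = \Phi(X)$ regardless of the ambient volume.

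Third, the preservation of quantitative properties is straightforward. Since $\vtheta_X$ is an isometric $*$-isomorphism, $\|\Phi(X)\| = \|\Psi(X)\|$ for every interval $X$; this immediately gives $\|\Phi\|_F \le \|\Psi\|_F$ for any $\mc{F}$-function $F$ (with equality once one also notes $\Phi$ vanishes on non-intervals, just as $\Psi$ does). The uniform bound $M$ and range $R$ of an unperturbed interaction transfer identically. Frustration-freeness and the uniform local gap follow from the fact that for $\La \supset [a,b]$ (with $b-a \ge R$) the local Hamiltonians $\Psi_{[a,b]}$ and $\Phi_{[a,b]}$ are unitarily conjugate via $\vtheta_{[a,b]}$, so they share their spectra.

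The main obstacle is the second step: one really must check that the Jordan-Wigner image of an even local observable is $\La$-independent. This is essentially the content already isolated in Proposition \ref{lem:support}, but one needs to observe carefully that the parity structure of the monomials forces all long Jordan-Wigner strings to telescope, so nothing outside $X$ survives; everything else in the proposition is bookkeeping once this is in hand.
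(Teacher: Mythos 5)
Your proposal is correct and follows essentially the same route as the paper: the paper also defines $\Phi(X)=\vartheta(\Psi(X))$ and reduces everything to the volume-independence of the Jordan--Wigner image of even observables (phrased there as the compatibility $\iota_{\La_0,\La}\circ\vartheta_{\La_0}=\vartheta_\La\circ\iota_{\La_0,\La}$ on $\A^+_{\La_0}$, verified on even monomials exactly as in your step two), then uses isometry for the $\mc{F}$-norm and the unitary implementation of $\vartheta_\La$ for the frustration-free and gap properties.
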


\begin{proof}
For $\La_0 \subset \La$, let $\iota_{\La_0, \La }$ denote the inclusion $\A^f_{\La_0} \hookrightarrow \A^f_\La$. If $A \in \A_{\La_0}^+$, then by expanding in an even generating set of monomials we see:
	\begin{equation*}
		\begin{split}
\iota_{\La_0, \La} \circ \vartheta_{\La_0}(A) = \vartheta_{\La} \circ \iota_{\La_0, \La}(A)
		\end{split}
	\end{equation*}
So there exists an injective $*$-morphism $\vartheta:\bigcup \A_\La^+ \to \Aloc^s$ which extends every $\vartheta_\La$, from which we define $\Phi(X) = \JW{\Psi(X)}$. By Proposition \ref{lem:support}, this is a well-defined interaction which is also supported on intervals. Evidently $\Phi$ is an even interaction, i.e. $\vartheta^{-1}(\Phi(X)) $ is even for any $X$. 

$\vartheta$ is isometric, and for the $\mc{F}$-function $F$: 
	\begin{equation*}
		\begin{split}
\norm{\Psi}_F = \sup _{x,y} \sum _{\substack{X\in P_f(\zz)\\ x,y\in X}} \frac{\norm{\vartheta(\Psi(X))}}{F(|x-y|)} = \norm{\Phi}_F
		\end{split}
	\end{equation*}
Now suppose $\Psi$ is an unperturbed interaction. Then evidently $\Phi$ is uniformly bounded. $\Phi$ is frustration free and uniformly locally gapped since, for any $\La$, there exists a unitary $Q_\La : \mf{H}_\La \to \mf{F}_\La$ such that for $A\in \A_\La^f$:
	\begin{equation*}
		\begin{split}
\vartheta(A) = \vartheta_\La(A) = Q^*_\La A Q_\La
		\end{split}
	\end{equation*}
Since $\vartheta$ is an isometry which preserves support for even observables, and $Q_\La$ is unitary, $\Phi$ has the same uniform bound, range, and local gap as $\Psi$.  
\end{proof}

\begin{theorem}[Ground state gap stability for fermion chains]\label{thm:fermion-gap-stability}
There exist $\ep_{\gamma_0}'>0$ and constant $m_D'$ such that $0\leq \ep < \ep_{\gamma_0}'$ and $\diam{\La}>\max\set{2D,R}$ implies: 
	\begin{equation*}
		\begin{split}
\gamma(H_\La(\ep)) \geq \gamma_0 - m_D'\ep > 0
		\end{split}
	\end{equation*}
The constants $m_D'$ and $\ep_{\gamma_0}'$ can be explicitly determined by the expressions in (\ref{eq:specific-constants}). 
\end{theorem}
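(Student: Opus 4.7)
The strategy is to reduce the fermion stability problem to the already-established spin stability Theorem \ref{thm:spin-gap-stability} via the Jordan-Wigner transformation, with care taken at every step that evenness is preserved, since this is what converts the $\zz_2$-LTQO hypothesis into a usable LTQO hypothesis for the transformed spin system.

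First, I would invoke Proposition \ref{prop:interval-support} to replace $\eta$ and each $\Phi^\La$ by interactions supported on intervals, without changing any local Hamiltonian. The regrouping produces an $\mc{F}$-function $G$ of the same form as $F$ but with weight $\tfrac12 h$ and a modified prefactor $C_\Phi$; the unperturbed constants $M, R, \gamma_0$ are unchanged, and evenness is inherited since a sum of even terms is even. Next, I would apply Proposition \ref{prop:fermion-to-spin} to obtain even spin interactions $\widetilde{\eta} = \JW{\eta}$ and $\widetilde{\Phi}^\La = \JW{\Phi^\La}$ on $\Aloc^s$, with the same range, uniform bound, local gap, and an $\mc{F}$-norm bound with respect to $G$. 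For every $\La$ the unitary identity $\vartheta_\La(H_\La(\ep)) = \widetilde{H}_\La(\ep)$ gives $\spec{H_\La(\ep)} = \spec{\widetilde{H}_\La(\ep)}$, so it suffices to prove the spin gap estimate for $\widetilde{H}_\La(\ep)$.

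To apply the machinery of Section 3 to $\widetilde{H}_\La(\ep)$, the key observation is that the spectral flow generator \eq{eq:generator-of-spectralflow} for the perturbation $\widetilde{\Phi}^\Int(\La)$ is built from $e^{isH}\widetilde{\Phi}e^{-isH}$ and thus stays in the even spin subalgebra; consequently $\alpha_\ep$ restricts to an automorphism of the even sector, and each ball term $\Phi^1(b_x(k),\ep)$ produced by Theorem \ref{thm:decomposition} is an even spin observable, i.e. the image under $\vartheta$ of an even fermion observable in $\A^+_{b_\La(x,k)}$. The operators appearing in the decomposition \eq{eq:spectral-flow-decomposition}, namely $\Phi^2(\ep)$, $\Phi^3(\ep)$ and $\mc{R}(\ep)$, are accordingly even, so the only instances in which LTQO is invoked, namely Lemmas \ref{lem:firstlemma} and \ref{lem:secondlemma} and the operator-norm control in Proposition \ref{prop:final-decomposition}, only test $\widetilde{\eta}$ against even local observables. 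These estimates are precisely what $\zz_2$-LTQO of $\eta$ supplies after transport through $\vartheta$, since $\vartheta$ is an isometric $*$-map of $\bigcup_\La \A_\La^+$ onto the even spin observables and $\widetilde{P}_X = \vartheta_\La(P_X)$.

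With these adjustments in place, the inductive argument of Proposition \ref{prop:interior-stability} and the boundary-perturbation step of Theorem \ref{thm:spin-gap-stability} run verbatim on $\widetilde{H}_\La(\ep)$, yielding $\gamma(\widetilde{H}_\La(\ep)) \geq \gamma_0 - m_D'\ep$ for the constants $m, M_\Int, M_D$ now computed from $\widetilde{\eta}, \widetilde{\Phi}$ with the $\mc{F}$-function $G$ in place of $F$; setting $\ep'(\gamma_0) = \min\{1, \gamma_0/(m + 2M_D)\}$ via \eq{eq:specific-constants} gives the conclusion, since $\gamma(H_\La(\ep)) = \gamma(\widetilde{H}_\La(\ep))$. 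The only nontrivial point is the evenness-preservation check for the spectral flow and its decomposition, which is where the $\zz_2$ reduction of the LTQO hypothesis is honestly used; everything else is isomorphism-chasing and a change of $\mc{F}$-function.
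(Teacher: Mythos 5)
Your proposal is correct and follows essentially the same route as the paper: regroup onto intervals via Proposition \ref{prop:interval-support}, transport through the Jordan--Wigner isomorphism via Proposition \ref{prop:fermion-to-spin}, and then verify that the spectral flow decomposition preserves evenness so that $\zz_2$-LTQO suffices to run the Section 3 machinery. The paper makes the evenness check slightly more explicit by writing each ball term $\Phi^1(b_\La(x,n),\ep)$ as a composition of even quasi-local maps with the partial-trace differences $\Delta_{X(n)}$, which are even by Proposition \ref{lem:support}, but this is the same argument you sketch.
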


\begin{proof}
By Proposition \ref{prop:interval-support}, we assume $\eta$ and $\Phi^\La = \Phi $ are supported on intervals. Proposition \ref{prop:fermion-to-spin} implies the existence of spin interactions $\eta^S$ and $\Phi^S$ with the same uniform bound, range, local gap $\gamma_0$ and decay. 

\medskip 

Let $\gamma \in (0,\gamma_0)$ and $D\in \nn$ be a chosen distance from the boundary, uniform in the volume, and consider fixed $\La$ with sufficiently large diameter. By Theorem \ref{thm:decomposition}, the spectral flow decomposes the local Hamiltonian $H_\La(\ep)$ of $\eta^S + \ep \Phi^S$: 
	\begin{equation*}
		\begin{split}
\alpha_\ep^\La(H_\La+ \ep \Phi_\La) & = H_\La + \sum_{x\in \La} \Phi^1_x(\ep) = H_\La + \Phi^2(\ep) + \Phi^3(\ep) + \mc{R}(\ep) + \omega_\La( \widetilde{\Phi^1(\ep)})
		\end{split}
	\end{equation*}

Since $\vartheta_\La$ is implemented by some unitary, $\eta^S$ has $\zz_2$-LTQO for the same decay function $\Omega$. So to apply the norm boundedness argument in Section 3, it suffices to argue that $\Phi^1(b_\La(x,n),\ep)$ is even. 

\medskip 

But the proof of Theorem \ref{thm:decomposition} in \cite{young:2016} guarantees the existence of even interactions $\Psi_i: P_f(\La)\to \A_\La^s$, $i=1,2,3$, and quasi-local maps $\mc{K}_{i}^{(\ep)}: \A_\La^s \to \A_\La^s$ such that: 
	\begin{equation*}
		\begin{split}
\Phi^1(b_\La(x,n),\ep) = \Delta_{b_\La(x,n)} \circ \mc{K}_1^{(\ep)} &( \Psi_1(\set{x})) +  \sum_{k=1}^n \ep \Delta_{b_\La(x,n)}\circ \mc{K}_2^{(\ep)} ( \Psi_2(b_\La(x,k)))  \\
&  + \Delta_{b_\La(x,n)} \circ \mc{K}_3^{(\ep)} (\Psi_3(b_\La(x,k)))
		\end{split}
	\end{equation*}
The $K_i^{(\ep)}$ are defined in terms of the spectral flow automorphism and are also even maps. Hence, by Lemma \ref{lem:support}, $\Phi^1(b_\La(x,n),\ep)$ must also be even, since the even observables form a subalgebra.   
\end{proof}

\section{Example of even Hamiltonian satisfying stability hypotheses}

Here we describe an example of an interaction of the CAR algebra which satisfies the stability hypotheses of Theorem \ref{thm:spin-gap-stability}. Let $\mc{X}=\set{ f_i : i \in \mc{B}}$ and $\mc{Y}=\set{g_j: j \in \mc{B}}$ be two collections of vectors in $\ell^2(\zz)$ such that: 

\medskip

(i) $\mc{X}\cup \mc{Y}$ is an orthonormal basis for $\ell^2(\zz)$. 

\medskip 

(ii) There exist $R\geq 0$ and collections $\set{x_i:i\in \mc{B}}$, $\set{y_j: j\in \mc{B}}$ such that for all $i,j$: 
\begin{gather*}
\text{supp}(f_i) \subset b(x_i, R) \hspace{5mm} \text{supp}(g_j) \subset b(y_j, R) \\
i \not = j \text{ implies } b(x_i, R) \cap b(x_j,R) = \emptyset = b(y_i,R)\cap b(y_j,R)
\end{gather*}

\noindent Furthermore, denote: $\mc{X}_W  = \set{ f_i : \text{supp}(f_i) \subset W}$ and $\mc{Y}_W = \set{ g_j : \text{supp}(g_j) \subset W}$. We will also assume: 
\medskip 

(iii) There exists a diameter $N_0$ such that for all intervals $\La$, $\diam{\La}>N_0$ implies $\mc{X}_\La \not = \emptyset$ and $\mc{Y}_\La \not = \emptyset $.

\begin{definition}
Let $\eta : P_f(\zz) \to \Aloc^f$ be the finite-range interaction defined by: 
\begin{equation}\label{eq:orbitals}
\begin{split}
\eta( b(x_i, R)) = \1 - a^*(f_i)a(f_i) \hspace{20mm} \eta(b(y_j,R))=a^*(g_j)a(g_j)
\end{split}
\end{equation} 
\end{definition}

\begin{lemma}
Suppose $\La$ is an interval such that $diam(\La)>N_0$. Then $H_\La $ is non-negative, uniformly gapped and frustration free. 
\end{lemma}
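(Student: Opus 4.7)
The plan is to recognize $H_\La$ as a finite sum of mutually commuting orthogonal projections, from which all three conclusions follow essentially at once. The key structural observation is that, since $\mc{X} \cup \mc{Y}$ is an orthonormal basis of $\ell^2(\zz)$, the smeared operators $a(f_i), a^*(f_i), a(g_j), a^*(g_j)$ again satisfy the CAR, with $\{a(f_i), a^*(f_j)\} = \delta_{ij}$, $\{a(g_i), a^*(g_j)\} = \delta_{ij}$, and $\{a(f_i), a^*(g_j)\} = 0$. Consequently, each $n_f^i := a^*(f_i)a(f_i)$ and each $n_g^j := a^*(g_j)a(g_j)$ is an orthogonal projection, and any two of these number operators commute.

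By assumption (ii), $n_f^i \in \A_{b(x_i,R)}^f \subset \A_\La^f$ for $f_i \in \mc{X}_\La$, and similarly $n_g^j \in \A_\La^f$ for $g_j \in \mc{Y}_\La$, so
$H_\La = \sum_{f_i \in \mc{X}_\La}(\1 - n_f^i) + \sum_{g_j \in \mc{Y}_\La} n_g^j$
is a finite sum of pairwise commuting orthogonal projections in $\A_\La^f$. Non-negativity is then immediate. Since commuting projections are simultaneously diagonalizable, $\spec{H_\La}$ consists of non-negative integers, so any nonzero eigenvalue is at least $1$. Assumption (iii) together with $\diam{\La} > N_0$ ensures that $\mc{X}_\La$ and $\mc{Y}_\La$ are nonempty; in particular $H_\La$ has nonzero eigenvalues, and we obtain the uniform local gap $\gamma_0 = 1$.

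For frustration-freeness, I would exhibit an explicit zero-energy vector. Extend $\mc{X}_\La \cup \mc{Y}_\La$ to an orthonormal basis of $\ell^2(\La)$, and define the Slater-type vector $\Psi := \big(\prod_{f_i \in \mc{X}_\La} a^*(f_i)\big)\ket{0} \in \mf{F}_\La$, where $\ket{0}$ is the Fock vacuum. By (ii), each $a^*(f_i)$ in the product lies in $\A_\La^f$, so $\Psi$ is a well-defined element of $\mf{F}_\La$. A direct CAR computation gives $n_f^i \Psi = \Psi$ for every $f_i \in \mc{X}_\La$; for $g_j \in \mc{Y}_\La$, the orthogonality $\ip{g_j, f_i} = 0$ yields $\{a(g_j), a^*(f_i)\} = 0$, and anticommuting $a(g_j)$ through the product of creation operators down to the vacuum gives $a(g_j)\Psi = 0$, hence $n_g^j \Psi = 0$. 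Combining, $H_\La \Psi = 0$, so $\ker(H_\La) \supsetneq \set{0}$.

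There is no real obstacle here. The only care required is the bookkeeping of CAR signs when pushing $a(g_j)$ past the $a^*(f_i)$'s, and checking that every operator involved lies in $\A_\La^f$; both are immediate from the support conditions in assumption (ii).
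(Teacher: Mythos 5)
Your proof is correct and follows essentially the same route as the paper: both rest on recognizing each interaction term as an orthogonal projection onto (non-)occupation of one of the orthonormal modes $f_i,g_j$, and both exhibit a Slater-determinant zero-energy vector (your $\Psi$ is the paper's $\psi_\emptyset=\phi_\La$). The only difference is that the paper goes on to identify the \emph{entire} kernel as $\mathrm{span}(\psi_X : X\subset[1,p])$ --- a stronger fact it uses later in the LTQO verification --- while your argument, which spells out the commuting-projection and integer-spectrum reasoning the paper dismisses as ``evident,'' proves exactly what the lemma states.
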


\begin{proof}
Let $(f_{n_1},\ldots,f_{n_\La})$ and $(g_{m_1},\ldots,g_{m_\La})$ be the collections of vectors whose support is contained in $\La$. If necessary, complete the list to an orthonormal basis of $\ell^2(\La)$ with $(h_1,\ldots, h_{p})$, $p= |\La| - n_\La - m_\La$. Evidently $H_\La$ is uniformly gapped and non-negative. So we prove that:
\begin{equation*}
\begin{split}
\ker(H_\La) = \text{span}( \psi_X : X \subset [1,p] )
\end{split}
\end{equation*}
where we define: 
\begin{equation*}
\begin{split}
\phi_\La =  f_{n_1}\wedge \cdots f_{n_\La}, ~~~ \zeta_X = \bigwedge \set{ h_{i_k}: i_k \in X \subset [1,p]},~~~\psi_X =  \phi_\La \wedge \zeta_X
\end{split}
\end{equation*}
By calculation, $\psi_X \in \ker(H_\La)$ for any $X \subset [1,p]$. But each term of the interaction $H_\La$ is a projection, the complement projection of some $a^*(f_q)a(f_q)$. So $H_\La \psi =0$ implies $\psi \in \text{ran}(a^*(f_{i}) a(f_{i}))$ for each $i=n_1,\ldots, n_\La$. 
\end{proof}

Next, we show that the number of auxiliary orthonormal basis vectors $h_i$ needed to complete $\mc{X}_\La$ and $\mc{Y}_\La$ to a basis of $\ell^2(\La)$ is uniformly bounded in $\La$, and that each $h_i$ has support contained towards the edge of $\La$. 

\begin{lemma}
Suppose $\diam{\La}>N_0$. Let $\mc{Z}(\La) = \set{ h_1^{(\La)}, \ldots, h_n^{(\La)}}$, $n=n(\La)$, be a basis for the complement of $span( \mc{X}_\La \cup \mc{Y}_\La)$ in $\ell^2(\La)$. Then: 
\begin{enumerate}
\item For each $i\in [1,n]$,  \emph{$\text{supp}(h_i^{(\La)}) \subset \La \setminus \text{Int}_{3R}(\La) $}
\item $ |\mc{Z}(\La)| \leq 6 R $
\end{enumerate} 
\end{lemma}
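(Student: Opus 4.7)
The plan is to reduce both claims to a single locality observation: the Kronecker vector $\delta_x$ lies in $V_\La := \text{span}(\mc{X}_\La\cup\mc{Y}_\La)$ whenever $x$ sits in the $2R$-interior of $\La$. Once this is in hand, $V_\La^\perp \cap \ell^2(\La)$ is automatically supported in the boundary strip $\La\setminus \Int_{2R}(\La)$, which is a subset of $\La\setminus \Int_{3R}(\La)$, and a dimension count finishes the job.

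To prove the locality claim, fix $x\in \Int_{2R}(\La) = [a+2R, b-2R]$ and expand $\delta_x$ in the ambient orthonormal basis $\mc{X}\cup\mc{Y}$ of $\ell^2(\zz)$:
\[
\delta_x = \sum_{i\in\mc{B}} \langle f_i, \delta_x\rangle f_i + \sum_{j\in\mc{B}} \langle g_j, \delta_x\rangle g_j.
\]
A summand with $\langle f_i,\delta_x\rangle\neq 0$ forces $x\in \supp{f_i}\subset b(x_i,R)$, hence $x_i\in b(x,R)$, and therefore $\supp{f_i}\subset b(x_i,R)\subset b(x,2R)\subset \La$, which places $f_i$ in $\mc{X}_\La$. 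The identical argument handles the $g_j$ terms. Thus every nonzero term in the expansion already lies in $\mc{X}_\La\cup\mc{Y}_\La$, so $\delta_x\in V_\La$.

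Consequently $\ell^2(\Int_{2R}(\La))\subset V_\La$, and taking orthogonal complements inside $\ell^2(\La)$ gives
\[
V_\La^\perp \cap \ell^2(\La)\subset \ell^2(\La\setminus \Int_{2R}(\La))\subset \ell^2(\La\setminus \Int_{3R}(\La)).
\]
Statement (1) is then exactly the inclusion of supports for every element of $\mc{Z}(\La)$, and statement (2) follows from the bound $|\mc{Z}(\La)| = \dim V_\La^\perp \leq |\La\setminus \Int_{3R}(\La)| = 6R$.

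There is no real obstacle here; the lemma is essentially a tidy accounting of where the basis $\mc{X}\cup\mc{Y}$ fails to restrict to an orthonormal basis of $\ell^2(\La)$. The only point requiring attention is the radius bookkeeping: moving from a site $x$ to the supports of the basis vectors containing $x$ costs $2R$ (one $R$ to reach the center, another $R$ for the spread of the support), which is comfortably absorbed by the $3R$ buffer the statement allows.
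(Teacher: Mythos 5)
Your proof is correct and follows essentially the same route as the paper's: you show that vectors supported sufficiently deep in the interior of $\La$ lie in $\mathrm{span}(\mc{X}_\La\cup\mc{Y}_\La)$ (you check this on the delta functions over $\Int_{2R}(\La)$, the paper checks it for arbitrary $f$ supported in $\Int_{3R}(\La)$ --- equivalent by linearity), and then both arguments finish by passing to orthogonal complements and counting $|\La\setminus\Int_{3R}(\La)|=6R$. Your radius bookkeeping ($2R$ suffices, comfortably within the $3R$ buffer) is if anything slightly sharper than the paper's.
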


\begin{proof}
Let $(\xi_k)$ denote the orthonormal basis from $\mc{X}\cup \mc{Y}$. Suppose $\text{supp}(f) \subset \text{Int}_{3R}(\La)$. Then $x_i \not \in \La$ implies $\mathrm{\text{supp}}(f_i) \cap \text{supp}(f) = \emptyset$, that is, $\ip{f_i,f}=0$ (resp. $y_j$ and $\ip{g_j,f}=0$). Hence: 
\begin{equation*}
\begin{split}
f = \sum \ip{\xi_k,f} \xi_k = \sum _{ \xi \in \mc{X}_\La \cup \mc{Y}_\La} \ip{\xi,f}\xi 
\end{split}
\end{equation*} 
Hence $f \in \text{span}( \mc{X}_\La \cup \mc{Y}_\La)$. Now, a basis of the orthogonal complement of $\ell^2( \text{Int}_{3R}(\La))$ in $\ell^2(\La)$ is necessarily supported on $\La \setminus \text{Int}_{3R}(\La)$, proving (1).  Additionally, the dimension of $\ell^2( \La \setminus \text{Int}_{3R}(\La))$ is an upper bound for $|Z(\La)|$, which proves (2).  
\end{proof}

This lemma has an immediate corollary: 

\begin{corollary}\label{cor:interior}
Let $\mf{A}(\mc{W})$ denote the $C^*$-subalgebra of $\mf{A}_\zz^f$ generated by the operators $a^*(f),a(f)$ such that $f\in \mc{W} \subset \ell^2(W)$. Then for all intervals $\La$ with diameter larger than $6R$: 
\begin{equation*}
\begin{split}
\mf{A}_{\emph{Int}_{3R}(\La)} \subset \mf{A}(\mc{X}_\La\cup  \mc{Y}_\La)
\end{split}
\end{equation*}  
\end{corollary}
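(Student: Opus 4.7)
The plan is to observe that this corollary is an immediate algebraic reinterpretation of the preceding lemma. In that proof one essentially established that any $f \in \ell^2(\La)$ with $\text{supp}(f) \subset \Int_{3R}(\La)$ lies in $\text{span}(\mc{X}_\La \cup \mc{Y}_\La)$: the disjoint-support hypothesis (ii) on $\mc{X} \cup \mc{Y}$ forces $\ip{f_i, f} = 0$ whenever $x_i \notin \La$, and likewise $\ip{g_j, f} = 0$ whenever $y_j \notin \La$, so the expansion of $f$ in the global orthonormal basis $\mc{X} \cup \mc{Y}$ involves only indices whose basis elements are supported in $\La$.

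Specializing this expansion to the coordinate vector $\delta_x$ for $x \in \Int_{3R}(\La)$ yields
\begin{equation*}
\delta_x = \sum_{f_i \in \mc{X}_\La} \ip{f_i, \delta_x}\, f_i + \sum_{g_j \in \mc{Y}_\La} \ip{g_j, \delta_x}\, g_j,
\end{equation*}
a finite sum, since $\mc{X}_\La$ and $\mc{Y}_\La$ are finite by the disjoint bounded-support condition. By (anti)linearity of $f \mapsto a(f)$ and $f \mapsto a^*(f)$, each of $a(x) = a(\delta_x)$ and $a^*(x) = a^*(\delta_x)$ is then a finite linear combination of the generators of $\mf{A}(\mc{X}_\La \cup \mc{Y}_\La)$, and therefore lies in that $C^*$-subalgebra.

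Since $\mf{A}_{\Int_{3R}(\La)}$ is by definition the $C^*$-algebra generated inside $\mf{A}_\zz^f$ by $\set{a(x), a^*(x) : x \in \Int_{3R}(\La)}$, and we have just shown every such generator belongs to $\mf{A}(\mc{X}_\La \cup \mc{Y}_\La)$, closure under products, linear combinations, and norm limits yields the stated inclusion. I do not anticipate any real obstacle here: the substance is already contained in the preceding lemma, and the corollary merely transports its $\ell^2$-basis expansion to the level of creation and annihilation operators via the linearity of $f \mapsto a^*(f)$.
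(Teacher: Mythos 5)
Your proof is correct and is exactly the argument the paper intends: the paper simply declares the corollary ``immediate'' from the preceding lemma, and your expansion of $\delta_x$ (for $x\in\Int_{3R}(\La)$) in the basis $\mc{X}_\La\cup\mc{Y}_\La$ together with the (anti)linearity of $f\mapsto a(f)$, $f\mapsto a^*(f)$ supplies precisely the omitted details.
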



To conclude this section, we prove that the interaction defined in (\ref{eq:orbitals}) satisfies $\zz_2$-LTQO. Denote $D = \max\set{ N_0, 3R}$. Recall that if $n\geq D$ then $H_{b_\La(x,n)}$ is non-negative and frustration free with kernel indexed by $\mc{Z}(b_\La(x,n))$. 

\medskip 

Define the step-function $\Omega:[0,\infty) \to [0,\infty)$ by: 
\begin{equation*}
\begin{split}
\Omega(x) = \bigg{ \{ } \begin{array}{l l}0 & \text{ if } x \geq D \\ 2 & \text{ otherwise} \end{array}
\end{split}
\end{equation*}

\begin{proposition}\label{prop:orbitals-LTQO}
Suppose $\diam{\La}>2D$, and let $x\in \La$, and $(n,k)\in \nn^2$ be such that $0\leq k \leq r_x$, $k \leq n \leq R_x$. Let $P_n$ denote the projection onto $H_{b_\La(x,n)}$. Then for all $A \in \mf{A}_{b_\La(x,k)}^+$: 
\begin{equation*}
\begin{split}
\norm{P_n (A-\omega_\La(A))P_n} \leq \Omega(z_x(n)-k) \norm{A}
\end{split}
\end{equation*}
\end{proposition}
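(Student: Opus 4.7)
The plan is to split according to whether $z_x(n)-k\geq D$ or not. First, when $z_x(n)-k<D$, the step function gives $\Omega(z_x(n)-k)=2$; since $P_n$ is an orthogonal projection and $\omega_\La$ is a state, $\norm{P_n(A-\omega_\La(A))P_n}\leq \norm{A-\omega_\La(A)}\leq 2\norm{A}$, which is the desired bound.

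Now assume $z_x(n)-k\geq D\geq 3R$; I must show $P_n(A-\omega_\La(A))P_n=0$. First I would verify the containment $b_\La(x,k)\subset \Int_{3R}(b_\La(x,n))$ by a short case analysis on whether $n\leq r_x$ or $r_x<n\leq R_x$: in both cases $z_x(n)-k\geq 3R$ together with $k\leq r_x\leq R_x$ places $b_\La(x,k)$ at distance at least $3R$ from both endpoints of $b_\La(x,n)$. Since $\diam{b_\La(x,n)}>6R$, Corollary \ref{cor:interior} then gives $A\in \mf{A}(\mc{X}_{b_\La(x,n)}\cup \mc{Y}_{b_\La(x,n)})$.

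Next I would analyze how $A$ acts on the ground states of $H_{b_\La(x,n)}$, which are spanned by vectors of the form $\phi_n\wedge\zeta_Y$ where $\phi_n$ is the fixed Slater determinant of all modes in $\mc{X}_{b_\La(x,n)}$ and $Y$ ranges over subsets of the completion basis $\mc{Z}(b_\La(x,n))$. The key observation is that $A$ is even and built from creation/annihilation operators of modes in $V_1=\mathrm{span}(\mc{X}_{b_\La(x,n)}\cup \mc{Y}_{b_\La(x,n)})$, which is orthogonal to every vector in $\mc{Z}(b_\La(x,n))$. For an even monomial in $V_1$-operators with $2m$ factors, each factor anticommutes with $a^\#(h_k)$, so the net sign produced by moving $A$ past $\zeta_Y$ is $(-1)^{2m|Y|}=1$. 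Hence $A(\phi_n\wedge\zeta_Y)=(A\phi_n)\wedge\zeta_Y$, and projecting back onto the ground state space yields $P_nA(\phi_n\wedge\zeta_Y)=\langle\phi_n,A\phi_n\rangle(\phi_n\wedge\zeta_Y)$, so $P_nAP_n=\langle\phi_n,A\phi_n\rangle P_n$.

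By the identical argument applied at scale $\La$---writing $\mc{X}_\La=\mc{X}_{b_\La(x,n)}\sqcup (\mc{X}_\La\setminus \mc{X}_{b_\La(x,n)})$ and $\phi_\La=\phi_n\wedge\phi_{\mathrm{extra}}$---one obtains $\omega_\La(A)=\langle\phi_\La,A\phi_\La\rangle=\langle\phi_n,A\phi_n\rangle$, so $P_n(A-\omega_\La(A))P_n=0$, matching $\Omega(z_x(n)-k)=0$. The main subtlety will be the CAR sign tracking: for orthogonal fermionic modes the operators anticommute, so the factorization $A(\phi\wedge\zeta)=(A\phi)\wedge\zeta$ genuinely relies on the evenness of $A$, i.e.\ on the $\zz_2$-restriction in the statement. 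Without it, moving $A$ past $\zeta$ would pick up a sign $(-1)^{|Y|}$ that fluctuates across the ground state basis and spoils the factorization, which is precisely why this example satisfies only $\zz_2$-LTQO.
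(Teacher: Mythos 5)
Your proof is correct and follows essentially the same route as the paper's: the trivial bound $2\|A\|$ when $z_x(n)-k<D$, and, when $z_x(n)-k\geq D$, the containment $b_\La(x,k)\subset\Int_{3R}(b_\La(x,n))$ together with Corollary \ref{cor:interior} and the explicit Slater-determinant description of the kernels to show that the matrix elements of $A-\omega_\La(A)$ between ground states vanish exactly. Your identity $P_nAP_n=\langle\phi_n,A\phi_n\rangle P_n$ is a slightly cleaner packaging of the paper's appeal to ``the theory of quasi-free states,'' but the substance is the same.
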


\begin{proof}
We handle the two cases of $n$ when $\diam{b_\La(x,n)} \geq N_0$ or $\diam{b_\La(x,n)} < N_0$. Suppose the former. Now, there are two subcases for $k$: either $b_\La(x,k) \not \subset \text{Int}_D(b_\La(x,n))$ or $b_\La(x,k)$ is contained in that interior. 

\medskip 

Suppose $b_\La(x,k) \subset \text{Int}_D(b_\La(x,n))$. Then $z_x(n)-k\geq D$, necessarily. Denote:
\begin{equation*}
\begin{split}
\mc{X}_{b_\La(x,n)} = \mc{X}_n= \set{ f_{i_1}\ldots, f_{i_M} } ~~~ \mc{Z}(b_\La(x,n)) =\mc{Z}(n)= \set{ h_1, \ldots , h_p }
\end{split}
\end{equation*} 
Let $\psi^n_X = f_{i_1} \wedge \cdots f_{i_M} \wedge h_{n_1} \wedge \cdots h_{n_{|X|}}$ be a generic unit norm basis vector of the kernel, indexed by $X \subset \mc{Z}(n)$. A calculation shows:
\begin{equation*}
\begin{split}
\norm{ P_n (A - \omega_\La(A))P_n} \leq 6R \sup _{X \subset \mc{Z}(n)} | \ip{\psi^n_X, A \psi^n_X} - \omega_\La(A)| + 2^{6R} \sup _{X\not = Y} |\ip{\psi_X^n, A\psi_Y^n} |
\end{split}
\end{equation*}
But by the theory of quasi-free states and Corollary 5.3: 
	\begin{equation*}
		\begin{split}
\sup _{X \subset \mc{Z}(n)} | \ip{ \psi_X^n, A \psi_X^n} - \omega_\La(A)| = \sup _{X\not = Y} |\ip{ \psi^n_X, A \psi^n _Y }|=0
		\end{split}
	\end{equation*}
Now suppose $b_\La(x,k)$ is not contained in the $D$-interior of $b_\La(x,n)$. This implies $z_x(n)-k < D$. And by the trivial commutator bound: 
\begin{equation*}
\begin{split}
\norm{P_n(A-\omega_\La(A))P_n} \leq 2 \norm{A} = \Omega(z_x(n)-k)\norm{A}
\end{split}
\end{equation*}
Lastly, suppose $\diam{b_\La(x,n)} < N_0$. Then $n-k \leq n < N_0\leq D$. Hence $z_x(n)-k < D$ as well, and the trivial bound agrees with $\Omega$. Conclude that $H_\La$ satisfies LTQO for $\Omega.$  
\end{proof}

\section{Appendix}

\subsection{$\mc{F}$-functions and decay of interactions} 

In addition to LTQO, a critical assumption for our spectral gap stability argument is rapid decay of the perturbations in $[\Phi]$. We choose to describe this decay through $\mc{F}$-functions, which have several useful properties, one of which is defining an extended norm on the real vector space of interactions. 

\begin{definition}
A function $F: [0,\infty) \to (0,\infty)$ is an $\mc{F}$-\textit{function for $(\zz,|\cdot|)$} if: 
\begin{enumerate}
\item $\norm{F}= \sum_{x\in \zz} F(x) < \infty $ ; 
\item there exists $C_F>0$ such that for all $x,y\in \zz$: 
	\begin{equation*}
		\begin{split}
\sum _{z\in \zz} F(|x-z|) F(|z-y|) \leq C_F F(|x-y|).
		\end{split}
	\end{equation*}
\end{enumerate}
Furthermore, if $\Phi$ is an interaction, then the \textit{$\mc{F}$-norm} of $\Phi$ (with respect to $F$) is defined: 
	\begin{equation}\label{eq:fnorm}
		\begin{split}
\norm{\Phi}_F = \sup \set{ \sum _{\substack{Z \in P_f(\zz)\\x,y\in Z}} \frac{\norm{\Phi(Z)}}{F(|x-y|)} } \in [0,\infty]. 
		\end{split}
	\end{equation} 
\end{definition} 

\begin{example}
Suppose $h:[0,\infty) \to [0,\infty)$ is a monotone increasing, subadditive function and $\kappa>2$. The following function $F$ defines an $\mc{F}$-function:
	\begin{equation}\label{eq:ffunction}
		\begin{split}
F(r) = e^{-h(r)}\frac{L}{(1+cr)^\kappa}, ~L,c>0
		\end{split}
	\end{equation} 
\end{example}
The $\mc{F}$-function in (\ref{eq:ffunction}) and following properties will be used extensively in the proof of spectral gap stability. 

\begin{proposition}\label{prop:FfunctionProperties}
Suppose $\Phi$ is an interaction with finite $\mc{F}$-norm for some $F$. Then: 
	\begin{equation*}
		\begin{split}
& 1. ~ \text{For any collection $Z_1 \subset Z_2 \subset \cdots \subset Z_N$, $\sum _{k=1}^{N} \norm{\Phi(Z_k)} \leq \norm{\Phi}_F F( {\diam{Z_1}})$ }\\ 
& 2. ~ \text{If $\eta$ is a uniformly bounded, finite range interaction, then $\norm{\eta}_F< \infty$ }\\ & ~~~\text{and $\norm{\eta+\Phi}_F < \infty$. }
		\end{split}
	\end{equation*} 
\end{proposition}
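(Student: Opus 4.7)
The proof plan splits cleanly along the two claims, both of which reduce to unpacking the definition of $\norm{\Phi}_F$.

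For claim~1, the plan is to pick a pair of sites $x,y \in Z_1$ witnessing the diameter of $Z_1$, i.e.\ $|x-y| = \diam{Z_1}$. Because the $Z_k$ form a chain $Z_1 \subset Z_2 \subset \cdots \subset Z_N$, every $Z_k$ contains both $x$ and $y$. Assuming the $Z_k$ are distinct (which the strict notation of a chain gives us for free), we then bound the sum by the full sum over sets containing the pair $(x,y)$:
\[
 \sum_{k=1}^{N} \norm{\Phi(Z_k)} \;\leq\; \sum_{\substack{Z \in P_f(\zz) \\ x,y \in Z}} \norm{\Phi(Z)} \;\leq\; \norm{\Phi}_F \, F(|x-y|) \;=\; \norm{\Phi}_F \, F(\diam{Z_1}),
\]
where the middle inequality is just the definition \eqref{eq:fnorm} of $\norm{\Phi}_F$ unwound (the supremum over $x,y$ dominates this particular pair).

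For claim~2, the plan is to exploit finite range and uniform boundedness to reduce everything to a sup over finitely many terms. Let $M$ and $R$ be the uniform bound and range for $\eta$. For fixed sites $x,y \in \zz$ with $|x-y| > R$, every set $Z$ containing both $x$ and $y$ has $\diam Z > R$, so $\eta(Z)=0$ and the corresponding contribution vanishes. For $|x-y| \leq R$, a set $Z$ with $\eta(Z)\neq 0$ and $x,y\in Z$ is a subset of the interval $[\min(x,y)-R,\,\max(x,y)+R]$ of size at most $3R+1$; in particular the number of such $Z$ is bounded by a constant $N_R$ depending only on $R$. Since $F$ from \eqref{eq:ffunction} is manifestly monotone decreasing, $F(|x-y|) \geq F(R) > 0$ for $|x-y|\leq R$, so
\[
 \sum_{\substack{Z \in P_f(\zz) \\ x,y \in Z}} \frac{\norm{\eta(Z)}}{F(|x-y|)} \;\leq\; \frac{N_R \, M}{F(R)}.
\]
Taking the supremum over $x,y$ yields $\norm{\eta}_F \leq N_R M / F(R) < \infty$. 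The bound $\norm{\eta+\Phi}_F < \infty$ is then immediate from the triangle inequality applied termwise inside \eqref{eq:fnorm}, together with the subadditivity $\sup(a+b) \leq \sup a + \sup b$ for nonnegative sums, giving $\norm{\eta+\Phi}_F \leq \norm{\eta}_F + \norm{\Phi}_F$.

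There is no real obstacle in this proof; it is a routine unpacking of definitions. The only conceptual point is claim~1's observation that a \emph{nested} chain automatically shares a common diameter-witnessing pair, which is what converts the chain sum into a subset-sum controlled by $\norm{\Phi}_F$. For claim~2, the only small care needed is noting that $F$ given by \eqref{eq:ffunction} is strictly positive and monotone decreasing, so $F(R)$ provides a harmless lower bound that absorbs all short-range terms.
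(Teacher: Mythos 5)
Your proof is correct and follows essentially the same route as the paper's: for claim 1, fix a diameter-witnessing pair $x,y\in Z_1$ (shared by all $Z_k$ by nestedness) and dominate the chain sum by the sum over all sets containing $x,y$; for claim 2, use finite range to kill the terms with $|x-y|>R$ and uniform boundedness plus the finitely many remaining subsets to bound the supremum, then conclude by the triangle inequality. The only cosmetic difference is that you make explicit the division by $F(R)>0$ (which for a general $\mc{F}$-function follows from positivity alone, no monotonicity needed), a step the paper leaves implicit.
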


\begin{proof} Let $\diam{Z_1}=n$, and choose $x,y\in Z_1$ such that $|x-y|=n$. Then:
	\begin{equation*}
		\begin{split}
\sum_{k=1}^{N} \norm{ \Phi(Z_k)} \leq \sum _{\substack{X\in P_f(\zz)\\ x,y\in X}} \norm{\Phi(X)} \leq \norm{\Phi}_F F(n)
		\end{split}
	\end{equation*}
Now, denote the range of $\eta$ by $R$ and uniform bound by $M$. Suppose $x,y\in \zz$. If $Z\in P_f(\zz)$ contains $x,y$ and $\Phi(Z)\not = 0$, then $Z \subset b(x,R)\cap b(y,R)$. Hence: 
	\begin{equation*}
		\begin{split}
\sum _{\substack{X\in P_f(\zz)\\ x,y\in X} }\norm{\eta(X)} \leq 2^{3R} M 
		\end{split}
	\end{equation*}
Then $\norm{\eta + \Phi}_F < \infty$ by the triangle inequality.  
\end{proof} 

\subsection{Thermodynamic limit of the spectral flow}\label{sec:appendixTL}

There are standard results giving conditions on an interaction $\Phi$ under which the finite-volume dynamics defined by 
$$
\tau_{s,t}^{\Phi,\Lambda}(A) = U(s,t)^* A U(s,t), A\in \cA_\Lambda,  
$$
with 
$$
H_\Lambda(s)=\sum_{X\subset \Lambda} \Phi(X,s),
$$
and
$$
\frac{d}{ds} U(s,t) = i H_\Lambda(s) U(s,t), U(t,t)=\idty, s,t \in I\subset \Rl,
$$
converges to a strongly continuous co-cycle of automorphism, $\tau^\Phi_{s,t}$, of the algebra of quasi-local observables $\cA_\Gamma$, which is defined as the norm completion of the (strictly) local
observables given by
$$
\cA^{\rm loc}_\Gamma = \bigcup_{\Lambda\in\cP_f(\Gamma)} \cA_\Lambda .
$$
One sufficient condition is that $\Phi(\cdot,t)$ is a continuous curve taking values in the space $\cB_F$, where $F$ defines an $F$-norm, $\Vert\cdot\Vert_F$
on interactions as in the previous section of this Appendix \cite{nachtergaele:2006}. For any compact interval $I$, we can define the space $\cB_F(I)$ as the 
set of all such continuous curves and for $\Phi\in\cB_F(I)$ the function
\begin{equation}
\Vert\Phi \|_F(t) = \sup_{x,y \in \Gamma} \frac{1}{F(d(x,y))} \sum_{\stackrel{X \in \mathcal{P}_0( \Gamma):}{x,y \in X}} \| \Phi(X, \cdot) \|(t) 
\end{equation}
is continuous and bounded. Strong convergence of the automorphisms here means that 
$$
\lim_{\Lambda\uparrow\Gamma} \Vert \tau_{s,t}^{\Phi,\Lambda}(A) - \tau_{s,t}^\Phi(A)\Vert =0, \mbox{ for all } A\in \cA^{\rm loc}_\Gamma.
$$
The limit is taken over any sequence of $\Lambda_n\in\cP(\Gamma)$ increasing to $\Gamma$ and the limiting dynamics is independent of the choice of sequence.

One can also show that the dynamics depends continuously on the interaction $\Phi$ in the following sense \cite{QLBI}:
\begin{equation}\label{cont_bd_iv} 
\| \tau_{t,s}^\Phi(A) - \tau^\Psi_{t,s}(A) \| \leq \frac{2 \| A \|}{C_F} \Vert F\Vert |X| e^{2\min(I_{t,s}(\Phi),I_{t,s}(\Psi))} I_{t,s}( \Phi - \Psi).  
\end{equation} 
holds for all $A \in \mathcal{A}_X$ and $s,t \in I$, and where
where, for $\Phi\in\cB_F(I)$, and $s,t\in I$, the quantity $I_{t,s}(\Phi)$  defined by
\be
I_{t,s}(\Phi) = C_F\int_{\min(t,s)}^{\max(t,s)} \| \Phi \|_F(r) \, dr,
\label{ItsPhi}\ee

It is often important to include in the definition of the finite-volume Hamiltonians $H_\Lambda$ terms that correspond to a particular boundary condition. 
Such terms affect the ground states and equilibrium states of the system, including in the thermodynamic limit but, in general, do not affect the infinite-volume dynamics. 
In order to express this freedom in the interactions defining the finite-volume dynamics that lead to the same thermodynamic limit, we use another, weaker,
notion of convergence of interactions interaction $\Phi$ introduced in \cite{QLBI}, where it is called {\em  local convergence in $F$-norm}.

\begin{definition} \label{def:lcfnorm} Let $(\Gamma, d, \, \mathcal{A}_{\Gamma})$ be a quantum lattice system, $F$ an $F$-function for $(\Gamma, d)$, and $I \subset \mathbb{R}$ an interval. We say that a sequence of interactions $\{ \Phi_n \}_{n \geq 1}$ \emph{converges locally in $F$-norm} to $\Phi$  such that:
\begin{enumerate}
	\item[(i)]$\Phi_n \in \mathcal{B}_F(I)$ for all $n \geq 1$,
	\item[(ii)] $\Phi \in \mathcal{B}_F(I)$,
	\item[(iii)] For any $\Lambda \in \mathcal{P}_0( \Gamma)$ and each $[a,b] \subset I$, one has that
	\begin{equation}
	\lim_{n \to \infty} \int_a^b \| ( \Phi_n - \Phi) \restriction_{\Lambda} \|_F(t) \, dt  = 0 \, . 
	\end{equation}
\end{enumerate}
\end{definition}

In this appendix, we want to apply this notion to the spectral flow generated by perturbations of the form \eq{eq:deform} and its thermodynamic limit.


The spectral flow $\alpha_\epsilon^{\Lambda,\partial}$ for the curve of Hamiltonians $H^\partial_\Lambda(\epsilon), \epsilon\in[0,\epsilon_0)$,
defined in \eq{curveham}, also depends on a parameter $\gamma>0$.
This parameter is assumed to be a lower bound for the gap of interest in the spectrum of $H^\partial_\Lambda(\epsilon)$ in the stability argument, but this assumption is not needed for
the construction of $\alpha_\epsilon^{\Lambda,\partial}$. The automorphisms $\alpha_\epsilon^{\Lambda,\partial}$ are generated by the self-adjoint operators $D_\Lambda^\partial(\epsilon)$,
defined by
\be
D_\Lambda^\partial(\epsilon) = \cK^{\Lambda,\partial}_\epsilon\left( \sum_{X\subset\Lambda_{D_1}}\Phi(X) + s \sum_{X\subset(\Lambda\setminus\Lambda_{D_2})}\Phi(X)\right),
\ee
where the map $\cK^{\Lambda,\partial}_\epsilon:\cA_\Lambda\to\cA_\Lambda$, is given by
\be
\cK^{\Lambda,\partial}_\epsilon(A) = \int_{-\infty}^\infty \tau^{H^\partial_\Lambda(\epsilon)}(A)W_\gamma(t) dt.
\ee
Note that $\cK^{\Lambda,\partial}_\epsilon$ is defined as a linear map but it depends itself on $H^\partial_\Lambda(\epsilon)$ and therefore $D_\Lambda^\partial(\epsilon)$
depends non-linearly on the perturbation. In \cite{QLBI}[Section 5.4] a detailed study of transformations of the form $\cK^{\Lambda,\partial}_\epsilon$ is performed. The following 
proposition follows directly from applying the more general results in that work to the situation here.

\begin{proposition}[\cite{QLBI}]\label{prop:loc_conv_in_fnorm}
There exists an $F$-function $\tilde F$ of the form \eq{eq:ffunction} and interactions $\Psi^{\Lambda,\partial}\in\cB_{\tilde F} ([0,\epsilon_0])$ such that 
$$
D_\Lambda^\partial(\epsilon) =  \sum_{X\subset \Lambda} \Psi^{\Lambda,\partial}(X,\epsilon).
$$ 
Furthermore, there exists an interaction $\Psi\in\cB_{\tilde F} ([0,\epsilon_0])$ such that for any sequences $\Lambda_n=[a_n,b_n]\subset \Ir$, $\partial_n=(D_{1,n} , D_{2,n},s_n)$ , 
the interactions $\Psi^{\Lambda_n,\partial_n}$ have uniformly bounded $\tilde F$-norm and converge locally in 
$\tilde F$-norm to $\Psi$.
\end{proposition}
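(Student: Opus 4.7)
The strategy is to reduce the claim to the general machinery developed in \cite{QLBI} for transformations of the form $\cK^{\Lambda,\partial}_\epsilon$, and then verify that the perturbations appearing in \eq{curveham} fit that general framework uniformly in $\Lambda$ and $\partial$.

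First I would make the interaction structure explicit. Writing $V^{\Lambda,\partial} = \sum_{X\subset\Lambda_{D_1}}\Phi(X) + s\sum_{X\subset(\Lambda\setminus\Lambda_{D_2})}\Phi^b(X)$, the generator equals $D^\partial_\Lambda(\epsilon) = \cK^{\Lambda,\partial}_\epsilon(V^{\Lambda,\partial})$. For each term $\Phi(X)$ (or $\Phi^b(X)$) one applies a telescoping decomposition by partial traces around a chosen anchor $x \in X$, as in the proof of Theorem \ref{thm:decomposition}: setting $\Delta_{b_\Lambda(x,n)} = \theta_{b_\Lambda(x,n)} - \theta_{b_\Lambda(x,n-1)}$ one gets
\begin{equation*}
\cK^{\Lambda,\partial}_\epsilon(\Phi(X)) = \sum_{n \geq 0} \Delta_{b_\Lambda(x,n)}\bigl(\cK^{\Lambda,\partial}_\epsilon(\Phi(X))\bigr),
\end{equation*}
each term of which is supported in $b_\Lambda(x,n)$. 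Grouping contributions by support gives the interaction $\Psi^{\Lambda,\partial}(Y,\epsilon)$, where $\Psi^{\Lambda,\partial}(Y,\epsilon)\in\cA_Y$ vanishes unless $Y$ is a ball in $\Lambda$. By construction $D^\partial_\Lambda(\epsilon) = \sum_{Y\subset\Lambda}\Psi^{\Lambda,\partial}(Y,\epsilon)$.

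Next I would bound the $\tilde F$-norm. The partial-trace differences $\|\Delta_{b_\Lambda(x,n)}(A)\|$ are controlled by commutators $\|[\tau_t^{H^\partial_\Lambda(\epsilon)}(A),B]\|$ for $B$ localized outside $b_\Lambda(x,n)$, via the standard identity expressing partial trace as an average of unitary conjugations (c.f.\ Proposition \ref{lem:support}). The Lieb-Robinson bound for $\tau_t^{H^\partial_\Lambda(\epsilon)}$ holds uniformly in $\Lambda,\partial$ because $\eta$ is finite-range and the perturbation part of $H^\partial_\Lambda(\epsilon)$ satisfies $\|\cdot\|_F$-bounds independent of $\Lambda,\partial$ (since $\Phi,\Phi^b\in\cB_F$ and $|s|\leq 1$). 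Combined with the super-polynomial decay of $W_\gamma$ (see Lemma 2.3 of \cite{bachmann:2012}), the $t$-integral converts the Lieb-Robinson cone into a weight $\tilde F$ of the form \eq{eq:ffunction}. The resulting bound on $\|\Psi^{\Lambda,\partial}\|_{\tilde F}(\epsilon)$ depends only on $\|\eta\|_F$, $\|\Phi\|_F$, $\|\Phi^b\|_F$, $\gamma$, and $\epsilon_0$, so one obtains a bound uniform in $\Lambda,\partial$ and continuity in $\epsilon$.

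For the local convergence claim, fix $Y\in\cP_f(\Ir)$ and consider $\Psi^{\Lambda_n,\partial_n}(Y,\epsilon)$ for $Y\subset\Lambda_n$. Because each such term is a finite sum of partial-trace differences of $\cK^{\Lambda_n,\partial_n}_\epsilon(\Phi(X))$ with $X\subset Y$, it suffices to show $\cK^{\Lambda_n,\partial_n}_\epsilon(\Phi(X)) \to \cK^{\Ir}_\epsilon(\Phi(X))$ in norm. This follows from the convergence of the Heisenberg dynamics: by \eq{cont_bd_iv} applied with interaction differences supported outside $\Lambda_{D_{j,n}}$, together with the $L^1$-decay of $W_\gamma$ providing the required dominated-convergence hypothesis on the time integral. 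The independence of the limit from $\partial$ follows because the boundary contribution to $V^{\Lambda_n,\partial_n}$ has support retreating to infinity, so its Heisenberg image acts negligibly on any fixed local observable after integration against $W_\gamma$. Combining this pointwise (in $Y$) convergence with the uniform $\tilde F$-bound yields local convergence in $\tilde F$-norm as in Definition \ref{def:lcfnorm}.

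The main obstacle is bookkeeping: one must ensure that the telescoping decomposition distributes the weight-function decay correctly onto each ball so that the $\tilde F$-norm estimate closes, and that the telescoping regroupings commute with the uniformity in $\partial$. Both issues are resolved by the general framework of \cite{QLBI}[Section 5.4], which is designed precisely to handle transformations $\cK^{\Lambda,\partial}_\epsilon$ built from a Heisenberg average against an $L^1$ weight with rapidly decaying tail, so the present proposition is obtained by invoking those results with the specific weight $W_\gamma$ and the specific interactions $\eta$, $\Phi$, $\Phi^b$.
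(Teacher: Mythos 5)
Your proposal is correct and follows essentially the same route as the paper: the paper gives no proof of this proposition beyond the remark that it ``follows directly from applying the more general results'' of \cite{QLBI}[Section 5.4] on transformations of the form $\cK^{\Lambda,\partial}_\epsilon$, and your sketch (telescoping partial-trace decomposition onto balls, uniform Lieb-Robinson input, convergence of the finite-volume dynamics with boundary terms retreating to infinity) is exactly the reduction to that machinery. The only caveat is the one already implicit in the paper's own phrasing: ``any sequences $\partial_n$'' must be read with the constraint $\min(|a_n|,b_n)-\max(D_{1,n},D_{2,n})\to\infty$ stated in Section 4.3, which your argument correctly uses when you assert that the boundary contribution recedes to infinity.
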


As a consequence, we can apply the following theorem from \cite{QLBI} to the sequences of interactions $\Psi^{\Lambda_n,\partial_n}$.

\begin{theorem}[\cite{QLBI}[Theorem 3.8] \label{thm:exist_iv_dyn}
Let $(\Phi_n )_{n \geq 1}$ be a sequence of time-dependent interactions on $\Gamma$ with
$\Phi_n$ converging locally in the $F$-norm to $\Phi$ with respect to $F$. Suppose that for every $[a,b] \subset I$,
\be \label{sup_bd}
\sup_{n \geq 1} \int_a^b \| \Phi_n \|_F(t) \, dt <  \infty \, .
\ee
Then for any $X \in \mathcal{P}_0( \Gamma)$,  
\be
\lim_{n \to \infty}\Vert \tau_{t,s}^{\Phi_n}(A) -  \tau^{\Phi}_{t,s}(A)\Vert =0
\ee
for all $A \in \mathcal{A}_X$ and each $s,t \in I$. Moreover, the convergence is uniform for $s,t$ in compact intervals.
\end{theorem}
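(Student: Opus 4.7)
The plan is to prove the strong convergence $\tau^{\Phi_n}_{t,s}(A) \to \tau^{\Phi}_{t,s}(A)$ for $A\in\cA_X$ by a standard three-term triangle inequality in which the infinite-volume dynamics are approximated by finite-volume dynamics on a large but fixed box, and the finite-volume continuity bound \eq{cont_bd_iv} is applied only after the box has been fixed. Concretely, pick $\Lambda\in\cP_f(\Gamma)$ with $X\subset\Lambda$ and write
\begin{equation*}
\tau^{\Phi_n}_{t,s}(A)-\tau^{\Phi}_{t,s}(A) = \bigl[\tau^{\Phi_n}_{t,s}(A)-\tau^{\Phi_n,\Lambda}_{t,s}(A)\bigr] + \bigl[\tau^{\Phi_n,\Lambda}_{t,s}(A)-\tau^{\Phi,\Lambda}_{t,s}(A)\bigr] + \bigl[\tau^{\Phi,\Lambda}_{t,s}(A)-\tau^{\Phi}_{t,s}(A)\bigr].
\end{equation*}
The first and third terms are the ``thermodynamic-limit'' errors, and the middle term is the ``perturbation'' error comparing two finite-volume dynamics on the same box.

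For the middle term I would apply the finite-volume version of \eq{cont_bd_iv} to $\Phi_n\restriction_\Lambda$ and $\Phi\restriction_\Lambda$, obtaining
\begin{equation*}
\bigl\Vert \tau^{\Phi_n,\Lambda}_{t,s}(A)-\tau^{\Phi,\Lambda}_{t,s}(A)\bigr\Vert \leq \frac{2\Vert A\Vert}{C_F}\Vert F\Vert|X|\,\exp\!\bigl(2\min(I_{t,s}(\Phi_n),I_{t,s}(\Phi))\bigr)\, I_{t,s}\bigl((\Phi_n-\Phi)\restriction_\Lambda\bigr).
\end{equation*}
The exponential prefactor is bounded uniformly in $n$ thanks to the hypothesis $\sup_n \int_a^b \Vert\Phi_n\Vert_F(t)\,dt<\infty$, and the factor $I_{t,s}((\Phi_n-\Phi)\restriction_\Lambda)$ tends to $0$ as $n\to\infty$ for the fixed $\Lambda$ by Definition \ref{def:lcfnorm}(iii). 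Hence, for each fixed $\Lambda$, the middle term vanishes as $n\to\infty$ uniformly for $s,t$ in a compact subinterval of $I$.

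For the first and third terms I would invoke the standard thermodynamic-limit estimate for interactions in $\cB_F(I)$: for any interaction $\Psi\in\cB_F(I)$ with $\int_a^b\Vert\Psi\Vert_F(t)dt\leq M$, there is a quantitative Lieb--Robinson-type bound of the form
\begin{equation*}
\bigl\Vert \tau^{\Psi}_{t,s}(A)-\tau^{\Psi,\Lambda}_{t,s}(A)\bigr\Vert \le C_M\,\Vert A\Vert \sum_{x\in X}\sum_{y\notin\Lambda} F(d(x,y)),
\end{equation*}
uniform in $\Psi$ subject to $\int\Vert\Psi\Vert_F\leq M$. Applying this with $\Psi=\Phi_n$ and $\Psi=\Phi$, using the uniform bound \eq{sup_bd} and the fact that $\Phi\in\cB_F(I)$, both boundary-error terms can be made smaller than any $\epsilon/3$, uniformly in $n$, by choosing $\Lambda$ large enough that the $F$-function tail $\sum_{x\in X,y\notin\Lambda} F(d(x,y))$ is small.

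Combining the three pieces: given $\epsilon>0$, first choose $\Lambda$ large enough that the first and third terms are each at most $\epsilon/3$ uniformly in $n$; then, with this $\Lambda$ fixed, use local convergence in $F$-norm to choose $N$ so that for $n\geq N$ the middle term is at most $\epsilon/3$ uniformly for $(s,t)$ in the compact subinterval. This yields the claimed convergence, uniform on compacts. The main obstacle is ensuring the uniformity in $n$ of the thermodynamic-limit error, which is precisely where hypothesis \eq{sup_bd} is essential: without a uniform $F$-norm bound on $\Phi_n$, the Lieb--Robinson velocity and the constant $C_M$ could blow up with $n$, and the tail bound for the first term would fail to be uniform. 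Once that uniformity is secured, the rest is the standard $\epsilon/3$ argument.
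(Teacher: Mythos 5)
Your argument is correct and is the standard $\epsilon/3$ proof of this result: approximate both infinite-volume dynamics by finite-volume dynamics on a fixed large box (with the thermodynamic-limit error controlled uniformly in $n$ via the Lieb--Robinson tail estimate, which depends on the interaction only through $I_{t,s}(\Phi_n)$ and hence is uniform by hypothesis \eqref{sup_bd}), and then compare the two finite-volume evolutions using the continuity bound \eqref{cont_bd_iv}, whose right-hand side vanishes by local convergence in $F$-norm; the order of quantifiers (box first, then $n$) is handled correctly. Note that the present paper does not prove this theorem --- it is imported verbatim from \cite{QLBI} --- so there is no in-paper proof to compare against, but your proposal is precisely the expected argument.
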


Now, consider a sequence $\Lambda_n=[a_n,b_n]\subset \Ir$, $\partial_n=(D_{1,n} , D_{2,n},s_n)$.
As the result of applying Proposition \ref{prop:loc_conv_in_fnorm} and Theorem \ref{thm:exist_iv_dyn}, we obtain the strong convergence of the finite-volume 
spectral flow automorphism generated by $\Psi^{\Lambda_n,\partial_n}$ to one and the same spectral flow for the infinite chain: for $\epsilon \in [0,1]$,
\be\label{sameTL}
\lim_{n\to\infty} \alpha_\epsilon^{\Lambda_n,\partial_n}(A) = \alpha_{\epsilon}(A), \quad \mbox{ for all } A\in\cA^{\rm loc}_\Gamma.
\ee

\section*{Acknowledgements}

Based upon work supported by the National Science Foundation under Grants and DMS-1207995 (AM), DMS-1515850 (AM and BN) and DMS-1813149 (BN).
We would like to thank the referee for helpful comments and suggestions.

\providecommand{\bysame}{\leavevmode\hbox to3em{\hrulefill}\thinspace}
\providecommand{\MR}{\relax\ifhmode\unskip\space\fi MR }
\providecommand{\MRhref}[2]{%
  \href{http://www.ams.org/mathscinet-getitem?mr=#1}{#2}
}
\providecommand{\href}[2]{#2}

\end{document}